\newcommand{\rset}{\mathbb{R}}
\newcommand{\nset}{\mathbb{N}}
\newcommand{\T}{^\top}
\newcommand\SN[1]{{\color{orange} #1}}
\newcommand\Rew[1]{{\color{black} #1}}
\newcommand\RRew[1]{{\color{black} #1}}
\theoremstyle{plain}
\newtheorem{proposition}{Proposition}
\newtheorem{corollary}{Corollary}
\newtheorem{lemma}{Lemma}
\newtheorem{assumption}{Assumption}
\newtheorem{example}{Example}
\begin{document}

\title{Learning Paths to Multi-Sector Equilibrium:\\
Belief Dynamics Under Uncertain Returns to Scale}

\author{Stefano Nasini\thanks{IESEG School of Management, Univ. Lille, CNRS, UMR 9221 - LEM- Lille Economie Management, F-59000 Lille, France, \texttt{s.nasini@ieseg.fr}} 
\and 
Rabia Nessah \thanks{IESEG School of Management, Univ. Lille, CNRS, UMR 9221 - LEM- Lille Economie Management, F-59000 Lille, France, \texttt{r.nessah@ieseg.fr}}
\and
Bertrand Wigniolle\thanks{Paris School of Economics and Universit\'e Paris 1 Pantheon-Sorbonne, 48 Boulevard Jourdan, 75014 Paris, France, \texttt{bertrand.wigniolle@univ-paris1.fr}}}

\maketitle

\begin{abstract}
This paper explores the dynamics of learning in a multi-sector general equilibrium model where firms operate under incomplete information about their production returns to scale. Firms iteratively update their beliefs using maximum a-posteriori estimation, derived from observed production outcomes, to refine their knowledge of their returns to scale. The implications of these learning dynamics for market equilibrium and the conditions under which firms can effectively learn their true returns to scale are the key objects of this study. Our results shed light on how idiosyncratic shocks influence the learning process and demonstrate that input decisions encode all pertinent information for belief updates. Additionally, we show that a long-memory  (path-dependent) learning which keeps track of all past estimations ends up having a worse performance than a short-memory (path-independent) approach. 
\end{abstract}
\textbf{Keywords}: Learning dynamics, Multi-sector general equilibrium model, Incomplete information, Returns to scale, Maximum a-posteriori estimation

\medskip

\textbf{JEL codes}:  D5, D51, D83, C11, C13

\bigskip


\newpage


\section{Introduction}\label{sec:introduction}


An important challenge in economics is to understand how agents interpret their environment and learn the true state of nature, with learning viewed as the process by which they infer the suitability of their actions to that environment \citep{dasgupta1988learning, kaelbling1996reinforcement, penczynski2017nature, mossel2020social}. As an example, consumers and firms possess incomplete knowledge of some of their payoff-relevant parameters, yet they are sometimes able to observe the outcomes of their actions and learn from them. \cite{10.2307/2295952} conjectured about the importance of such experience-based learning for improving economic outcomes and emphasized the need to incorporate learning mechanisms into the main corpus of economic theory. 

Since then, different modeling approaches have integrated knowledge acquisition into economic theory and studied the implications of learning in multiple economic contexts. A first domain is concerned with expectations formed by agents about some relevant variables when the rational expectation assumption is relaxed. \cite{evans2009learning} provide a general survey of this approach by studying adaptive learning schemes based on econometric techniques, and consider the convergence to a rational expectation equilibrium. A second branch of literature considers settings where imperfect information is due to private information held by each agent. Observation of other agents' behaviors and market equilibrium prices may reveal information, but also lead to herding behaviors \citep{banerjee1992simple, bikhchandani1992theory, duan2009informational}.

In this paper, we take a different approach from these two strands by studying a dynamic multi-sector general equilibrium economy in which firms learn a production-technology parameter governing their own returns to scale.
They form rational expectations in the sense that they use all available information to anticipate relevant variables, such as the market price law or the equilibrium wage. But, as they have a limited knowledge of their technology, they can only make imperfect estimates of these variables. At each period, they use the information conveyed by their additional observation to re-estimate their scale parameter. In this regard, our model is related to recent studies of learning processes in which agents try to estimate some profit-relevant parameters. \cite{collin2016parameter}, \cite{bottazzi2023market}, and \cite{wu2024reinforcement} are concerned with financial markets, and consider the learning process of macroeconomic aggregates (such as GDP growth rate and its variance), having a direct bearing on portfolio decisions. \cite{babiak2024parameter} are closer to our model, as they consider a production economy where agents are unaware of the technology shock parameters. The authors propose a learning rule and explore the impact of learning on macroeconomic and financial aggregates. 

Our focus is on the learning of the returns to scale parameters, with a specific interest in the convergence of the learning dynamics. In our model, firms obtain information exclusively from a private channel, which consists of observing their own production. This is a distinction with respect to \cite{amador2012learning}, who study agents' learning from the actions of others through two channels: a public channel, such as equilibrium market prices, and a private channel, such as local interactions. Our dynamic multi-sector general equilibrium framework is inspired by the work of \cite{horvath2000sectoral}, \cite{gabaix2011granular}, \cite{acemoglu2012network}, and \cite{acemoglu2017microeconomic} and considers a collection of representative firms operating in a noisy environment. The importance of adopting a multi-sector perspective in the analysis of learning paths is motivated by the fact that idiosyncratic variations can delay the law of large numbers in aggregate production and translate into noisy estimates of the returns to scale parameter \citep{horvath2000sectoral, gabaix2011granular, coulson1995sources}.\footnote{For decades, macroeconomic models aggregated idiosyncratic variations, hindering the development of multi-sector dynamic general equilibrium models with endogenous learning \citep{leontief1970environmental}. In economies with $n$ sectors, the diversification argument by \cite{lucas1977understanding} suggested that microeconomic shocks average out, preventing detailed exploration of inter-sectoral networks.Aggregate fluctuations in an economy with independent shocks at its $n$ sectors would have a magnitude proportional to $1/\sqrt{n}$, a negligible effect at high levels of disaggregation.} 

The choice of focusing on uncertainty about returns to scale pivots on a long line of contributions, which has addressed its statistical estimation by parametric and non-parametric approaches \citep{golany1997estimating, basu1997returns, banker2004returns, Ackerberg2015}. The topic is significant not only for its links to externalities and public goods \citep{starrett1977measuring}, but also for its impact on input misallocation \citep{gong2016role}, as the incorrect knowledge of returns to scale can induce a firm to miss the optimal selection of production inputs, with major consequences at the macro level \citep{baqaee2020productivity}. 

Unlike \cite{collin2016parameter}, \cite{bottazzi2023market}, \cite{wu2024reinforcement} and \cite{babiak2024parameter}, our study is statistically grounded in the Bayesian framework for optimal information collection and processing of \cite{el1993bayesian}, who describe the optimization problem encountered by a single infinitely-living Bayesian agent, which alternates between a decision-making stage and a learning stage. Analogously, in the decision-making stage, firms make input decisions based on their current beliefs, which are encoded in their maximum a-posteriori (MAP) estimation of the returns to scale. In the learning stage, firms observe their realized production, and compute the Bayesian posterior of their returns to scale to be used in the next period.\footnote{This approach mirrors a reinforcement learning procedure where the mismatch between a firm's observed production and the one expected for different returns to scale induces a performance metric (likelihood function) quantifying the correctness of a firm's belief at each period. This is in line with the recent work of \cite{beggs2022reference}, in which agents evaluate outcomes relative to reference points (point-wise estimation), as well as with contributions assessing the statistical properties of the maximizer of the Bayesian posterior distribution \citep{bassett2019maximum}. This MAP estimation constitutes the main distinction between a fully Bayesian procedure and our proposed reinforcement learning procedure.} We investigate two alternative Bayesian updating methods: a path-dependent (non-Markovian) approach, where firms keep track of the full history of past productions when updating their belief about the returns to scale; a path-independent (Markovian) approach, where firms only record one previous period.

Our analysis produces four key theoretical insights. Firstly, all relevant information in the learning dynamics (i.e., the information required to produce an update in firms' belief) is encoded in the input decisions. In other words, conditioned on input decisions, the learning path is invariant with respect to prices and household consumption. Secondly, we derive  in closed-form conditions that ensure firms can accurately learn their true returns to scale. These conditions are primarily linked to the magnitude of firms' idiosyncratic shocks and the uncertainty (informativeness) of their prior knowledge.  Thirdly, while the dynamics of the path-independent MAP estimator to infer the returns to scale parameter is in line with the case of path-dependent learning, its convergence is faster and more accurate in the context of path-independent learning. Therefore, keeping track of all past estimations ends up having a worse performance than a short-memory approach in which all past information is collected into a single prior distribution. We provide a mathematically grounded explanation of this phenomenon, based on the theory of Bayesian filtering. Finally, with a view to generalize our results beyond the case of uncertainty about returns to scale, we provide a dedicated extension of our general equilibrium framework to the high-dimensional learning, highlighting the consistency between two learning dynamics.

In summary, our results assess the reliability of market mechanisms in cases where market participants are unaware of profit-relevant parameters, in agreement with \cite{Vives1996589}. The convergence of the MAP estimator suggests that firms' ability to discover their own returns to scale from the dynamic observation of realized noisy productions allows input misallocation to be asymptotically circumvented by a learning mechanism when partial information is dynamically revealed. 

The rest of this paper is organized as follows. Section \ref{sec:model} introduces the proposed modeling approach for a dynamic multi-sector input-output economy. Section \ref{sec:equilibrium} characterizes its equilibrium conditions. Section \ref{sec:Learning_exogenous_l} establishes our main results on incomplete information and learning dynamics. Sections \ref{sec:learning_dynamics_long} and \ref{sec:learning_dynamics_short} study, respectively, the path-dependent (non-Markovian) and the path-independent (Markovian) learning methods. Section \ref{subsec:BayesianFiltering} interprets the results from the Bayesian filtering perspective. Section \ref{sec:Hight_dimension} extends our approach to the case of multi-input production with unknown input-output elasticities. Section \ref{sec:conclusions} concludes the paper. All proofs are collected in Appendix~\ref{Section:appendix1}, and the main notation is summarized in Appendix~\ref{app:notation}.


\section{The model}\label{sec:model}


We introduce a stylized multi-sector economy, where production depends on labor and on a productivity shock.\footnote{This may be viewed as a simplified version of \cite{horvath2000sectoral}, \cite{gabaix2011granular}, \cite{acemoglu2012network}, and \cite{acemoglu2017microeconomic}, with the specific distinction that inputs of production are limited to labor. \RRew{Despite this simplification, our model remains grounded in the general equilibrium tradition: prices and allocations are jointly determined across interlinked sectors through market-clearing conditions and an optimizing representative agent.}} The economy consists of a collection $\mathcal{N}$ of distinct sectors (with $|\mathcal{N}| = n$), each producing a different good, and an additional sector $0$, whose labor demand $l_0(t)$ is treated as an exogenous control variable in our analysis.

In addition to the sectors, the economy comprises a representative consumer, taking decisions about a preferred consumption plan over the $n + 1$ goods, with the aim of maximizing its lifetime utility. We let $x_i(t)$, $c_i(t)$, $l_i(t)$ and $p_i(t)$ be the corresponding production, consumption, labor demand and price of the $i$-th sector at the $t$-th production period. The profit of sector $i$ in this economy is denoted as $\pi_i(t)$. The unitary salary is denoted as $w(t)$. 

\paragraph{The consumer.}

The representative consumer is endowed with a quasi-linear utility function, and maximizes its lifetime utility subject to a period-by-period budget constraint:
\begin{equation}\label{eq:consumer_model}
    \max \overset{\infty}{\underset{t=1}{\sum}}\rho^{t}\left(  c_{0}%
(t)+\overset{n}{\underset{i=1}{\sum}}\frac{c_{i}^{\alpha}(t)}{\alpha}\right)\quad \mbox{subj. to } \quad p_{0}(t) c_{0}(t)+\overset{n}{\underset{i=1}{\sum}}p_{i}(t)c_{i}(t)\leq E(t),
\end{equation}
where $\alpha$ is an exogenous parameter, with $0<\alpha<1$, and $c_{0}(t)$ is the numeraire good, so that its price is $p_{0}(t)=1$. The representative consumer's income (and spending) at time $t$ is $E(t) = w(t)\delta(t)  + \pi_0(t) + \pi(t)$, which is composed of the labor income $w(t)\delta(t)$ (with  $\delta(t)$ being the active population at time $t$, or labor supply, and $w(t)$ being the unitary salary), the sum of profits $\pi(t) = \sum_{i \in \mathcal{N}} \pi_i(t)$ from the $n$ sectors, and the profit of sector $0$, denoted $\pi_0(t)$.

Notably, because the model includes neither capital nor a storage technology, the consumer spends the entire income in every period. Hence, the intertemporal utility maximization \eqref{eq:consumer_model} boils down to a sequence of static allocation problems, whose solution is $p_{i}(t)/p_{0}(t) = c_{i}^{\alpha-1}(t)$.

As discussed in the next section, while this simplifying assumption allows isolating the learning paths of firms as the only intertemporal mechanism relevant to our analysis, we provide in Appendix \ref{Section:appendix_C} an extension of the baseline consumer model \eqref{eq:consumer_model}, where we demonstrate that the learning paths studied in this paper remain structurally identical. 

\paragraph{The producers.} In this stylized economy, labor is the only input of production. Sector $0$ has a linear technology $x_{0}(t)=l_{0}(t)$, where $l_{0}(t)$ is labor employed in sector $0$ and $x_{0}(t)$ is its output. This can be interpreted as the production of a public good. Throughout the paper, $l_0(t)$ is treated as an exogenous process (e.g., capturing factors outside the model).

Each sector in $\mathcal{N}$ is controlled by a price-taker representative firm that operates so as to maximize its expected profit.\footnote{\RRew{The simplifying notion of a representative firm is often used by macroeconomic models \citep{hartley2002representative}. By this is meant a hypothetical firm whose production is equal to the aggregate production of the sector as a whole, and whose inputs are equal to the aggregate inputs of the sector as a whole. At the empirical level, industry taxonomies (such as the Global Industry Classification Standard, or the Standard Industrial Classification) can be used to group companies, based on similar production processes and products.}}
The production technology of sector $i \in \mathcal{N}$ is:
\begin{equation}\label{eq:ExpectedProduction0}
    x_{i}(t)=\eta_{i}(t)l_{i}^{\zeta_{i}^{(\ast)}}(t) = \eta_{i}(t) \mu_{i}(t),
\end{equation}
where the returns to scale parameter $\zeta_{i}^{(\ast)}$ is not known by the firm  (with $\zeta_{i}^{(\ast)} \in [\underline{\zeta}, \overline{\zeta}]$, where $\underline{\zeta} > 0$ and $\overline{\zeta} < 1$ are given constants), and the random factors $\eta_{i}(t) \sim \log N(m_i,\sigma_i)$ are independent idiosyncratic productivity shocks, with $\log N(m_i,\sigma_i )$ denoting a log-Gaussian distribution with parameters $m_i$ and $\sigma_i$.\footnote{
The log-normal specification for $\eta_i(t)$ ensures strictly positive productivity levels and is widely used in the literature due to its empirical relevance and tractability (see, e.g., \cite{gabaix2011granular}). Importantly, our equilibrium results do not rely critically on the exact log-normal form. What matters is that $\eta_i(t)$ admits a finite moment of order $\alpha$, so that expected profits are well-defined and the labor demand function remains tractable. Alternative continuous, positive distributions with similar moment properties (e.g., Gamma or Weibull) could be accommodated within our framework. However, since the learning paths studied in Section \ref{sec:Learning_exogenous_l} are defined parametrically, each alternative specification of $\eta_i(t)$ requires a dedicated assessment of the belief dynamics and its convergence.} The scalar quantity $m_i$ relates to the $i$-th firm productivity constant, whose expectation is $q_{i} = \mathbb{E}[\eta_{i}(t)] = \exp(m_i + \sigma_i^2/2)$. 

\paragraph{Market clearing.} In accordance with \cite{acemoglu2012network,acemoglu2017microeconomic}, the market clears each period such that $x_i(t) = c_i(t)$. Further, the total supply of labor is fixed and the representative household is endowed with total labor, which is supplied inelastically:
\begin{equation}\label{eq:labor-supply}
\sum_{i = 1}^{n} l_i(t) = \delta(t) - l_0(t).
\end{equation}

\RRew{Three features of this stylized dynamic multi-sector model merit emphasis. First, firms' profit-maximizing behavior depends not only on their expectations about output prices but also on the prevailing wage, which governs the cost of labor inputs. Although firms take the wage 
$w(t)$ as given when making decisions, its equilibrium value is determined endogenously through the interaction of firms' labor demand with the household labor supply. The full derivation of equilibrium wages, prices, and labor allocations is provided in Section \ref{sec:equilibrium},\footnote{See Proposition~\ref{prop:equilibrium} and Equation~\eqref{eq:equilibrium_w} for closed-form expressions of equilibrium wages and prices.} where these market interactions are explicitly formalized. Second, because firm productivity is subject to idiosyncratic shocks and no risk-sharing assets are available, the model exhibits incomplete markets both in the short and the long run. In the short run, firms cannot insure against current productivity shocks. In the long run, there is no mechanism for intertemporal risk smoothing or wealth accumulation. While extensions with hedging instruments or exchange-based contracts are in principle feasible, our framework deliberately excludes them in order to focus sharply on firms' learning dynamics under uncertainty. Third, following \cite{acemoglu2012network,acemoglu2017microeconomic}, we adopt the exogenous labor supply assumption \eqref{eq:labor-supply}. Appendix \ref{Section:appendix_D} shows, however, that endogenizing the labor supply does not materially affect the characterization of learning paths. Overall, this stylized multi-sector model retains the essential structure of a general equilibrium economy, in which all key market variables are jointly determined by equilibrium conditions. At the same time, it allows isolating the belief dynamics as the only intertemporal mechanism relevant for our purpose.}

\paragraph{Decision-estimation dynamics.} Our model distinguishes between the true returns to scale parameter $\zeta_{i}^{(\ast)}$ and the belief that firm $i$ has about it. We denote with $\zeta_{i}(t)$ firm $i$'s guess (estimation) of $\zeta_{i}^{(\ast)}$ at time $t$, so that the expected production conditioned on the $i$-th firm belief becomes
\begin{equation}\label{eq:ExpectedProduction}
    \mu_{i}(t) = q_i l_{i}^{\zeta_{i}(t)}.
\end{equation}

\RRew{For each firm $i$, we denote by $\mathcal{I}_i(t)$ the information available at date $t$ and we assume that firms form rational expectations conditionally to the information they have.\footnote{All profit expectations, predictive prices, and any expectation entering the firm's decision problem should be read as conditional on $\mathcal{I}_i(t)$. If a primitive (e.g.\ $\eta_i(t)$) is independent of $\mathcal{I}_i(t)$, the conditional expectation coincides with the unconditional one.}} More precisely, they know the law of the technology shock, and they have a guess $\zeta_{i}(t)$ on the parameter $\zeta_{i}^{(\ast)}$ of their production technology. To determine their optimal labor demand, they must anticipate the law followed by equilibrium prices. Hence, consistently with the rational expectations hypothesis, they must make their expectations by solving the economic equilibrium, based on their estimation $\zeta_{i}(t)$ of $\zeta_{i}^{(\ast)}$. Precisely, we introduce the distinction between true and \emph{believed} production (denoted as $x_i(t)$ and $\Bar{x}_i(t)$, respectively), as well as the distinction between the true and \emph{believed} price (denoted as $p_i(t)$ and $\Bar{p}_i(t)$, respectively). This distinction is driven by the fact that firm $i$'s expected profit depends on $\mathbb{E}[\Bar{p}_i(t) \eta_{i}(t) \;\big|\;\mathcal{I}_i(t)]$, where $\Bar{p}_i(t)$ satisfies the \emph{believed} market clearing conditions $\Bar{x}_i(t) = c_i(t)$, and $\mathbb{E}\big[ \cdot \big| \mathcal{I}_i(t)\big]$ is the conditional expectation given $\mathcal{I}_i(t)$. In fact, it would be incoherent to assume that firms are able to anticipate $\mathbb{E}[p_i(t) \eta_{i}(t)]$, as this would imply knowing $\zeta_{i}^{(\ast)}$. 

According to the framework of information collection and processing introduced by \cite{el1993bayesian} (and further explored by \cite{cogley2008anticipated}), each firm $i$ takes input decisions and updates its belief about $\zeta_{i}^{(\ast)}$, alternating between a decision-making stage and a learning stage:

\begin{itemize}
    \item[(i)] The labor input $l_{i}(t)$ is selected ex-ante by profit maximization, without knowing the value of $\eta_{i}(t)$ and with a (potentially incorrect) belief about $\zeta_{i}^{(\ast)}$:\RRew{    
    \begin{equation}\label{eq:firm_problem}
    l_i(t) \;=\; \underset{l_i(t)\ge 0}{\arg\max}\;
    \mathbb{E}\big[\,\bar p_i(t)\,\eta_i(t)\,l_i(t)^{\hat{\zeta}_i(t)} - w(t)\,l_i(t)\;\big|\;\mathcal{I}_i(t)\big],
    \end{equation}
    where $\hat{\zeta}_i(t) = \max\left\{ \underline{\zeta}, \min\left\{ \overline{\zeta}, \zeta_i(t) \right\} \right\}$. We emphasize that this is a static optimization problem, conditional on the current information set $\mathcal{I}_i(t)$, and no intertemporal nesting of expectations is involved in the firm's decision problem \citep{blackwell1962merging}. }
    \item[(ii)] The true production $x_{i}(t)$ is observed ex-post, and an estimate $\zeta_{i}(t+1)$ is generated and used at production period $t+1$.  
\end{itemize}

It is worth noting that, based on \eqref{eq:firm_problem}, if it happens that the estimated value $\zeta_{i}(t+1)$ in step (ii) is greater or equal than the constant $\overline{\zeta} < 1$, then the firm determines its optimal labor demand with $\overline{\zeta}$ in step (i). In other words, the firm dismisses the empirical estimation of $\zeta_{i}(t)$ and adopts a rule of thumb. Analogously, if it happens that the estimated value $\zeta_{i}(t)$ in step (ii) is less or equal than the constant $\underline{\zeta} > 0$, the firm determines its optimal labor demand with $\underline{\zeta}$ in step (i). Therefore, $\hat{\zeta}_i(t) = \max\left\{ \underline{\zeta}, \min\left\{ \overline{\zeta}, \zeta_i(t) \right\} \right\}$. 

In this alternating dynamics, the firm treats $\zeta_i(t)$ as a fixed parameter when choosing $l_i(t)$ and only updates it after observing $x_i(t)$ in each period. This yields a series of conditional one-period problems (in the form of \eqref{eq:firm_problem}) which follow the temporary equilibrium tradition of \cite{grandmont1977temporary}, where firms solve within-period optimization problems taking as given a forecast of payoff-relevant market objects (in particular, the price vector and the wage) conditional on their information set $\mathcal{I}_i(t)$, and markets clear once shocks are realized. Because these forecasts are formed under the firm's current technology belief $\zeta_i(t)$, they need not coincide with the realized equilibrium outcomes generated by the true parameter $\zeta_i^{(*)}$. The next period starts from realized outcomes (here, the privately observed production $x_i(t)$), which trigger a revision of beliefs and hence of the forecasted equilibrium objects. In this sense, the decision--estimation dynamics can be read as a temporary equilibrium sequence equipped with an explicit expectation-updating (learning) mechanism. This frames our approach as an adaptive filter with endogenous feedback, in which the state evolution is shaped directly by the agents' decisions (see \cite{marcet1989convergence} for a related adaptive filter learning mechanisms in economic theory). In this vein, a state-space formulation of our model consists of a hidden state $\zeta_i(t)$ (representing the firm's belief about its returns to scale) which is recursively estimated using a noisy observation $x_i(t)$ that is itself influenced by the firm's actions (i.e., the labor input $l_i(t)$). As studied later in the paper (see Section \ref{sec:Learning_exogenous_l}), we consider two deterministic update estimation functions for the learning stage (ii) of our alternating dynamics: $\zeta_{i}(t+1) = \phi(x_i(1), \ldots, x_i(t))$, which we call path-dependent, and $\zeta_{i}(t+1) = \phi(\zeta_i(t), x_i(t))$, which we call path-independent. Both require the characterization of the dependency between $x_i(t)$ and $\zeta_i(t)$ through market equilibrium conditions.


\section{The equilibrium}\label{sec:equilibrium}


In accordance with step (i) of the decision--estimation dynamics, for each $i \in \mathcal{N}$, the optimal demand for labor is obtained by solving \eqref{eq:firm_problem}. The profit of sector zero is $l_{0}(t)(1-w(t))$, so that if $w(t) > 1$, sector zero incurs a loss, and if $w(t) < 1$, sector zero earns a gain. Once $l_{0}(t)$ is set, prices are determined by the market clearing conditions on the observed production ex-post.

\RRew{\begin{proposition}\label{prop:equilibrium} We have the following equilibrium labor and prices, for each sector $i \in \mathcal{N}$:
\begin{align}\label{eq:labor_eq}
        l_i(t) &= \displaystyle \left( \dfrac{\mathbb{E}\left[ \eta_{i}(t)^{\alpha}  \right] \hat{\zeta}_{i}(t) }{ w(t)  } \right)^{\dfrac{1}{1-\hat{\zeta}_i(t)\alpha}} ,   \\[0.1cm]\label{eq:price_eq}
        p_i(t) & = \left( \eta_{i}(t) l_i(t)^{\zeta_{i}^{(\ast)}} \right)^{\alpha-1}.
\end{align}
\end{proposition}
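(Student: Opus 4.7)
The plan is to derive the equilibrium in two steps. First, exploit the quasi-linearity of the consumer problem \eqref{eq:consumer_model} to get a closed-form inverse demand function. Second, substitute the resulting \emph{believed} price into the firm's static problem \eqref{eq:firm_problem}, reduce it to a deterministic concave maximization in $l_i(t)$, and take first-order conditions. Equation \eqref{eq:price_eq} then follows by applying the same inverse demand under ex-post (realized) market clearing.

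Concretely, the intertemporal consumer problem decouples period-by-period (the consumer spends all income each period), so for each $t$ the interior first-order conditions in $c_i(t)$ against the budget constraint yield $c_i^{\alpha-1}(t) = p_i(t)/p_0(t)$, as already noted in the text, which with the numeraire normalization $p_0(t) = 1$ gives $p_i(t) = c_i(t)^{\alpha-1}$. Firm $i$ forms its price expectation by applying this inverse demand to its believed production $\bar{x}_i(t) = \eta_i(t) l_i(t)^{\hat{\zeta}_i(t)}$ through believed market clearing $\bar{x}_i(t) = c_i(t)$, obtaining $\bar{p}_i(t) = \bigl(\eta_i(t) l_i(t)^{\hat{\zeta}_i(t)}\bigr)^{\alpha-1}$. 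The revenue term inside the expectation in \eqref{eq:firm_problem} thus simplifies algebraically to $\eta_i(t)^{\alpha}\, l_i(t)^{\alpha \hat{\zeta}_i(t)}$.

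Because $l_i(t)$ and $w(t)$ are measurable with respect to $\mathcal{I}_i(t)$ while $\eta_i(t)$ is independent of $\mathcal{I}_i(t)$, the conditional expected profit collapses to $\mathbb{E}[\eta_i(t)^{\alpha}]\, l_i(t)^{\alpha \hat{\zeta}_i(t)} - w(t) l_i(t)$. Since $\alpha \in (0,1)$ and $\hat{\zeta}_i(t) \in [\underline{\zeta}, \overline{\zeta}] \subset (0,1)$, one has $\alpha \hat{\zeta}_i(t) \in (0,1)$, so the objective is strictly concave in $l_i(t)$ and admits a unique interior maximizer (the corner $l_i(t) = 0$ is ruled out as the marginal revenue diverges at zero). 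Solving the first-order condition, which is log-linear in $\log l_i(t)$, delivers the power-law closed form \eqref{eq:labor_eq}. Equation \eqref{eq:price_eq} then follows immediately by applying $p_i(t) = c_i(t)^{\alpha-1}$ to the realized output $x_i(t) = \eta_i(t) l_i(t)^{\zeta_i^{(\ast)}}$ via actual market clearing.

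The only delicate step is keeping the believed and realized objects cleanly distinct: the firm's optimal input depends on the belief $\hat{\zeta}_i(t)$ through the anticipated price $\bar{p}_i(t)$, whereas the equilibrium price that ultimately materializes (and will enter realized profits and, later, the observation equation driving the learning dynamics) depends on the true $\zeta_i^{(\ast)}$. Once this bookkeeping is in place, the remaining manipulations are routine calculus on a strictly concave power-law objective.
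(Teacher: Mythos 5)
There is a genuine gap, and it comes from the order in which you impose market clearing. You substitute the believed inverse demand $\bar p_i(t)=\bigl(\eta_i(t)l_i(t)^{\hat\zeta_i(t)}\bigr)^{\alpha-1}$ into the firm's objective \emph{before} taking the first-order condition, so the revenue becomes $\mathbb{E}[\eta_i(t)^\alpha]\,l_i(t)^{\alpha\hat\zeta_i(t)}$ and the firm internalizes the effect of its own output on the price. That is a monopolist's problem, not the price-taker's problem the model posits. Its first-order condition is $\alpha\hat\zeta_i(t)\,\mathbb{E}[\eta_i(t)^\alpha]\,l_i(t)^{\alpha\hat\zeta_i(t)-1}=w(t)$, which yields
\[
l_i(t)=\left(\frac{\alpha\,\hat\zeta_i(t)\,\mathbb{E}[\eta_i(t)^\alpha]}{w(t)}\right)^{\frac{1}{1-\hat\zeta_i(t)\alpha}},
\]
i.e.\ the claimed formula \eqref{eq:labor_eq} multiplied by $\alpha^{1/(1-\hat\zeta_i(t)\alpha)}$. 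Your assertion that the FOC ``delivers the power-law closed form \eqref{eq:labor_eq}'' therefore does not hold: the exponent is right but the base is off by a factor of $\alpha$, so the argument as written does not prove the proposition.

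The paper's proof keeps the price-taking discipline: the firm's FOC is taken with $P_{i,t}:=\mathbb{E}\bigl[\bar p_i(t)\eta_i(t)\mid\mathcal{I}_i(t)\bigr]$ held as a parametric constant, giving $l_i(t)=\bigl(\hat\zeta_i(t)P_{i,t}/w(t)\bigr)^{1/(1-\hat\zeta_i(t))}$ with exponent $1/(1-\hat\zeta_i(t))$, not $1/(1-\hat\zeta_i(t)\alpha)$. Only \emph{afterwards} is believed market clearing imposed, which produces a fixed-point equation $P_{i,t}=\mathbb{E}[\eta_i(t)^\alpha]\bigl(\hat\zeta_i(t)P_{i,t}/w(t)\bigr)^{\hat\zeta_i(t)(\alpha-1)/(1-\hat\zeta_i(t))}$; solving for $P_{i,t}$ and substituting back is what generates the exponent $1/(1-\hat\zeta_i(t)\alpha)$ and the coefficient $\mathbb{E}[\eta_i(t)^\alpha]\hat\zeta_i(t)/w(t)$ without the spurious $\alpha$. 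To repair your argument you would need to treat $\bar p_i(t)$ as exogenous to the firm's choice at the optimization stage and close the model with the consistency (fixed-point) step, rather than composing demand into the objective. Your derivation of \eqref{eq:price_eq} from realized market clearing is fine and matches the paper's.
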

It is worth highlighting the joint dependence of both equilibrium prices $p_i(t)$ and production quantities $x_i(t)$ on the same random variable $\eta_i(t)$. This arises naturally in the general equilibrium framework: while $x_i(t)$ is directly subject to the productivity shock through the production function, prices adjust endogenously to clear the market given this realized stochastic output. In particular, using the equilibrium condition $x_i(t) = c_i(t)$ and the demand equation $p_i(t) = c_i(t)^{\alpha - 1}$, it follows that prices inherit the randomness of output via $\eta_i(t)$, as equilibrium enforces a coupling between prices and production. This said, \eqref{eq:labor_eq} and \eqref{eq:price_eq} are deduced coherently with the principle that firms' expected profits are sensitive to $\hat{\zeta}_i(t)$ not only through production but also through prices, as discussed in the previous section. In fact, when the returns to scale parameter $\zeta_i(t)$ is unknown (and statistically estimated) by firms, a fundamental distinction emerges between the true equilibrium prices and the believed equilibrium prices. This distinction is adopted in the proof of Proposition \ref{prop:equilibrium} to characterize firm's decision in a way which is consistent with our belief updating design.  Following this line of reasoning, the unitary salary is determined by the exogenous labor supply equation \eqref{eq:labor-supply} and the firms' demand for labor \eqref{eq:labor_eq}:
\begin{equation}\label{eq:equilibrium_w}
    \delta(t) - l_0(t) = \displaystyle \sum_{i=1}^n \left( \dfrac{\mathbb{E}\left[ \eta_{i}(t)^{\alpha}  \right] \hat{\zeta}_{i}(t) }{ w(t)  } \right)^{\dfrac{1}{1-\hat{\zeta}_i(t)\alpha}} := L_t(w(t)).
\end{equation}}

For existence and uniqueness of the equilibrium solution, we establish the following lemma.

\begin{lemma}\label{lemma:L}
$L_t(w)$ is continuous and decreasing in $\rset_+$, with $\underset{w \rightarrow + \infty}{\lim} L_t(w) = 0$ and $\underset{w \rightarrow 0}{\lim} L_t(w) = +\infty$. 
\end{lemma}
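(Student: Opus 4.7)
The plan is to treat $L_t(w)$ as a finite sum of elementary power functions of $w$ and to reduce the claim to standard properties of $w \mapsto w^{-b}$ on $\rset_+$. Concretely, for each $i \in \mathcal{N}$ I would set
\[
A_i \;:=\; \mathbb{E}\!\left[\eta_i(t)^{\alpha}\right]\hat{\zeta}_i(t), \qquad b_i \;:=\; \frac{1}{1-\hat{\zeta}_i(t)\alpha},
\]
so that $L_t(w) = \sum_{i=1}^{n} A_i^{b_i}\, w^{-b_i}$. Before exploiting this representation, I would check that $A_i$ and $b_i$ are well-defined and strictly positive. Positivity of $A_i$ follows from $\hat{\zeta}_i(t) \ge \underline{\zeta} > 0$ together with the fact that all moments of a log-normal random variable are finite and strictly positive, so $\mathbb{E}[\eta_i(t)^{\alpha}] \in (0,+\infty)$. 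Positivity of $b_i$ follows from the bounds $\hat{\zeta}_i(t) \le \overline{\zeta} < 1$ and $0 < \alpha < 1$, which together yield $\hat{\zeta}_i(t)\alpha \le \overline{\zeta}\alpha < 1$, and hence $b_i > 0$ and uniformly bounded.

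With this setup in hand, the conclusion follows termwise. For each $i$, the map $w \mapsto A_i^{b_i}\, w^{-b_i}$ is continuous on $\rset_+$, strictly decreasing because $b_i > 0$, satisfies $\lim_{w \to 0^{+}} A_i^{b_i} w^{-b_i} = +\infty$ and $\lim_{w \to +\infty} A_i^{b_i} w^{-b_i} = 0$. Since $L_t(w)$ is a finite sum of such functions (with $n$ fixed and every coefficient strictly positive), continuity, strict monotonicity, and the two boundary limits are inherited by the sum, which is exactly the statement of the lemma.

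There is essentially no hard step in this argument; the only subtle point is ensuring that the exponent $b_i$ is bounded and strictly positive, which would fail if one allowed $\hat{\zeta}_i(t)\alpha = 1$. This is precisely ruled out by the construction of $\hat{\zeta}_i(t) \in [\underline{\zeta},\overline{\zeta}]$ with $\overline{\zeta} < 1$ and by $\alpha \in (0,1)$, so the lemma follows directly from the properties just listed.
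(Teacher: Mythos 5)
Your proof is correct and follows essentially the same route as the paper's: both rewrite $L_t(w)$ as a finite sum of terms $\chi_{i,t}\,w^{\kappa_{i,t}}$ with strictly positive coefficients and strictly negative exponents (your $A_i^{b_i}w^{-b_i}$), and then read off continuity, monotonicity, and the boundary limits. Your explicit verification that $A_i>0$ via the log-normal moments and $\hat{\zeta}_i(t)\ge\underline{\zeta}>0$ is a small additional check the paper leaves implicit, but the argument is the same.
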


As a consequence of Lemma \ref{lemma:L}, for any exogenous labor demand $l_0(t) \in (0, \delta(t)]$ in sector $0$, there exists a unique wage $w^*(t)$ such that $\delta(t) - l_0(t) = L_t(w^*(t))$. \RRew{Hence, Equation \eqref{eq:equilibrium_w} admits a unique wage level at which aggregate labor demand equals aggregate labor supply (which entails the uniqueness of commodity prices through \eqref{eq:price_eq}). Notably, once $l_0(t)$ and the equilibrium wage $w(t)$ are pinned down, sector $0$'s profit is uniquely determined as $\pi_0(t)=l_0(t)\,(1-w(t))$.

It is worth emphasizing that, while the existence of unique equilibrium prices ensures internal consistency of the framework, the actual dynamics of equilibrium prices are strongly shaped by firms' beliefs about their returns to scale. Appendix \ref{Section:appendix_E} provides a numerical illustration of this point, showing that uncertainty about returns to scale can significantly affect equilibrium price dynamics, thereby underscoring the importance of analyzing how firms learn about these parameters.}


\section{The learning paths}\label{sec:Learning_exogenous_l}


We now examine the learning paths of the sequence $\{\zeta_i(t)\}_t$ generated in step (ii) of the decision-estimation dynamics. To lighten the mathematical presentation, the following notation is used throughout this section:
\begin{equation}\label{eq:reparam}
    \varepsilon_{i}(t) = \log(\eta_{i}(t)), \qquad z_{i}(t)  = \log \left( l_i(t)\right) \quad \mbox{ and } \quad s_i(\ell) = \varepsilon_{i}(\ell) +  \zeta_i^{(*)}z_i(\ell). 
\end{equation}

Therein, firms take decisions based on $\zeta_i(t)$, while assessing the appropriateness of their beliefs to the realized production when constructing a probability distribution for the next period belief $\zeta_i(t+1)$. This is done by invoking the Bayesian rule and a collection of learning assumptions.

\begin{assumption}\label{ass:1} At each production period, firms use the maximum a-posteriori as a point-wise estimator of the unknown $v^{(*)}_i$.
\end{assumption}

The adoption of a point-wise estimation method when alternatingly combining a decision stage and a learning stage, relies on the anticipated utility approach introduced by \cite{kreps1998anticipated} and  \cite{cogley2008anticipated}. In this approach, firms treat the profit-relevant parameter $\zeta_i(t)$ as a random quantity when they learn but as a constant when they take decisions. \RRew{More precisely, at each period the firm bases its input choice on the current point forecast of $\zeta_i(t)$, which is then updated recursively after observing $x_i(t)$.} Conversely, a full Bayesian procedure would regard it as a random quantity both for learning and for decision-making.\footnote{As noted by \cite{cogley2008anticipated}, a full Bayesian procedure is mathematically intractable for most economic problems, so that they studied the goodness of the anticipated utility for different economic models.} 

An important point to raise is that firms do not observe ex-post the random shock $\eta$. When period $t$ becomes the present, firm $i$ knows the amount of labor $l_{i}(t)$ that it had employed and observes the amount of production $x_{i}(t)$. As the firm does not know the true value of the scale parameter $\zeta_{i}^{(\ast)}$ but only has an estimation $\zeta_{i}(t)$ of this parameter, it cannot infer from $x_{i}(t)$ and $l_{i}(t)$ the value of $\eta_{i}(t)$, but can use these information to formulate a new estimation of $\zeta_{i}^{(\ast)}$ that will be used in period $t+1$ to determine the labor demand and the production level.

As studied in Subsection \ref{subsec:BayesianFiltering}, our approach can be analyzed from the Bayesian filter perspective, where the latent state variable $\zeta_i(t)$ is recursively updated through a deterministic mapping $\phi$, using noisy observations of output $x_i(1), \ldots, x_i(t)$ that are themselves shaped by the firm's own actions. In the following sections, we focus on two specifications of $\phi$: a path-dependent specification, where firms keep track of the full history of past productions when building their probability distribution of $\zeta_i(t)$ (as studied in Section \ref{sec:learning_dynamics_long}) and a path-independent specification, where firms only record one previous period when building $\zeta_i(t)$ (as studied in Section \ref{sec:learning_dynamics_short}).

\section{Path-dependent (non-Markovian) learning}\label{sec:learning_dynamics_long}

A path-dependent (non-Markovian) learning method is a point-wise estimation approach constructed by enforcing the following assumptions on the information set and the initial knowledge.

\begin{assumption}\label{ass:2}
At period $t$, the firm's $i$ information set is given by the series of past realized productions up to the current period: $\mathcal{I}_i(t-1) = \{x_i(t-1), x_i(t-2), \ldots, x_i(1)\}$, for $t \geq 2$.
\end{assumption}

\begin{assumption}\label{ass:3}
Firms have an initial knowledge quantified as a zero-truncated Gaussian with parameters $\zeta^{(0)}_{i}$ and $\tau_i$ and density function $\pi_{0,i}(\zeta_i)$.\footnote{It is worth noting that $\zeta_i(t)$ represents a MAP estimator, while $\pi_{t,i}(\zeta_i)$ denotes its probability density function.}
\end{assumption}

\normalsize

In line with the production technology \eqref{eq:ExpectedProduction0}, let $\log x_i(h) = m_i + \log \mu_i(h) + \varepsilon_i(h)$, where $\varepsilon_i(h)\stackrel{\text{i.i.d.}}{\sim}\mathcal{N}(0,\sigma_i^2)$, for each $h=1,\ldots,t-1$, so that, conditionally on the input choices (hence on $\{\mu_i(h)\}_{h\le t-1}$), the log-output is Gaussian with mean $m_i+\log\mu_i(h)$ and variance $\sigma_i^2$.\footnote{Modeling log-output as Gaussian is standard in empirical production and implies a log-normal distribution for output levels. Writing the likelihood for $\{x_i(h)\}$ (rather than for $\{\log x_i(h)\}$) adds the Jacobian factor $\prod_{h}1/x_i(h)$, which depends only on realized data and not on $\zeta_i$, so it does not affect maximization over $\zeta_i$.}
With respect to Assumption \ref{ass:2}, the realized production encodes all required information (in the sense of statistical sufficiency) for the likelihood characterization. In fact, the likelihood function of the returns to scale parameter $\zeta_i(t)$ is the joint density of $\{\log x_i(h)\}_{h=1}^{t-1}$, which is induced by the idiosyncratic log-shocks $\{\varepsilon_i(h)\}_{h=1}^{t-1}$:
\begin{equation}\label{eq:likelihood}
\mathscr{L}(\zeta_i(t);\,\mathcal{I}_i(t-1))
~=~ \prod_{h=1}^{t-1}\frac{1}{\sigma_i\sqrt{2\pi}}
\exp\!\left(
-\frac{1}{2\sigma_i^{2}}
\Big(\log x_i(h) - m_i - \log \mu_i(h)\Big)^{2}
\right),\qquad t\ge2,
\end{equation}
\noindent where $\mu_i(h)$ has been defined in \eqref{eq:ExpectedProduction0} and represents the deterministic part of the production function (as a function of inputs and $\zeta_i(t)$). Hence, $\mathscr{L}(\zeta_i(t);\,\mathcal{I}_i(t-1))$ quantifies the mismatch between the log-production of firm $i$ observed up to period $t$ and the log-production implied by a candidate value of $\zeta_i(t)$. This can be seen as a way of assessing the goodness of firm $i$'s knowledge of its production function, or as an environment reward to the appropriateness of its input allocation.\footnote{In the context of the empirical production literature, likelihood-based estimation of returns to scale from realized (log-)production dates back to \cite{marschak1944random}. Contextually, the described learning approach compounds a reinforcement learning mechanism, defined upon (i) a state space of the system, (ii) a set of actions to be taken by firms, (iii) a policy, and (iv) a performance metric.
\begin{itemize}
    \item[(i)] The state space of the system is the set of feasible returns to scale parameters.
    \item[(ii)] The action taken by firm $i$ at period $t$ is $l_i(t)$.
    \item[(iii)] The policy defining the decision criteria at each state visited by the system is the profit maximization.
    \item[(iv)] To judge the performance of the policy, $\mathscr{L}(\zeta_i(t);\,\mathcal{I}_i(t-1))$ constitutes a metric for the difference between the target log-production and the realized one.
\end{itemize}}

The Bayesian posterior distribution of $\zeta_i(t)$ at the $t$-th period is
\begin{equation}\label{eq:T-posterior}
\pi_{t,i}(\zeta_i) ~=~
\begin{cases}
\displaystyle \frac{\mathscr{L}(\zeta_i;\,\mathcal{I}_i(t-1))\,\pi_{0,i}(\zeta_i)}
{\int \mathscr{L}(\zeta_i;\,\mathcal{I}_i(t-1))\,\pi_{0,i}(\zeta_i)\,d\zeta_i}
& \qquad \text{if $t>1$,}\\[0.3cm]
\pi_{0,i}(\zeta_i) & \qquad \text{if $t=1$.}
\end{cases}
\end{equation}

\begin{proposition}[MAP dynamics (path-dependent)]\label{prop:v_learning_long_mem} 
For each $t \geq 1$, the MAP estimator is
\begin{equation}\label{eq:v_learning_n1_long_mem}
\zeta_i(t+1) = \phi(x_i(1), \ldots, x_i(t)) =\left\{
\begin{array}{ll}
      \left( \dfrac{\zeta^{(0)}_i  + \gamma_i \sum_{\ell=1}^{t} z_i(\ell) s_i(\ell)}{\left(1 + \gamma_i \sum_{\ell=1}^{t} z_i(\ell)^2\right)} \right)^{+}, &  \Rew{\mbox{ if } l_i(\ell) > 0 \mbox{ for } ~ \ell \leq t,} \\[0.5cm]
      \Rew{\zeta_i(t)} & \mbox{ otherwise, }
\end{array} \right.
\end{equation}
where $\gamma_i = (\tau_i/\sigma_i)^2$ and the notation $(z)^+$ refers to $\max(0, z)$ for any $z \in \rset$.
\end{proposition}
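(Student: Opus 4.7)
The plan is to obtain the closed-form MAP by direct maximization of the log-posterior \eqref{eq:T-posterior}, exploiting the fact that both the log-likelihood \eqref{eq:likelihood} and the truncated Gaussian log-prior from Assumption~\ref{ass:3} are concave quadratics in the candidate parameter $\zeta_i$. Substituting $\log\mu_i(h)=\zeta_i z_i(h)$ into the exponent of \eqref{eq:likelihood} and using the reparameterization \eqref{eq:reparam} to identify the residual with $s_i(\ell)-\zeta_i z_i(\ell)$, the log-posterior becomes, up to constants independent of $\zeta_i$ and on the prior's support $\zeta_i\ge 0$,
\[
\log\pi_{t,i}(\zeta_i)\;=\;-\frac{1}{2\sigma_i^{2}}\sum_{\ell=1}^{t}\Big(s_i(\ell)-\zeta_i\, z_i(\ell)\Big)^{2}\;-\;\frac{1}{2\tau_i^{2}}\big(\zeta_i-\zeta_i^{(0)}\big)^{2}.
\]

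Since this expression is strictly concave in $\zeta_i$ (the prior alone already supplies a strictly negative coefficient on $\zeta_i^{2}$), the first-order condition $\partial_{\zeta_i}\log\pi_{t,i}=0$ reduces to a single linear equation whose solution is
\[
\frac{\zeta_i^{(0)}+\gamma_i\sum_{\ell=1}^{t}z_i(\ell)\,s_i(\ell)}{1+\gamma_i\sum_{\ell=1}^{t}z_i(\ell)^{2}},\qquad \gamma_i=(\tau_i/\sigma_i)^{2},
\]
which is precisely the expression inside $(\cdot)^{+}$ in \eqref{eq:v_learning_n1_long_mem}. The $(\cdot)^{+}$ operator then arises from the zero-truncation of the prior: on $\zeta_i\ge 0$ the log-posterior is either maximized at this interior stationary point (when it is non-negative), or, when that point is negative, the log-posterior is strictly decreasing on $[0,\infty)$ so the MAP is attained at the boundary $\zeta_i=0$.

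The ``otherwise'' branch handles the degenerate case $l_i(\ell)=0$ for some $\ell\le t$, in which $z_i(\ell)=\log l_i(\ell)$ and $\log x_i(\ell)$ are undefined, so period $\ell$ contributes no informative likelihood factor and the belief update is vacuous; consistently with the conventions of the paper we set $\zeta_i(t+1)=\zeta_i(t)$. The main point of care (and essentially the only technical subtlety) is the clean separation between the candidate parameter being optimized and the realized inputs $l_i(h)$, which are chosen at period $h$ on the basis of the then-current belief $\zeta_i(h)$ through Proposition~\ref{prop:equilibrium} and enter the likelihood merely as predetermined constants: this preserves the factorization in \eqref{eq:likelihood} despite the feedback loop between decisions and beliefs, and is what allows the Bayesian update to collapse to the quadratic optimization above.
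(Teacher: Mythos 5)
Your proposal is correct and follows essentially the same route as the paper: both write the (unnormalized) log-posterior as the quadratic $-\tfrac{1}{2\sigma_i^2}\sum_{\ell}(s_i(\ell)-\zeta z_i(\ell))^2-\tfrac{1}{2\tau_i^2}(\zeta-\zeta_i^{(0)})^2$ on $\zeta\ge 0$, solve the linear first-order condition, and obtain the positive part from the nonnegativity constraint (the paper phrases this via KKT complementarity, you via strict concavity of the quadratic on the half-line, which is the same computation). Your treatment of the degenerate branch $l_i(\ell)=0$ and the remark that the $l_i(h)$ enter the likelihood as predetermined constants are both consistent with the paper's conventions.
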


The first notable insight from \eqref{eq:v_learning_n1_long_mem} is that $\zeta_i(t+1)$ is updated solely based on the $i$-th labor input decisions (which appear in $z_i(\ell)$). Consequently, $\zeta_i(t+1)$ is conditionally independent, once input decisions are made, and the $i$-th learning path is invariant with respect to any other market information. The second key insight is that, if there exists a previous period $\ell \leq t$ for which $l_i(\ell) = 0$, the maximization of \eqref{eq:T-posterior} admits multiple solutions (all $\zeta_i(t+1) \in \rset_+$ are optimal for \eqref{eq:T-posterior}). Therefore, a firm cannot update its knowledge. This case is, however, prevented by the equilibrium labor demand  \eqref{eq:labor_eq}, as firms are assumed to dismiss the empirical estimation of $\zeta_{i}(t)$ when $\zeta_i(t+1) < \underline{\zeta}$ (and $l_i(\ell) > 0$ as long as $\hat{\zeta}_i(t) = \underline{\zeta} > 0$).\footnote{A technical note about the probability of $\zeta_i(t+1) < \underline{\zeta}$ (namely the probability that a firm dismisses the empirical estimation of $\zeta_{i}(t)$ and adopt a rule of thumb) is provided in Appendix \ref{Section:appendix2}, demonstrating that this eventuality becomes unlikely for small values of $\tau_i$ and $\sigma_i$.} This said, a peculiar form of learning emerges in the limit case (i.e., when there exists $\ell \in \{1, \ldots, t\}$ such that $l_i(\ell)$ converges to zero). In fact, this induces the learning path to converge almost surely to $\zeta_i^{(*)}$.

\begin{proposition}\label{prop:conv_pointwise} For each $\ell \in \{1, \ldots, t\}$, the MAP estimator \eqref{eq:v_learning_n1_long_mem} converges (pointwise) to  $v^{(*)}_i$: 
$$
\mathbb{P}\left(  \underset{l_i(\ell)\rightarrow 0}\lim \zeta_i(t+1)  = v^{(*)}_i \right) = 1.
$$  
\end{proposition}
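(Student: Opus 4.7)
The plan is to perform a direct asymptotic expansion of the closed-form MAP formula \eqref{eq:v_learning_n1_long_mem} in the regime $l_i(\ell)\to 0$, which is equivalent to $z_i(\ell)=\log l_i(\ell)\to -\infty$ (and hence $z_i(\ell)^2\to+\infty$). Because the estimator is already written as the positive part of a rational function of the labor history, the convergence claim reduces to identifying the leading terms of the numerator and the denominator as $|z_i(\ell)|$ becomes large.

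First, I would substitute the defining identity $s_i(\ell)=\varepsilon_i(\ell)+\zeta_i^{(*)}z_i(\ell)$ into \eqref{eq:v_learning_n1_long_mem} and isolate the $\ell$-th term of each sum:
\[
\zeta_i(t+1)=\left(\frac{\zeta_i^{(0)}+\gamma_i A_\ell+\gamma_i\,\varepsilon_i(\ell)\,z_i(\ell)+\gamma_i\zeta_i^{(*)}z_i(\ell)^2}{1+\gamma_i B_\ell+\gamma_i z_i(\ell)^2}\right)^{+},
\]
where $A_\ell:=\sum_{\ell'\neq\ell}z_i(\ell')s_i(\ell')$ and $B_\ell:=\sum_{\ell'\neq\ell}z_i(\ell')^2$ gather the remaining finite contributions. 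Dividing numerator and denominator by $\gamma_i z_i(\ell)^2$, the leading numerator ratio reduces to $\zeta_i^{(*)}$ and the leading denominator ratio reduces to $1$, while the bounded quantities $\zeta_i^{(0)}$, $\gamma_i A_\ell$, $\gamma_i B_\ell$ and the constant $1$ are all of order $z_i(\ell)^{-2}$ after normalization, hence vanish in the limit.

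The one term that needs some care is the stochastic cross term $\gamma_i\varepsilon_i(\ell)\,z_i(\ell)$ in the numerator, which after normalization scales as $\varepsilon_i(\ell)/z_i(\ell)$. Since $\varepsilon_i(\ell)\sim\mathcal{N}(0,\sigma_i^2)$ is almost surely finite while $|z_i(\ell)|\to+\infty$, this ratio tends to zero almost surely, and this is the single place where the probability-one qualifier actually enters the argument. Finally, since $\zeta_i^{(*)}\in[\underline{\zeta},\overline{\zeta}]$ with $\underline{\zeta}>0$, the limiting value of the argument of $(\cdot)^{+}$ is strictly positive, so the outer positive-part operator is eventually inactive and may be dropped, yielding $\lim_{l_i(\ell)\to 0}\zeta_i(t+1)=\zeta_i^{(*)}$ almost surely, which is exactly the claim. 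The main (and essentially the only) obstacle is the control of the stochastic cross term, resolved by the almost-sure finiteness of the Gaussian shock relative to the quadratic divergence of $z_i(\ell)^2$.
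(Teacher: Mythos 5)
Your proposal is correct and follows essentially the same route as the paper's own proof: both decompose $s_i(\ell)=\varepsilon_i(\ell)+\zeta_i^{(*)}z_i(\ell)$, send $z_i(\ell)=\log l_i(\ell)\to-\infty$ while the other history terms stay fixed, observe that the noise contribution is dominated by the quadratic term $z_i(\ell)^2$ for every (almost surely finite) realization of the Gaussian shocks, and use $\zeta_i^{(*)}>0$ to discard the positive-part operator. The only cosmetic difference is that the paper phrases the almost-sure statement as "the deterministic limit holds for each realization of the shock sequence," which is the same observation you make about the cross term $\varepsilon_i(\ell)/z_i(\ell)$.
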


This form of learning indicates that the disappearance of a sector (driven by a drop in its labor demand) is a \emph{price of learning} the true returns to scale. Under the assumption that firms dismiss the MAP estimation of $\zeta_{i}(t)$ when $\zeta_{i}(t) < \underline{\zeta}$ (and determine their optimal labor demand using the rule of thumb $\hat{\zeta}_i(t) = \max\left\{ \underline{\zeta}, \min\left\{ \overline{\zeta}, \zeta_i(t) \right\} \right\}$), this peculiar form of learning is associated with situations in which equilibrium labor input becomes arbitrarily small for some sectors---for instance when the unitary wage $w(t)$ becomes arbitrarily large, so that labor demand vanishes.

At this stage, it is worth reminding that the MAP estimator \eqref{eq:v_learning_n1_long_mem} is a random variable induced by the noisy production $x_i(t)$ through the idiosyncratic shocks $\eta_i(t)$. A fundamental step in the analysis of the learning dynamics is therefore the evaluation of the expectation (Proposition~\ref{prop:expect_v_long_memory} below) and the mode (Proposition~\ref{prop:mode_v_long_memory} below) of $\zeta_i(t)$, as well as their limiting behavior as $t$ grows large.

\begin{proposition}[Expectation and variance of the MAP estimator (path-dependent)]\label{prop:expect_v_long_memory} For each $t \geq 1$, let us define
\begin{equation}\label{eq:v_bar}
    \overline{v}_i(L_i(t))  = \dfrac{ \zeta^{(0)}_i  + \gamma_i \zeta_i^{(*)} \tilde{z}^{(2)}_{i}(t)}{1 + \gamma_i \tilde{z}^{(2)}_{i}(t)} \quad \mbox{ and } \quad
    \overline{\varphi}_i(L_i(t))  = \dfrac{\gamma_i  \sigma_i \tilde{z}^{(1)}_{i}(t)}{1 + \gamma_i \tilde{z}^{(2)}_{i}(t)},
\end{equation}
where $\tilde{z}^{(1)}_{i}(t) = \sum_{\ell=1}^{t-1} z_i(\ell)$,  $\tilde{z}^{(2)}_{i}(t) = \sum_{\ell=1}^{t-1} z_i(\ell)^2$, and $L_i(t) = \{l_i(\ell)\}_{\ell=1}^{t}$. Based on Proposition \ref{prop:v_learning_long_mem}, conditioned on the $i$-th input decisions up to period $t$, we have
\begin{eqnarray}
\mathbb{E}[\zeta_i(t+1)  ~|~ L_i(t)] & = & \overline{v}_i(L_i(t)) \tilde{F}_{i}(t) +  |\overline{\varphi}_i(L_i(t))| \tilde{f}_{i}(t), \\[0.3cm] 
\mathbb{E}[\zeta_i(t+1)^2 ~|~ L_i(t)] & = & \left(\overline{v}_i(L_i(t))^2 +  |\overline{\varphi}_i(L_i(t))|^2\right) \tilde{F}_{i}(t) + \overline{v}_i(L_i(t)) |\overline{\varphi}_i(L_i(t))|\tilde{f}_{i}(t),
\end{eqnarray}
where 
$$
\begin{array}{lll}
    \tilde{F}_{i}(t) & \RRew{:=} ~ G_i\Big(\tilde{z}^{(1)}_{i}(t),\tilde{z}^{(2)}_{i}(t)\Big) &\RRew{:=}  ~ 1-F\left(-\left|\dfrac{\overline{v}_i(L_i(t))}{\overline{\varphi}_i(L_i(t))}\right|\right),\\[0.4cm]
    \tilde{f}_{i}(t) & \RRew{:=}  ~ g_i(\tilde{z}^{(1)}_{i}(t),\tilde{z}^{(2)}_{i}(t)) &\RRew{:=}  ~ f\left(-\left|\dfrac{\overline{v}_i(L_i(t))}{\overline{\varphi}_i(L_i(t))}\right|\right). 
\end{array}
$$
\end{proposition}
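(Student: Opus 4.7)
The plan is to recognize $\zeta_i(t+1)$ as the positive part of a conditionally Gaussian random variable and then apply the classical moment formulas for a censored normal distribution, matching terms with the notation of Proposition~\ref{prop:expect_v_long_memory}.

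First I would rewrite the MAP recursion from Proposition~\ref{prop:v_learning_long_mem} as $\zeta_i(t+1)=(Y_{t+1})^+$ with
$$
Y_{t+1}\;:=\;\frac{\zeta^{(0)}_i+\gamma_i\sum_{\ell} z_i(\ell)\,s_i(\ell)}{1+\gamma_i\sum_{\ell} z_i(\ell)^2},
$$
and split the numerator via $s_i(\ell)=\varepsilon_i(\ell)+\zeta_i^{(*)}z_i(\ell)$ into a deterministic part depending on $\zeta_i^{(*)}$ and the $z_i(\ell)^2$ plus a pure-noise term $\gamma_i\sum_{\ell} z_i(\ell)\,\varepsilon_i(\ell)$. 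Conditioning on $L_i(t)$ freezes every $z_i(\ell)$, so the noise term is a linear combination of i.i.d.\ centered Gaussians with variance $\sigma_i^2$; hence $Y_{t+1}\mid L_i(t)$ is Gaussian. Matching its conditional mean and standard deviation to the definitions in the statement identifies them with $\overline{v}_i(L_i(t))$ and $|\overline{\varphi}_i(L_i(t))|$ respectively. Since $\zeta^{(0)}_i,\gamma_i,\zeta_i^{(*)}\ge 0$ one has $\overline{v}_i(L_i(t))\ge 0$, so $|\overline{v}_i/\overline{\varphi}_i|=\overline{v}_i/|\overline{\varphi}_i|$, and using $1-F(-x)=F(x)$ together with evenness of $f$ one obtains $\tilde F_i(t)=F(\overline{v}_i/|\overline{\varphi}_i|)$ and $\tilde f_i(t)=f(\overline{v}_i/|\overline{\varphi}_i|)$.

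Next I would invoke the standard censored-Gaussian moment identities: for $Y\sim\mathcal{N}(\mu,\sigma^2)$ with $\sigma>0$,
$$
\mathbb{E}[Y^+]=\mu\,F(\mu/\sigma)+\sigma\,f(\mu/\sigma),\qquad \mathbb{E}[(Y^+)^2]=(\mu^2+\sigma^2)\,F(\mu/\sigma)+\mu\sigma\,f(\mu/\sigma),
$$
obtained by the substitution $u=(y-\mu)/\sigma$ in the defining integrals, the identity $\int_a^{\infty} u\,f(u)\,du=f(a)$, and a single integration by parts for the quadratic integrand. Specializing $\mu=\overline{v}_i(L_i(t))$, $\sigma=|\overline{\varphi}_i(L_i(t))|$ and substituting the closed forms for $\tilde F_i(t),\tilde f_i(t)$ yields both claimed formulas.

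The only delicate step is the conditional-variance identification: matching the algebraic form of $\gamma_i^2\sigma_i^2\sum_{\ell} z_i(\ell)^2/(1+\gamma_i\sum_{\ell} z_i(\ell)^2)^2$ with the closed-form expression $|\overline{\varphi}_i(L_i(t))|^2$ written in terms of the summary statistics $\tilde z^{(1)}_i,\tilde z^{(2)}_i$ requires careful bookkeeping. Once this is in place, everything else reduces to the application of positive-part moment formulas for a Gaussian, and I do not anticipate any substantive technical obstacle.
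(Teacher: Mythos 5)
Your overall strategy coincides with the paper's: both write $\zeta_i(t+1)$ as the positive part of a conditionally Gaussian variable and evaluate truncated first and second moments. The paper does this by hand, via the recurrence $B(b,m)=(m-1)B(b,m-2)+b^{m-1}f(b)$ and an explicit two-case split on the sign of $\sum_\ell z_i(\ell)$; your use of $|\overline{\varphi}_i(L_i(t))|$ as the scale parameter of a single censored normal absorbs that case split more cleanly, which is a minor stylistic improvement.

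However, the step you flag as ``delicate bookkeeping'' is a genuine gap that bookkeeping cannot close. Under the i.i.d.\ assumption $\varepsilon_i(\ell)\sim\mathcal{N}(0,\sigma_i^2)$, the conditional variance of the noise term $\gamma_i\sum_\ell z_i(\ell)\varepsilon_i(\ell)\big/\big(1+\gamma_i\tilde z^{(2)}_i(t)\big)$ is $\gamma_i^2\sigma_i^2\,\tilde z^{(2)}_i(t)\big/\big(1+\gamma_i\tilde z^{(2)}_i(t)\big)^2$, exactly as you compute; but $|\overline{\varphi}_i(L_i(t))|^2=\gamma_i^2\sigma_i^2\,\big(\tilde z^{(1)}_i(t)\big)^2\big/\big(1+\gamma_i\tilde z^{(2)}_i(t)\big)^2$ involves the \emph{square of the sum} $\big(\sum_\ell z_i(\ell)\big)^2$, not the \emph{sum of squares} $\sum_\ell z_i(\ell)^2$. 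These agree only when there is a single summand; with two or more periods of data they differ, so the identification $\sigma=|\overline{\varphi}_i(L_i(t))|$ does not follow from your decomposition. The paper's proof never attempts this identification: it directly posits the representation $\tilde{\zeta}_i(t)=\overline{v}_i(L_i(t))+\overline{\varphi}_i(L_i(t))\,\varepsilon$ with a \emph{single} standard normal $\varepsilon$ (i.e., it treats the randomness as one common Gaussian scaled by $\overline{\varphi}_i$) and then evaluates the truncated moments of that object. To land on the stated formulas you must either adopt that single-shock representation as the paper does, or accept that the i.i.d.\ route yields a scale of $\gamma_i\sigma_i\sqrt{\tilde z^{(2)}_i(t)}\big/\big(1+\gamma_i\tilde z^{(2)}_i(t)\big)$ in place of $|\overline{\varphi}_i(L_i(t))|$; as written, your proof proves a different formula from the one in the proposition.
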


\begin{proposition}[Mode of the MAP estimator (path-dependent)]\label{prop:mode_v_long_memory} \Rew{Let $\mathbb{M}[\zeta_i(t+1) ~|~ L_i(t) ]$ be the mode of the MAP estimator $\zeta_i(t+1)$, conditioned on the $i$-th input decisions up to period $t$. Based on Proposition \ref{prop:v_learning_long_mem}, we have}
$$
\mathbb{M}[\zeta_i(t+1) ~|~ L_i(t) ] = 
\begin{cases}
  \overline{v}_i(L_i(t)) & \quad \mbox{if } \displaystyle \dfrac{1}{|\overline{\varphi}_i(L_i(t))|\sqrt{ 2 \pi}} \geq  F\left(-\left|\dfrac{\overline{v}_i(L_i(t))}{\overline{\varphi}_i(L_i(t))}\right|\right),\\[0.3cm]
  0 & \quad \mbox{otherwise.}
\end{cases}
$$
\end{proposition}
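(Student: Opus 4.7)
The plan is to reduce the statement to the analysis of a (left-)truncated Gaussian and then apply the paper's convention for comparing an atomic probability with a continuous density peak. Conditional on the sequence of past labor decisions $L_i(t)$, the quantities $\{z_i(\ell)=\log l_i(\ell)\}_{\ell\le t}$ are deterministic; the only randomness entering \eqref{eq:v_learning_n1_long_mem} is carried by the i.i.d.\ log-shocks $\varepsilon_i(\ell)\sim\mathcal{N}(0,\sigma_i^2)$ through $s_i(\ell)=\varepsilon_i(\ell)+\zeta_i^{(*)}z_i(\ell)$.

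First, I would expand the interior numerator by substituting $s_i(\ell)$ and separating deterministic and stochastic contributions. This yields
$$
\zeta_i(t+1)\;=\;\bigl(Y\bigr)^{+},\qquad
Y\;=\;\overline{v}_i(L_i(t))\;+\;\frac{\gamma_i\sum_{\ell=1}^{t} z_i(\ell)\,\varepsilon_i(\ell)}{1+\gamma_i\tilde{z}^{(2)}_{i}(t)}.
$$
Since $Y$ is an affine combination of independent Gaussians, it is itself Gaussian, with mean $\overline{v}_i(L_i(t))>0$ (strictly positive because $\zeta^{(0)}_i>0$, $\zeta_i^{(*)}\ge\underline{\zeta}>0$, and $\gamma_i,\tilde{z}^{(2)}_i(t)\ge 0$) and standard deviation equal to $|\overline{\varphi}_i(L_i(t))|$ as defined in \eqref{eq:v_bar}.

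Second, since $\zeta_i(t+1)=\max(0,Y)$, the conditional distribution of $\zeta_i(t+1)$ decomposes as a mixture of an atom at $0$ of mass $\mathbb{P}(Y\le 0)=F\!\bigl(-|\overline{v}_i(L_i(t))/\overline{\varphi}_i(L_i(t))|\bigr)$ and a continuous part on $(0,\infty)$ with density $y\mapsto \frac{1}{|\overline{\varphi}_i(L_i(t))|}\,f\!\bigl(\frac{y-\overline{v}_i(L_i(t))}{|\overline{\varphi}_i(L_i(t))|}\bigr)$. Because $\overline{v}_i(L_i(t))>0$, this continuous density attains its supremum at the interior point $y^{*}=\overline{v}_i(L_i(t))$, with peak value $\frac{1}{|\overline{\varphi}_i(L_i(t))|\sqrt{2\pi}}$.

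Finally, I would invoke the mode-selection rule for mixed distributions adopted in the paper: the mode is the interior maximizer of the density whenever the density peak exceeds the atomic mass, and equals $0$ otherwise. Direct comparison between $\frac{1}{|\overline{\varphi}_i(L_i(t))|\sqrt{2\pi}}$ and $F(-|\overline{v}_i(L_i(t))/\overline{\varphi}_i(L_i(t))|)$ then reproduces exactly the dichotomy stated in the proposition. The main obstacle is precisely this convention, since a density under Lebesgue measure is not directly commensurable with a point mass; I would address it by fixing a reference measure (Lebesgue on $(0,\infty)$ plus counting measure at $0$) consistent with the paper's MAP interpretation and noting that the boundary equality case can be assigned to either branch without altering the statement almost everywhere in $L_i(t)$.
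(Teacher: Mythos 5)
Your proof is correct and follows essentially the same route as the paper: both represent $\zeta_i(t+1)$ conditionally on $L_i(t)$ as $\big(\overline{v}_i(L_i(t))+\overline{\varphi}_i(L_i(t))\,\varepsilon\big)^+$ with $\varepsilon\sim\mathcal{N}(0,1)$, identify the resulting law as a mixture of an atom at $0$ with mass $F\!\left(-\left|\overline{v}_i(L_i(t))/\overline{\varphi}_i(L_i(t))\right|\right)$ and a Gaussian density on $(0,\infty)$ peaking at $\overline{v}_i(L_i(t))$ with height $\big(|\overline{\varphi}_i(L_i(t))|\sqrt{2\pi}\big)^{-1}$, and then compare the two (your explicit reference-measure discussion makes this comparison slightly more rigorous than the paper's implicit convention). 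Note only that your assertion that the noise term $\gamma_i\sum_{\ell} z_i(\ell)\varepsilon_i(\ell)/\big(1+\gamma_i\tilde{z}^{(2)}_{i}(t)\big)$ has standard deviation exactly $|\overline{\varphi}_i(L_i(t))|$ rests on the same identification the paper itself adopts, since under independent shocks the variance would involve $\tilde{z}^{(2)}_{i}(t)$ rather than $\big(\tilde{z}^{(1)}_{i}(t)\big)^2$.
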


The findings of Propositions \ref{prop:expect_v_long_memory} and \ref{prop:mode_v_long_memory} are notable for their broad applicability, as the derived expectation, variance, and mode of the MAP estimator remain invariant even when market participants make sub-optimal input decisions guided by heuristic algorithms or rules of thumb \citep{blonski1999rational}. However, to analyze the limiting properties and convergence of the expectation (Proposition \ref{prop:expect_v_long_memory}) and mode (Proposition \ref{prop:mode_v_long_memory}), the next subsection assumes that the economy operates in equilibrium, adhering to conditions \eqref{eq:labor_eq}--\eqref{eq:equilibrium_w}.

\subsection{Limit properties and convergence under the path-dependent learning}\label{subsec:limit_long_mem}

\RRew{Propositions~\ref{prop:expect_v_long_memory} and~\ref{prop:mode_v_long_memory}, studied in the previous subsection, allow for a closed-form analysis of the convergence of the MAP estimator under path-dependent learning. A central contribution of this subsection is to establish its asymptotic properties. In particular, we show that unbiased convergence is the exception rather than the rule, with sustained labor demand emerging as the most economically plausible scenario. 

Throughout this subsection we work with the conditional distribution of $\zeta_i(t)\mid L_i(t-1)$, so that expectations are always taken with respect to the firm's information set at date $t-1$. This formulation connects our analysis to the general theory of belief convergence under expanding information sets \citep{blackwell1962merging}. To obtain tractable limit properties, we introduce a mean-field approximation, where all occurrences of $\eta_i(1), \ldots, \eta_i(t-1)$ in $l_i(t)$ are replaced by their expectation $q_i$, so that the past trajectory of labor inputs until period $t-1$ (and thus $\tilde{z}^{(1)}_{i}(t-1)$ and $\tilde{z}^{(2)}_{i}(t-1)$) can be treated as deterministic quantities. This approximation substitutes the actual (potentially fluctuating) influence of $\tilde{z}^{(1)}_{i}(t-1)$ and $\tilde{z}^{(2)}_{i}(t-1)$ on $\zeta_i(t)$ with an average influence yielding an accurate description for sectors in which the magnitude of idiosyncratic shocks is sufficiently small.}

\begin{proposition}[Convergence of the expectation (path-dependent)]\label{prop:long_memory_learning} 
The dynamics of the path-dependent learning method has five cases:

\begin{itemize}
    \item[(1)] If for each $i\in \cal{N}$, the dynamics of input decisions satisfies $\underset{t \rightarrow +\infty }\lim |\tilde{z}_{i}^{(1)}(t)| = \underset{t \rightarrow +\infty }\lim \tilde{z}_{i}^{(2)}(t)=+\infty$ and $\underset{t \rightarrow +\infty }\lim\dfrac{\zeta^{(0)}_i  + \gamma_i \zeta_i^{(*)} \tilde{z}^{(2)}_{i}(t)}{\gamma_i  \sigma_i \tilde{z}^{(1)}_{i}(t)} = L < +\infty$, then we have
    $$
    \underset{t \rightarrow +\infty }\lim \mathbb{E}[\zeta_i(t) ~|~ L_i(t-1)] =  \zeta_i^{(*)}(1-F(-L)) +
    \sigma_i f(-L) \left(\underset{t \rightarrow +\infty }\lim\dfrac{|\tilde{z}_{i}^{(1)}(t)|}{\tilde{z}_{i}^{(2)}(t)}\right).
    $$
    \item[(2)] If for each $i\in \cal{N}$, the dynamics of input decisions satisfies $\underset{t \rightarrow +\infty }\lim \tilde{z}_{i}^{(1)}(t) =L_1$ and $\underset{t \rightarrow +\infty }\lim \tilde{z}_{i}^{(2)}(t) = L_2$, then we have
    $$\underset{t \rightarrow +\infty }\lim \mathbb{E}[\zeta_i(t)~|~ L_i(t-1)] =
    \left\{
    \begin{array}{ll}
      \dfrac{\Big( \zeta^{(0)}_i  + \gamma_i \zeta_i^{(*)}L_2\Big)G_i(L_1,L_2)+\sigma_i \gamma_i L_1 g_i(L_1,L_2)}{1+\gamma_i L_2} & \text{if } L_1\neq 0, \\
      \dfrac{\zeta^{(0)}_i  + \gamma_i \zeta_i^{(*)}L_2}{1+\gamma_i L_2} & \text{if } L_1 = 0.
    \end{array}
    \right.
    $$
    \item[(3)] If for each $i\in \cal{N}$, the dynamics of input decisions satisfies $\underset{t \rightarrow +\infty }\lim \tilde{z}_{i}^{(1)}(t) =L_1< +\infty$ and $\underset{t \rightarrow +\infty }\lim \tilde{z}_{i}^{(2)}(t) =\infty$, then we have
    $$
    \underset{t \rightarrow +\infty }\lim \mathbb{E}[\zeta_i(t)~|~ L_i(t-1)] = \zeta_i^{(*)}. 
    $$
    \item[(4)] If for each $i\in \cal{N}$, the dynamics of input decisions satisfies $\underset{t \rightarrow +\infty }\lim \tilde{z}_{i}^{(1)}(t)=\infty$ and $\underset{t \rightarrow +\infty }\lim \tilde{z}_{i}^{(2)}(t) =L_2< +\infty$, then we have
    $$
    \underset{t \rightarrow +\infty }\lim \mathbb{E}[\zeta_i(t)~|~ L_i(t-1)] = + \infty.
    $$
    \item[(5)] If for each $i\in \cal{N}$, the dynamics of input decisions satisfies 
    $$
    \underset{t \rightarrow +\infty }\lim |\tilde{z}_{i}^{(1)}(t)| = \underset{t \rightarrow +\infty }\lim \tilde{z}_{i}^{(2)}(t)=\underset{t \rightarrow +\infty }\lim\dfrac{\zeta^{(0)}_i  + \gamma_i \zeta_i^{(*)} \tilde{z}^{(2)}_{i}(t)}{\gamma_i \sigma_i \tilde{z}^{(1)}_{i}(t)} =  +\infty,
    $$ 
    then we have
    $$
    \underset{t \rightarrow +\infty }\lim \mathbb{E}[\zeta_i(t)~|~ L_i(t-1)] = \zeta_i^{(*)}. 
    $$
\end{itemize}

\end{proposition}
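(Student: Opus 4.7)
The plan is to start from the closed-form expression in Proposition \ref{prop:expect_v_long_memory}, namely
$$
\mathbb{E}[\zeta_i(t)\mid L_i(t-1)] \;=\; \overline{v}_i(L_i(t-1))\,\tilde{F}_i(t-1) \;+\; |\overline{\varphi}_i(L_i(t-1))|\,\tilde{f}_i(t-1),
$$
and compute the limit of each factor case by case. A key algebraic observation is that the ratio appearing inside $F$ and $f$ coincides exactly (not merely asymptotically) with the ratio hypothesized in the statement:
$$
\frac{\overline{v}_i(L_i(t-1))}{\overline{\varphi}_i(L_i(t-1))} \;=\; \frac{\zeta_i^{(0)} + \gamma_i\zeta_i^{(*)}\tilde{z}_i^{(2)}(t-1)}{\gamma_i\sigma_i\tilde{z}_i^{(1)}(t-1)}.
$$
Since $F$ and $f$ are continuous on $\mathbb{R}$ (and $f$ decays at $\pm\infty$), the limits of $\tilde{F}_i(t-1)$ and $\tilde{f}_i(t-1)$ follow directly from the behaviour of this ratio, so each of the five cases reduces to elementary limit arithmetic on $\overline{v}_i$, $\overline{\varphi}_i$, and their quotient, combined with continuity.

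For Case 2 (both partial sums convergent) I would substitute $L_1,L_2$ into $\overline{v}_i$ and $\overline{\varphi}_i$ and identify $\tilde{F}_i\to G_i(L_1,L_2)$ and $\tilde{f}_i\to g_i(L_1,L_2)$; the sub-case $L_1=0$ is handled separately because $\overline{\varphi}_i\to 0$ forces $|\overline{v}_i/\overline{\varphi}_i|\to+\infty$, so $\tilde{F}_i\to 1$ and $|\overline{\varphi}_i|\tilde{f}_i\to 0$. Cases 3 and 5 share the same template: the hypothesis $\tilde{z}_i^{(2)}\to+\infty$ gives $\overline{v}_i\to\zeta_i^{(*)}$, and $|\overline{v}_i/\overline{\varphi}_i|\to+\infty$ (by assumption in Case 5, and because $\overline{\varphi}_i\to 0$ while $\overline{v}_i\to\zeta_i^{(*)}>0$ in Case 3), yielding $\tilde{F}_i\to 1$; the truncation correction $|\overline{\varphi}_i|\tilde{f}_i$ vanishes since $\tilde{f}_i$ decays with a Gaussian tail in $|\overline{v}_i/\overline{\varphi}_i|$ while $|\overline{\varphi}_i|$ is at most of order $\tilde{z}_i^{(1)}/\tilde{z}_i^{(2)}$. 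In Case 4, the opposite pathology occurs: $\overline{v}_i$ tends to a finite positive constant, whereas $\overline{\varphi}_i\to+\infty$ (because $\tilde{z}_i^{(1)}\to+\infty$ with $\tilde{z}_i^{(2)}$ bounded), so $|\overline{v}_i/\overline{\varphi}_i|\to 0$, giving $\tilde{F}_i\to 1/2$, $\tilde{f}_i\to 1/\sqrt{2\pi}$, and the dominant term $|\overline{\varphi}_i|/\sqrt{2\pi}$ drives the expectation to $+\infty$.

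Case 1 is the analytically most delicate step and the main obstacle, because the leading term $\overline{v}_i\tilde{F}_i$ and the truncation correction $|\overline{\varphi}_i|\tilde{f}_i$ remain of comparable order, so neither collapses and both must be tracked. The hypothesis $\overline{v}_i/\overline{\varphi}_i\to L$ immediately yields $\tilde{F}_i\to 1-F(-|L|)$ and $\tilde{f}_i\to f(-|L|)$, which coincide with the $1-F(-L)$ and $f(-L)$ written in the statement under the implicit nonnegative-sign convention on $L$. For $\overline{\varphi}_i$, I would divide numerator and denominator by $\tilde{z}_i^{(2)}$ and use $\tilde{z}_i^{(2)}\to+\infty$ to obtain $\overline{\varphi}_i \sim \sigma_i \tilde{z}_i^{(1)}/\tilde{z}_i^{(2)}$, so that $|\overline{\varphi}_i|\to \sigma_i\lim|\tilde{z}_i^{(1)}|/\tilde{z}_i^{(2)}$; combined with $\overline{v}_i\to\zeta_i^{(*)}$, this produces the displayed formula. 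The main technical bookkeeping throughout the proof is thus the sign/absolute-value convention inside $F$ and $f$, together with the verification in Cases 3 and 5 that $|\overline{\varphi}_i|\tilde{f}_i\to 0$, which is what guarantees the clean convergence to the true parameter $\zeta_i^{(*)}$.
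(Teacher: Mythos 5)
Your proposal is correct and follows essentially the same route as the paper: both start from the closed-form conditional expectation $\overline{v}_i\tilde F_i+|\overline{\varphi}_i|\tilde f_i$ of Proposition~\ref{prop:expect_v_long_memory}, observe that the argument of $F$ and $f$ is exactly the hypothesized ratio, and then reduce each of the five cases to elementary limit arithmetic using the continuity of $F$ and $f$ (with the Gaussian correction term vanishing in cases (3) and (5), dominating in case (4), and remaining comparable in case (1)). Your handling of the sign convention and of the asymptotics $|\overline{\varphi}_i|\sim\sigma_i|\tilde z_i^{(1)}|/\tilde z_i^{(2)}$ in case (1) matches the statement as written.
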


This proposition reveals that under the path-dependent assumption, there are two possible dynamics of input decisions in which the expected belief of the returns to scale converges to $\zeta_i^{(*)}$, which are case (3) and case (5).  While case (5) corresponds to sectors with non-decaying labor demands (with $l_i(t) > 1$ when $t$ grows large), case (3)  corresponds to sectors whose demand for labor input follows very specific decay patterns. In fact, case (3) requires the sequence of labor inputs $\{l_i(t)\}_{t}$ to decrease sufficiently quickly (as to make $\{\tilde{z}_{i}^{(1)}(t)\}_{t}$ converge) but not too quickly (as to make $\{\tilde{z}_{i}^{(2)}(t)\}_{t}$ diverge). This means that the unbiasedness of $\zeta_i(t)$ in case (3) can only happen for a sector whose demand for labor input follows a very specific decay pattern, which renders case (3) of limited economic significance.\footnote{As an example, a possible converging sequence of labor input is of the form $l_i(t) = e^{1/t^p}$, for $p \in (1/2, 1]$. Since the equilibrium labor input \eqref{eq:labor_eq} implies that $l_i(t)$ is a continuous random variable (induced by $\hat{\zeta}_i(t)/\eta_i(t)^{1-\alpha}$) with support $\rset_+$, the probability of $\{l_i(t)\}_{t}$ to follow the decay pattern of a converging sequence can be numerically computed.} 

Another very specific scenario under which the expected belief of the returns to scale converges to $\zeta_i^{(*)}$ is provided in case (2), but only under the unique condition $\lim_{t \to +\infty} \tilde{z}_{i}^{(1)}(t) = 0$, $\lim_{t \to +\infty} \tilde{z}_{i}^{(2)}(t) > 0$ and $\sigma \to 0$. 
Hence, this very specific scenario has limited economic significance. 

Still focusing on case (2), it is worth noticing that if $\lim_{t \to +\infty} \tilde{z}_{i}^{(1)}(t) \neq 0$, then
\begin{equation}\label{eq:v_expected_long_mem}
\underset{t \rightarrow +\infty }\lim \mathbb{E}[\zeta_i(t)~|~ L_i(t-1)] =
      \dfrac{\Big( \zeta^{(0)}_i  + \gamma_i \zeta_i^{(*)}L_2\Big)G_i(L_1,L_2)+\sigma_i \gamma_i L_1 g_i(L_1,L_2)}{1+\gamma_i L_2} . 
\end{equation}
Since $G_i(L_1,L_2) \in [0, 1]$ and  $g_i(L_1,L_2) \in [0, 1/\sqrt{2\pi}]$, when the magnitude of idiosyncratic productivity shocks approaches zero, we have:
$$
\underset{\sigma_i \rightarrow 0 }{\lim}~ \underset{t \rightarrow +\infty }{\lim} G_i(L_1,L_2) = 1,
$$
which implies
\begin{equation}\label{eq:v_expected_sigma_0}
    \underset{\sigma_i \rightarrow 0 }{\lim}  ~ \underset{t \rightarrow +\infty }{\lim} \mathbb{E}[\zeta_i(t)~|~ L_i(t-1)] = \zeta^{(0)}_i + \zeta_i^{(*)}.
\end{equation}

Therefore, for a sector whose demand for labor quickly decays, its limit belief has a positive bias when the magnitude of productivity shocks approaches zero. Also this case has a limited economic significance, as it concerns disappearing sectors with disappearing productivity shocks.

An economically plausible case is that of sectors with non-quickly-decaying labor demands, corresponding to case (1) in Proposition \ref{prop:long_memory_learning}. Since when $|\tilde{z}_{i}^{(1)}(t)|$ and $\tilde{z}_{i}^{(2)}(t)$ diverge, one has the limit of $|\tilde{z}_{i}^{(1)}(t)|/\tilde{z}_{i}^{(2)}(t) \leq 1$, then the limit MAP estimator of these sectors systematically underestimates the true returns to scale when the magnitude of productivity shocks approaches zero:
$$
\underset{\sigma_i \rightarrow 0 }{\lim}  ~ \underset{t \rightarrow +\infty }{\lim} \mathbb{E}[\zeta_i(t)~|~ L_i(t-1)] = \zeta_i^{(*)}(1- F(-L)).
$$

We recall that in our model, each firm $i$ knows that her scale parameter $\zeta_{i}^{(\ast)}$ belongs to the interval $\lbrack\underline{\zeta},\overline{\zeta}]$ with $\underline{\zeta}<1$ and $\overline{\zeta}>0$. Therefore, by construction firms dismiss the empirical estimation of $\zeta_{i}(t)$ and adopt a rule of thumb when the MAP estimator is outside of this adopted range. This implies that, even in the case of successive bad shocks that lead to $\zeta_{i}(t)=0$ during one or several periods, there is no question of the survival of the firm as it appears in \cite{blume2006if} for traders. In case of an estimation such that $\zeta_{i}(t)=0$, the firm behaves according to $\underline{\zeta}$.

Overall, Proposition \ref{prop:long_memory_learning} highlights a fundamental insight: in a path-dependent framework, unbiased convergence is exceptional rather than the rule, and the economically plausible scenario is that of sectors with sustained labor demand (case (5)). This is true even when the magnitude of the idiosyncratic shocks approaches zero.
The next proposition affirms that this case is possible under the equilibrium conditions of our multi-sector model.

\begin{proposition}[Demographic expansion (path-dependent)]\label{prop:demography_long_mem} By the equilibrium conditions \eqref{eq:labor_eq}--\eqref{eq:equilibrium_w}, any monotonically increasing demographic path $\{\delta(t)\}_t$ (with $\delta(t) \geq \delta(t-1) + 1$) satisfies the condition of case (5) of Proposition \ref{prop:long_memory_learning}. 
\end{proposition}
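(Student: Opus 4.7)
The plan is to propagate the demographic growth through the equilibrium system to show that each sector's labor input diverges, and then to translate that divergence into the three limits required by case~(5) of Proposition~\ref{prop:long_memory_learning}.

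First I would work at the wage level. The monotonicity assumption $\delta(t)\ge\delta(t-1)+1$ yields $\delta(t)\to+\infty$, and under the standing convention that $l_0(t)$ does not absorb the entire demographic expansion, one has $\delta(t)-l_0(t)\to+\infty$. The equilibrium wage equation \eqref{eq:equilibrium_w} then forces $L_t(w(t))\to+\infty$, and by Lemma~\ref{lemma:L}, $L_t$ is a continuous strictly decreasing bijection of $\mathbb{R}_+$ onto itself with $\lim_{w\to 0}L_t(w)=+\infty$. Therefore $w(t)\to 0$.

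Second, I would propagate this to firm-level inputs. From \eqref{eq:labor_eq}, since $\hat{\zeta}_i(t)\in[\underline{\zeta},\overline{\zeta}]\subset(0,1)$ and $\alpha\in(0,1)$, the exponent $1/(1-\hat{\zeta}_i(t)\alpha)$ is bounded below by $1$; once $w(t)$ is small enough that $\mathbb{E}[\eta_i(t)^{\alpha}]\underline{\zeta}/w(t)>1$, raising a number greater than $1$ to a power greater than $1$ preserves the inequality, giving
\[
l_i(t)\;\ge\;\frac{\mathbb{E}[\eta_i(t)^{\alpha}]\,\underline{\zeta}}{w(t)}\;\longrightarrow\;+\infty,
\]
so $z_i(t)=\log l_i(t)\to+\infty$ for every $i\in\mathcal{N}$. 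Consequently $\tilde z_i^{(1)}(t)=\sum_{\ell=1}^{t-1}z_i(\ell)$ and $\tilde z_i^{(2)}(t)=\sum_{\ell=1}^{t-1}z_i(\ell)^2$ both diverge to $+\infty$, and eventually $\tilde z_i^{(1)}(t)>0$ so the absolute value in the statement of case~(5) can be dropped.

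Third, for the divergence of the ratio in case~(5), I would decompose
\[
\frac{\zeta^{(0)}_i+\gamma_i\zeta_i^{(\ast)}\tilde z_i^{(2)}(t)}{\gamma_i\sigma_i\,\tilde z_i^{(1)}(t)}
\;=\;
\frac{\zeta^{(0)}_i}{\gamma_i\sigma_i\,\tilde z_i^{(1)}(t)}\;+\;\frac{\zeta_i^{(\ast)}}{\sigma_i}\cdot\frac{\tilde z_i^{(2)}(t)}{\tilde z_i^{(1)}(t)}.
\]
The first summand vanishes since $\tilde z_i^{(1)}(t)\to+\infty$. The ratio $\tilde z_i^{(2)}(t)/\tilde z_i^{(1)}(t)$ is a weighted average of the eventually-positive values $z_i(\ell)$ with weights $z_i(\ell)/\tilde z_i^{(1)}(t)$ summing to one, and a standard Ces\`aro-type argument (splitting off a finite initial block and using $z_i(\ell)\to+\infty$) shows that this weighted mean diverges to $+\infty$. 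Combined with the previous step, all three limits in case~(5) are established.

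The main obstacle is the first step: the statement as written leaves the exact role of $l_0(t)$ implicit, and a fully rigorous argument requires spelling out the condition under which demographic expansion translates into a diverging residual labor supply $\delta(t)-l_0(t)$ available to the $n$ productive sectors (for instance, boundedness of $l_0(t)$, or $\limsup_t l_0(t)/\delta(t)<1$). Once that minor convention is pinned down, the remainder of the proof is essentially algebraic, driven by Lemma~\ref{lemma:L} together with the explicit lower bound on $l_i(t)$.
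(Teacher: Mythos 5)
Your proposal is correct and follows the same overall strategy as the paper's proof (demographic expansion forces the residual labor supply to diverge, hence the equilibrium wage falls, hence each $l_i(t)$ diverges, hence $\tilde z_i^{(1)}$ and $\tilde z_i^{(2)}$ diverge, and the ratio in case (5) diverges because $\tilde z_i^{(2)}/\tilde z_i^{(1)}$ behaves like an average of the diverging $z_i(\ell)$), but the technical execution differs. The paper proceeds by implicit differentiation of $w^*$ with respect to $\delta(t)$ through the market-clearing identity, establishes monotonicity of $Z_{i,t}^{(1)}$ and $Z_{i,t}^{(2)}$ beyond a threshold $\bar\delta(t)$, and closes with a L'H\^opital comparison of their derivatives, yielding $2(\zeta_i^{(*)}/\sigma_i)\ln\bigl(\xi_{i,t}\,w^*{}^{\kappa_{i,t}}\bigr)\to+\infty$. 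You instead use the explicit lower bound $l_i(t)\ge \mathbb{E}[\eta_i^\alpha]\underline{\zeta}/w(t)$ (valid once the base exceeds one, since the exponent $1/(1-\hat\zeta_i(t)\alpha)\ge 1$) and a discrete Ces\`aro/Stolz argument for $\tilde z_i^{(2)}/\tilde z_i^{(1)}$; this is more elementary, avoids the differentiability bookkeeping, and incidentally sidesteps a sign slip in the paper's ``Divergence'' paragraph, which asserts $\lim_{d\to+\infty}w^*(d)=+\infty$ in contradiction with Lemma~\ref{lemma:L} and with the limits stated two lines earlier (the correct direction, which you use, is $w^*\to 0$ so that $w^{\kappa_{i,t}}\to+\infty$ for $\kappa_{i,t}<0$). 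Your caveat about $l_0(t)$ is well taken: both your argument and the paper's actually operate on $d(t)=\delta(t)-l_0(t)$, so the proposition as stated implicitly requires that $l_0(t)$ not absorb the demographic growth; making that explicit (e.g.\ $l_0(t)$ bounded or $\limsup_t l_0(t)/\delta(t)<1$) is a genuine clarification rather than a defect of your proof.
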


While the idea of unlimited demographic expansion constitutes a peculiar (and only theoretically plausible) case of unbiasedness, a less restrictive conceptualization of learning is provided in the following proposition, showing that cases (3) and (4) of Proposition \ref{prop:long_memory_learning}  (for which the expected belief converges to $\zeta_i^{(*)}$ or diverges to $+\infty$, respectively) are both associated to a correct mode of the MAP estimator. 
 
\begin{proposition}[Convergence of the mode (path-dependent)]\label{prop:convergence_mode_long_memory} Let $\zeta_i(0) \geq 0$ and $\zeta_i^{(*)} > 0$, for all $i \in \mathcal{N}$. Let us define the ordered set 
\begin{equation}\label{eq:Psi_mode}
    \overline{\Psi}_i = \mathbb{N}/\left\{t \in \nset ~:~ |\overline{\varphi}_i(L_i(t))|\sqrt{ 2 \pi} F\left(-\left|\dfrac{\overline{v}_i(L_i(t))}{\overline{\varphi}_i(L_i(t))}\right|\right)> 1 \right\},
\end{equation}
\Rew{and index its elements by $\varrho(t)$, for $t \in \mathbb{N}$. There exists a sub-sequence $\{\overline{v}_i(\varrho(t))\}_{t}$ such that} 
$$
    \underset{t \rightarrow +\infty }\lim \mathbb{M}[\zeta_i(t+1) ~|~ L_i(t) ]=\left\{
    \begin{array}{ll}
      0 & \text{if } \overline{\Psi}_i\text{ is a finite set,}\\
      \zeta_i^{(*)} & \text{if } \overline{\Psi}_i\text{ is an infinite set and } \underset{t \rightarrow +\infty }\lim z_{i}(t) =+\infty, \\
      \dfrac{\zeta_i^{0}+\gamma_i \zeta_i^{(*)}L}{1+\gamma_i L}  & \text{if } \overline{\Psi}_i\text{ is an infinite set and }\underset{t \rightarrow +\infty }\lim z_{i}(t)=L <+\infty.
    \end{array}\right.
$$
\end{proposition}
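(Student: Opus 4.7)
The plan is to apply the explicit mode formula of Proposition~\ref{prop:mode_v_long_memory} and reduce the analysis to the asymptotic behaviour of $\overline{v}_i(L_i(t))$ along the subsequence $\varrho(t)$ enumerating $\overline{\Psi}_i$. The starting observation is that, by Proposition~\ref{prop:mode_v_long_memory}, for every $t$ the mode equals either $\overline{v}_i(L_i(t))$, precisely when $t\in\overline{\Psi}_i$, or $0$ otherwise. This dichotomy makes the subsequential formulation natural: whenever $\overline{\Psi}_i$ is infinite, enumerating it in increasing order yields a subsequence along which the mode is identically the deterministic expression~\eqref{eq:v_bar}, so $\lim_t \mathbb{M}[\zeta_i(\varrho(t)+1)\mid L_i(\varrho(t))] = \lim_t \overline{v}_i(L_i(\varrho(t)))$.

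I would then handle the three branches in order. If $\overline{\Psi}_i$ is finite, then for every $t>\max\overline{\Psi}_i$ the defining condition fails, the mode is identically $0$, and the full sequence (hence any subsequence) converges to $0$. If $\overline{\Psi}_i$ is infinite and $z_i(t)\to+\infty$, eventually $z_i(\ell)^{2}$ exceeds any fixed bound, forcing $\tilde{z}^{(2)}_i(\varrho(t))\to+\infty$; dividing numerator and denominator of
$\overline{v}_i(L_i(\varrho(t)))=(\zeta_i^{(0)}+\gamma_i\zeta_i^{(*)}\tilde{z}^{(2)}_i(\varrho(t)))/(1+\gamma_i\tilde{z}^{(2)}_i(\varrho(t)))$ by $\gamma_i\tilde{z}^{(2)}_i(\varrho(t))$ yields the limit $\zeta_i^{(*)}$. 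If $\overline{\Psi}_i$ is infinite and the corresponding cumulative quantity $\tilde{z}^{(2)}_i(\varrho(t))$ converges to a finite $L$ (as in case~(2) of Proposition~\ref{prop:long_memory_learning} with $L_1=0$), continuity of the rational map $x\mapsto (\zeta_i^{(0)}+\gamma_i\zeta_i^{(*)} x)/(1+\gamma_i x)$ at $x=L$ delivers the third branch $(\zeta_i^{(0)}+\gamma_i\zeta_i^{(*)} L)/(1+\gamma_i L)$.

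The main obstacle I anticipate is the qualitative step of justifying that the indicator condition defining $\overline{\Psi}_i$ is consistent with the stated asymptotic regimes so that the subsequence $\varrho(t)$ is actually well populated. Concretely, when $\tilde{z}^{(2)}_i(t)$ diverges while $|\tilde{z}^{(1)}_i(t)|/\tilde{z}^{(2)}_i(t)$ remains bounded, one has $\overline{\varphi}_i(L_i(t))\to 0$ and $|\overline{v}_i/\overline{\varphi}_i|$ blows up, so the product $|\overline{\varphi}_i|\sqrt{2\pi}\,F(-|\overline{v}_i/\overline{\varphi}_i|)$ is the product of a vanishing factor and a super-polynomially decaying Gaussian tail; it therefore falls below $1$ cofinitely, which matches the second branch. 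When instead $\tilde{z}^{(2)}_i(t)$ stays bounded, $\overline{\varphi}_i$ does not vanish and one verifies by an explicit tail bound that the defining inequality is satisfied along at least an infinite subsequence, which is precisely $\varrho(t)$; the limit computation then proceeds by continuity exactly as in the third branch. Once this book-keeping is complete, the three hypotheses on $\overline{\Psi}_i$ and $z_i(t)$ pair cleanly with the three limit expressions.
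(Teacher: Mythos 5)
Your proposal follows essentially the same route as the paper's own proof: invoke the mode dichotomy from Proposition~\ref{prop:mode_v_long_memory}, pass to the subsequence $\varrho(t)$ enumerating $\overline{\Psi}_i$ where the mode equals $\overline{v}_i(L_i(\cdot))$, and then compute the limit of the rational expression in $\tilde{z}^{(2)}_i$ in each of the three regimes. Your additional care about whether the defining inequality of $\overline{\Psi}_i$ is actually satisfied cofinitely or infinitely often in each regime goes slightly beyond what the paper checks (the paper simply treats finiteness/infiniteness of $\overline{\Psi}_i$ as case hypotheses), but it is consistent with, and does not alter, the argument.
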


The ordered set $\overline{\Psi}_i$ is an auxiliary representation of the collection of time periods for which $\mathbb{M}[\zeta_i(t) ~|~ L_i(t) ] > 0$. This allows selecting a sub-sequence to study the convergence of the original sequence, as detailed in Appendix A. Under the limit conditions of $\tau_i \rightarrow 0$, $\overline{\Psi}_i$ is an infinite set. Note that, since $\sum_{i=1}^n l_i(t) = \delta(t) - l_0(t)$, we have that if $\underset{t \rightarrow +\infty }\lim \delta(t) < +\infty$ then $\underset{t \rightarrow +\infty }\lim z_{i}(t) <+\infty$. Therefore, the only economically feasible case is
$$
\underset{t \rightarrow +\infty }\lim \mathbb{M}[\zeta_i(t+1) ~|~ L_i(t)] =\dfrac{\zeta_i^{0}+\gamma_i \zeta_i^{(*)}L}{1+\gamma_i L}  .
$$
Under the limit conditions of $\sigma_i \rightarrow 0$ or $\tau_i \rightarrow + \infty$, we have
\begin{equation}\label{eq:mode_long_mem_sigma_0}
\underset{\sigma_i \rightarrow 0 }{\lim}~ \underset{t \rightarrow +\infty }\lim \mathbb{M}[\zeta_i(t+1) ~|~ L_i(t) ] = \zeta_i^{(*)} \quad \mbox{ and } \quad \underset{\tau_i \rightarrow + \infty }{\lim}~ \underset{t \rightarrow +\infty }\lim \mathbb{M}[\zeta_i(t+1) ~|~ L_i(t) ] = \zeta_i^{(*)}.
\end{equation}

This result suggests that, within a path-dependent framework, smaller idiosyncratic production shocks enable a more precise estimation of the true returns to scale. Combined with Proposition \ref{prop:demography_long_mem}, this finding underscores the robustness of market mechanisms, extending their applicability beyond well-established deterministic cases \citep{Vives1996589}. This is further illustrated by the following stylized numerical example.

\begin{example}\label{ex:1} We consider the illustrative case $n=1$, $\zeta_{i}^{(\ast)}= 0.5$, $\alpha = 0.5$, and $m_i = 0$ \RRew{(also used in Appendix \ref{Section:appendix_D})}. We set $\sigma_i = 0.1$. Figures \ref{fig:Example2_1} and \ref{fig:Example2_2} show the dynamics of $v_1(t)$ along a time horizon $t \in \{1, \ldots, 500\}$, for different values of $\tau_i$ and $\zeta_{i}^{(0)}$. Visibly, while the convergence of $\zeta_{i}(t)$ to $\zeta_{i}^{(\ast)}$ results under all scenarios in Figures \ref{fig:Example2_1} and \ref{fig:Example2_2}, the speed of convergence strongly depends on $\tau_i$.  Smaller values of $\tau_i$ (Figure \ref{fig:Example2_2}) entail that firms do not have strong prior beliefs about the returns to scale (high prior uncertainty).

\begin{figure}[H]
        \centering
        \begin{subfigure}[b]{0.4\textwidth}     
        \includegraphics[scale=0.68]{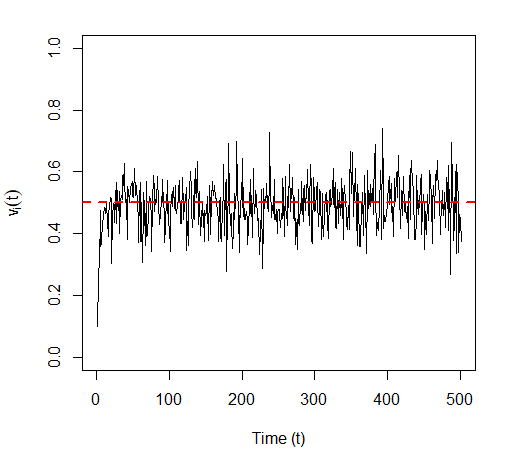}
        \caption{\footnotesize Setting $\zeta_{i}^{(0)} = 0.1$.}
        \end{subfigure}
        \begin{subfigure}[b]{0.4\textwidth}
        \includegraphics[scale=0.68]{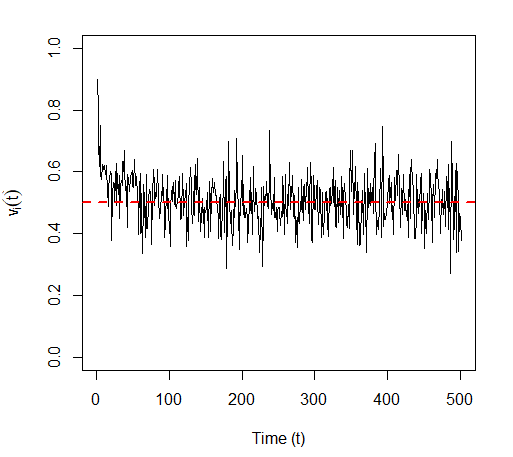}
        \caption{Setting $\zeta_{i}^{(0)} = 0.9$.}
        \end{subfigure} 
        \caption{\footnotesize Learning path $\{\zeta_i(t)\}_t$ with $\tau=0.1$.  \label{fig:Example2_1} }
\end{figure}

\begin{figure}[H]
        \centering
        \begin{subfigure}[b]{0.4\textwidth}     
        \includegraphics[scale=0.68]{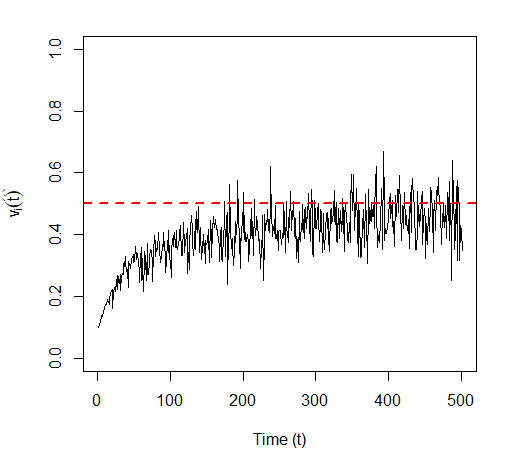}
        \caption{\footnotesize Setting $\zeta_{i}^{(0)} = 0.1$.}
        \end{subfigure}
        \begin{subfigure}[b]{0.4\textwidth}
        \includegraphics[scale=0.68]{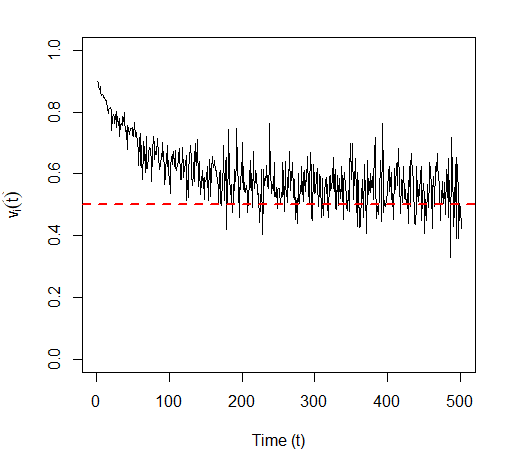}
        \caption{Setting $\zeta_{i}^{(0)} = 0.1$.}
        \end{subfigure} 
        \caption{\footnotesize Learning path $\{\zeta_i(t)\}_t$ with $\tau=0.01$.  \label{fig:Example2_2} }
\end{figure}    

Table \ref{tab:Ex2} reports the difference between $\mathbb{E}[\zeta_i(500)]$ and $\zeta_{i}^{(\ast)}$, for different values of $\sigma_i$ and $\tau_i$.  In line with Figures \ref{fig:Example2_1} and \ref{fig:Example2_2}, this supports our theory about the convergence of the learning path (as established in Propositions \ref{prop:long_memory_learning} and \eqref{prop:convergence_mode_long_memory}), establishing that  within the path-dependent framework smaller idiosyncratic production shocks and high prior uncertainty enable to accurately approach the true returns to scale, which is fully consistent with the limit modes in \eqref{eq:mode_long_mem_sigma_0}. 

\begin{table}[H]
\centering
\scalebox{0.98}{   
    \begin{tabular}{|cc|cc|cc|}
        \hline
        $\tau_1$ & $\sigma_1$ & \multicolumn{2}{c|}{$\zeta_{i}^{(\ast)}=0.4$} & \multicolumn{2}{c|}{$\zeta_{i}^{(\ast)}=0.6$} \\ \cline{3-6}
        & & $\zeta_{i}^{(0)} = 0.1$ & $\zeta_{i}^{(0)} = 0.9$ & $\zeta_{i}^{(0)} = 0.1$ & $\zeta_{i}^{(0)} = 0.9$ \\ 
        \hline
        $0.01$ & $0.01$ & 0.0036042 & 0.0053890 & 0.0023717 & 0.0013001 \\
        $0.01$ & $0.05$ & 0.0904860 & 0.0995278 & 0.0749700 & 0.0260160 \\
        $0.01$ & $0.10$ & 0.2788238 & 0.2530222 & 0.2803820 & 0.0728513 \\\hline
        $0.05$ & $0.01$ & 0.0001426 & 0.0002194 & 0.0000921 & 0.0000526 \\
        $0.05$ & $0.05$ & 0.0036030 & 0.0053845 & 0.0023696 & 0.0012986 \\
        $0.05$ & $0.10$ & 0.0146437 & 0.0204167 & 0.0100418 & 0.0049896 \\\hline
        $0.10$ & $0.01$ & 0.0000356 & 0.0000548 & 0.0000230 & 0.0000131 \\
        $0.10$ & $0.05$ & 0.0008935 & 0.0013649 & 0.0005800 & 0.0003278 \\
        $0.10$ & $0.10$ & 0.0035991 & 0.0053703 & 0.0023629 & 0.0012937
 \\\hline
    \end{tabular}}
    \caption{\footnotesize Relative absolute differences between the $\zeta_i(t)$ and $\zeta_i^{(\star)}$. \label{tab:Ex2}}
\end{table}
\end{example}

In summary, the results presented in this section, particularly Propositions \ref{prop:v_learning_long_mem}--\ref{prop:convergence_mode_long_memory}, underscore the effectiveness of the market mechanism in facilitating knowledge acquisition under conditions of noisy production \citep{Vives1996589}. These findings should also be viewed in the context of the important contributions by \cite{collin2016parameter} and \cite{babiak2024parameter}, which highlight how parameter learning amplifies the effects of shocks and offers a valuable framework for better replicating certain empirical observations in asset pricing. The former study focuses on a simplified financial investment model with Epstein-Zin preferences, while the latter employs a standard real business-cycle model. In both scenarios, parameter learning enhances the representation of financial variables by intensifying the impact of shocks through the learning process.

Example \ref{ex:2} below examines how learning impacts the labor share of income (the portion of GDP allocated to workers versus capital) and demonstrates that uncertainty in returns to scale and learning increase its variance. A rich body of literature explores labor share fluctuations, attributing them to factors such as capital- or skill-based technological change, globalization (via the Stolper-Samuelson theorem), financialization boosting capital returns, and shifts in labor's bargaining power \citep{stockhammer2013have}. Our analysis introduces a new perspective about the labor share fluctuations: firms' learning processes about their technology. 

\begin{example}\label{ex:2} Following the same parametrization as in Example \ref{ex:1}, we examine the labor share of income (over a time series with $T = 100$ observations) for three different economies with one sector. These economies differs only with respect to the initial knowledge of the returns to scale parameter: $\zeta^{(0)} = 0.1$, $\zeta^{(0)} = 0.5$ and $\zeta^{(0)} = 0.9$.  We note that 
$$
GDP(t) = \underbrace{w^*(t) \delta(t)}_{\mbox{labor income}} +  ~~ \underbrace{l_0 + p_1^*(t) x_i(t) - w^*(t) \delta(t) }_{\mbox{profit}}.
$$
Figures \ref{fig:Example3} shows the histogram of $w^*(t) \delta(t)/GDP(t)$, over a time series with $100$ observations. The pink histograms depict the labor share of income when firms has exact knowledge of $v^{(\star)}$, as opposed to cases (blue histograms) when $\zeta_i(t)$ is dynamically updated based on the MAP estimation \eqref{eq:v_learning_n1_long_mem}. 

\begin{figure}[H]
        \centering
        \begin{subfigure}[b]{0.33\textwidth}     \includegraphics[scale=0.6]{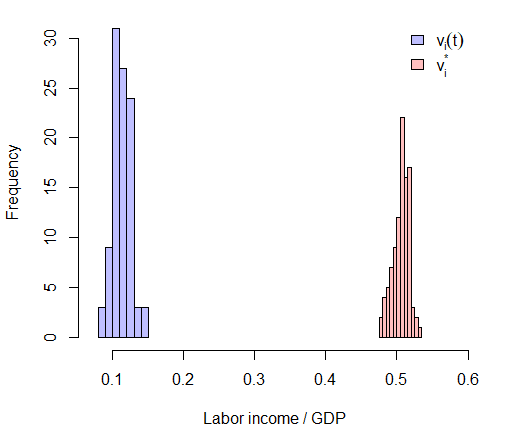}
        \caption{\footnotesize Setting $\zeta^{(0)} = 0.1$. \label{fig:Example3_1}}
        \end{subfigure}
        \begin{subfigure}[b]{0.32\textwidth}
        \includegraphics[scale=0.6]{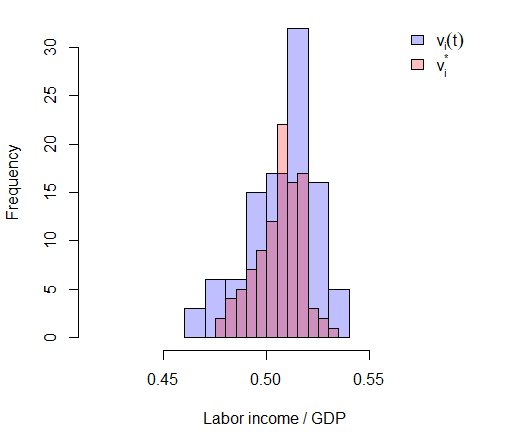}
        \caption{\footnotesize Setting $\zeta^{(0)} = 0.5$. \label{fig:Example3_2}}
        \end{subfigure}
        \begin{subfigure}[b]{0.33\textwidth}
        \includegraphics[scale=0.6]{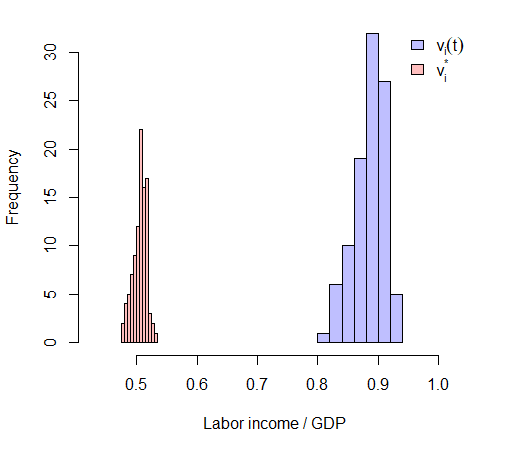}
        \caption{\footnotesize Setting $\zeta^{(0)} = 0.9$. \label{fig:Example3_3}}
        \end{subfigure}
        \caption{\footnotesize Histograms of $w(t) \delta(t)/GDP$ for three different economies. \label{fig:Example3} }
\end{figure}

Table \ref{tab:Ex3} summarizes the distributions in Figure \ref{fig:Example3} in terms of means and standard deviations.

\begin{table}[H]
\centering
\scalebox{0.98}{   
    \begin{tabular}{|l|lll|c|}
        \cline{2-5}
        \multicolumn{1}{c|}{} & $\zeta_{i}^{(0)}=0.1$ & $\zeta_{i}^{(0)}=0.5$ & $\zeta_{i}^{(0)}=0.9$ & $\zeta_{i}^{(\star)}=0.5$\\ \hline
        Mean & 0.1127378 & 0.5066034 & 0.8853479 & 0.5059747 \\
        Standard deviation   & 0.0122680 & 0.0171584 & 0.0258762 & 0.0114483 \\\hline
    \end{tabular}}
    \caption{\footnotesize Means and standard deviations for the histograms of $w(t) \delta(t)/GDP$ in Figure \ref{fig:Example3}. \label{tab:Ex3}}
\end{table}

We observe that uncertainty on the returns to scale and learning translate into larger standard deviations of the labor share of income, compared to the case in which firms have exact knowledge of $v^{(\star)}$. However, the average value of $w^*(t) \delta(t)/GDP$ is largely driven by over-estimation versus under-estimation of $\zeta_{i}^{(\star)}$. Hence, when firms' estimates align closely with true values, learning adds a source of fluctuation. However, when estimates deviate significantly, as with the advent of new technologies, learning drives a prolonged adjustment of factor shares toward their long-run equilibrium.

\end{example}


\section{Path-independent (Markovian) learning}\label{sec:learning_dynamics_short}


This section examines a path-independent learning method in which firms assimilate all past information into a new prior distribution but rely solely on the current production data as their information set. This framework encapsulates the concept of a short-memory process, where the system re-initializes around a new status-quo belief at each step of the decision-estimation dynamics. Formally, the point-wise estimation resulting from this path-independent learning method is derived by imposing the following assumptions on the information set and the initial knowledge.

\begin{assumption}\label{ass:4}
At period $t$, firm $i$'s information set is given by the last realized production: $\mathcal{I}_i(t-1) = \{x_i(t-1)\}$.  
\end{assumption}

\begin{assumption}\label{ass:5} At each period $t$, firms re-initialize their probability distribution $\pi_{t,i}(\zeta_i)$ as a zero-truncated Gaussian with parameters $\zeta_{i}(t-1)$ and $\tau_i$.
\end{assumption}

Departing from Assumptions \ref{ass:2} and \ref{ass:3}, the Bayesian update resulting from Assumptions \ref{ass:4} and \ref{ass:5} retains information from only the most recent period by re-initializing the prior distribution as a truncated Gaussian centered at the previous MAP estimation. This reflects a form of memory loss, where all past information is distilled into the prior update $\zeta_{i}(t-1)$. The likelihood function of $\zeta_i(t)$ reduces to the probability distribution over the period $t-1$ production, which is induced by the idiosyncratic shock $\eta_{i}(t-1)$:
\begin{equation}\label{eq:likelihood_PI}
    \mathscr{L}(\zeta_i(t); \, \mathcal{I}_i(t-1) ) = \displaystyle \frac{1}{(2\pi \sigma_i)^{\frac{1}{2}} } \exp \left(- \frac{1}{2 \sigma_i^{2}} \left( \log x_i(t-1) - \log \mu_i(t-1) \right)^2\right), \quad \mbox{ for } t \geq 2.
\end{equation}

The Bayesian posterior distribution of $\zeta_i(t)$ retains the general form given by \eqref{eq:T-posterior}, previously introduced in the context of path-dependent learning. By applying a procedure analogous to that outlined in Propositions \ref{prop:v_learning_long_mem},  \ref{prop:expect_v_long_memory} and \ref{prop:mode_v_long_memory}, with minor algebraic modifications, we derive the MAP estimator (Proposition \ref{prop:v_learning_short_mem} below), its expectation and variance (Proposition \ref{prop:expect_v_short_memory} below), and its mode (Proposition \ref{prop:mode_v_short_memory} below), for the path-independent case.

\begin{proposition}[MAP dynamics (path-independent)]\label{prop:v_learning_short_mem} For each $t \geq 1$, the MAP estimator is
\begin{equation}\label{eq:v_learning_n1_short_mem}
\zeta_i(t+1) = \phi(\zeta_i(t), x_i(t)) = \left\{
\begin{array}{ll}
      \left(\dfrac{ \zeta_i(t) + \gamma_i z_{i}(t)\left(\varepsilon_{i}(t) +  \zeta_i^{(*)} z_{i}(t) \right) }{1 + \gamma_i z_{i}(t)^2} \right)^{+}, & ~\Rew{\mbox{if } l_i(t) > 0},\\[0.6cm]
      \zeta_i(t) & ~ \mbox{otherwise. }
\end{array} \right.
\end{equation}
\end{proposition}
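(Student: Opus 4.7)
The plan is to mimic the proof of Proposition~\ref{prop:v_learning_long_mem}, specialized to the one-observation, sliding-prior setting induced by Assumptions~\ref{ass:4} and~\ref{ass:5}. Because both the single-period log-likelihood \eqref{eq:likelihood_PI} and the truncated-Gaussian log-prior are quadratic in $\zeta_i$, the log-posterior is a strictly concave quadratic on $\rset_+$; its unconstrained maximizer is a closed-form linear combination of the prior mean $\zeta_i(t)$ and the data signal, and the truncation at zero then yields the $(\cdot)^+$ clamp appearing in \eqref{eq:v_learning_n1_short_mem}.

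Concretely, I would first write the log-posterior \eqref{eq:T-posterior} explicitly. Using the reparametrization \eqref{eq:reparam}, the residual $\log x_i(t)-\log\mu_i(t)$ entering \eqref{eq:likelihood_PI} reduces (up to constants independent of $\zeta_i$) to $s_i(t)-\zeta_i z_i(t)$, so after dropping additive constants and restricting to $\zeta_i\ge 0$ the log-posterior reads
\[
\log\pi_{t+1,i}(\zeta_i)\;=\;-\,\frac{1}{2\sigma_i^2}\bigl(s_i(t)-\zeta_i z_i(t)\bigr)^{2}\;-\;\frac{1}{2\tau_i^2}\bigl(\zeta_i-\zeta_i(t)\bigr)^{2}.
\]
Differentiating with respect to $\zeta_i$, setting the derivative to zero, and multiplying through by $\tau_i^2$ produces the single linear equation
\[
\gamma_i\,z_i(t)\bigl(s_i(t)-\zeta_i z_i(t)\bigr)\;-\;\bigl(\zeta_i-\zeta_i(t)\bigr)\;=\;0,
\]
which I would solve by collecting the $\zeta_i$ terms and dividing by $1+\gamma_i z_i(t)^2$. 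This yields the precision-weighted average that constitutes the first branch of \eqref{eq:v_learning_n1_short_mem} once $s_i(t)=\varepsilon_i(t)+\zeta_i^{(*)}z_i(t)$ is substituted.

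Two remaining items close the argument. First, by strict concavity of the log-posterior on $\rset$, whenever the unconstrained root is negative the posterior density is monotone decreasing on $\rset_+$ and the constrained MAP collapses to the boundary $\zeta_i=0$; this delivers the $(\cdot)^+$ operator. Second, in the degenerate case $l_i(t)=0$ the quantity $z_i(t)$ is ill-defined and \eqref{eq:likelihood_PI} is non-informative about $\zeta_i$ (no quadratic term in $\zeta_i$ survives), so the posterior reduces to the truncated-Gaussian prior centered at $\zeta_i(t)$ and the MAP equals $\zeta_i(t)$, giving the second branch of \eqref{eq:v_learning_n1_short_mem}. The only technical subtlety is this boundary/degeneracy bookkeeping; the rest is a strict one-period specialization of Proposition~\ref{prop:v_learning_long_mem} and introduces no new idea beyond replacing the accumulated sums $\sum_{\ell\le t} z_i(\ell)s_i(\ell)$ and $\sum_{\ell\le t} z_i(\ell)^2$ by their single-period analogues $z_i(t)s_i(t)$ and $z_i(t)^2$, and the fixed prior mean $\zeta_i^{(0)}$ by the updated mean $\zeta_i(t)$.
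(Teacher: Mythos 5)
Your derivation is correct and follows essentially the same route as the paper, which proves this result in one line by substituting $\zeta_i^{(0)}\mapsto\zeta_i(t)$ and the accumulated sums $\tilde z_i^{(1)}(t),\tilde z_i^{(2)}(t)\mapsto z_i(t),z_i(t)^2$ into the path-dependent argument of Proposition~\ref{prop:v_learning_long_mem}; you simply write out the one-period quadratic log-posterior, its stationarity condition, and the boundary clamp explicitly (using strict concavity where the paper's Proposition~\ref{prop:v_learning_long_mem} proof invokes KKT, which is equivalent here). No gap.
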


\begin{proposition}[Expectation and variance of the MAP estimator (path-independent)]\label{prop:expect_v_short_memory} For each $t \geq 1$, let us define
$$
\overline{\overline{v}}_{i,t}(L_i(t))  = \dfrac{ \zeta_i(t)  + \gamma_i \zeta_i^{(*)} z_i(t)^2}{1 + \gamma_i z_i(t)^2}, \quad \mbox{ and } \quad
   \overline{\overline{\varphi}}_i(L_i(t))  = \dfrac{\gamma_i  \sigma_i z_i(t)}{1 + \gamma_i z_i(t)^2}.
$$
Conditioned on the $i$-th input decisions up to period $t-1$, we have
\begin{eqnarray}
\mathbb{E}[\zeta_i(t+1)] & = & \overline{\overline{v}}_i(L_i(t)) \tilde{\tilde{F}}_i +  |\overline{\overline{\varphi}}_i(L_i(t))| \tilde{\tilde{f}}_i, \\[0.4cm]
\mathbb{E}[\zeta_i(t+1)^2] & = & \left(\overline{\overline{v}}_i(L_i(t))^2 +  |\overline{\overline{\varphi}}_i(L_i(t))|^2\right) \tilde{\tilde{F}}_i + \overline{\overline{v}}_i(L_i(t)) |\overline{\overline{\varphi}}_i(L_i(t))|\tilde{\tilde{f}}_i,
\end{eqnarray}
where 
$$
\begin{array}{ll}
    \tilde{\tilde{F}}_i & \equiv G_i(z_{i}(t),z_{i}(t)^2) = 1-F\left(-\left|\dfrac{\overline{\overline{v}}_i(L_i(t))}{\overline{\overline{\varphi}}_i(L_i(t))}\right|\right),\\[0.5cm]
    \tilde{\tilde{f}}_i & \equiv g_i(z_{i}(t),z_{i}(t)^2) = f\left(-\left|\dfrac{\overline{\overline{v}}_i(L_i(t))}{\overline{\overline{\varphi}}_i(L_i(t))}\right|\right) .
\end{array}
$$
\end{proposition}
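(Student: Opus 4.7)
The plan is to reduce Proposition \ref{prop:expect_v_short_memory} to a standard moment calculation for the positive part of a univariate Gaussian. Conditional on $L_i(t)$, which together with the Markov update \eqref{eq:v_learning_n1_short_mem} and the past pins down $\zeta_i(t)$, the only remaining source of randomness in $\zeta_i(t+1)$ is the current idiosyncratic shock $\varepsilon_i(t) \sim \mathcal{N}(0, \sigma_i^2)$, and this shock is independent of the conditioning variables. The strategy is to isolate $\varepsilon_i(t)$ inside \eqref{eq:v_learning_n1_short_mem} by a simple algebraic rearrangement, identify the resulting conditional Gaussian law, and then apply the classical formulas for the first and second moments of its positive part.

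I would first expand the numerator in \eqref{eq:v_learning_n1_short_mem} on the branch $l_i(t) > 0$ as $\zeta_i(t) + \gamma_i \zeta_i^{(*)} z_i(t)^2 + \gamma_i z_i(t) \varepsilon_i(t)$ and divide by $1 + \gamma_i z_i(t)^2$, obtaining
\[
\zeta_i(t+1) \;=\; \left( \overline{\overline{v}}_i(L_i(t)) + \overline{\overline{\varphi}}_i(L_i(t)) \cdot \frac{\varepsilon_i(t)}{\sigma_i} \right)^{\!+}.
\]
Conditional on $L_i(t)$, the bracketed quantity is Gaussian with mean $\mu = \overline{\overline{v}}_i(L_i(t))$ and standard deviation $|\varphi| = |\overline{\overline{\varphi}}_i(L_i(t))|$, so $\zeta_i(t+1) = Y^+$ with $Y \sim \mathcal{N}(\mu, \varphi^2)$.

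For such a $Y$, I would invoke the standard positive-part identities
\begin{align*}
\mathbb{E}[Y^+] & = \mu\,\bigl(1 - F(-\mu/|\varphi|)\bigr) + |\varphi|\,f(-\mu/|\varphi|), \\
\mathbb{E}[(Y^+)^2] & = (\mu^2 + \varphi^2)\bigl(1 - F(-\mu/|\varphi|)\bigr) + \mu\,|\varphi|\,f(-\mu/|\varphi|),
\end{align*}
obtained by the change of variable $u = (y - \mu)/|\varphi|$ together with the Mills-type identity $\int_a^{\infty} u^2 \phi(u)\,du = (1-F(a)) + a\,\phi(a)$. Substituting $\mu = \overline{\overline{v}}_i(L_i(t))$ and $\varphi = \overline{\overline{\varphi}}_i(L_i(t))$ yields the two claimed expressions. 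The only bookkeeping point is that the statement writes the arguments of $F$ and $f$ as $-|\overline{\overline{v}}_i/\overline{\overline{\varphi}}_i|$ rather than $-\overline{\overline{v}}_i/|\overline{\overline{\varphi}}_i|$; these coincide because $\overline{\overline{v}}_i(L_i(t)) \geq 0$, which in turn follows from the $(\cdot)^+$ operator enforcing $\zeta_i(t) \geq 0$ and from $\zeta_i^{(*)} > 0$.

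No substantive obstacle is expected: the derivation mirrors the path-dependent argument behind Proposition \ref{prop:expect_v_long_memory} under the substitutions $\zeta^{(0)}_i \mapsto \zeta_i(t)$, $\tilde{z}^{(1)}_i(t) \mapsto z_i(t)$, and $\tilde{z}^{(2)}_i(t) \mapsto z_i(t)^2$. The only subtlety worth monitoring is the sign bookkeeping induced by writing $|\overline{\overline{\varphi}}_i|$ in place of $\overline{\overline{\varphi}}_i$, which is precisely what the non-negativity of $\overline{\overline{v}}_i$ resolves when aligning the formulas with the statement's notation.
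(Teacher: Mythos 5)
Your proposal is correct and follows essentially the same route as the paper: the paper's own proof simply invokes the substitutions $\zeta_i^{(0)} \mapsto \zeta_i(t)$, $\tilde{z}^{(1)}_i(t) \mapsto z_i(t)$, $\tilde{z}^{(2)}_i(t) \mapsto z_i(t)^2$ into Proposition~\ref{prop:expect_v_long_memory}, whose proof is exactly the positive-part-of-a-Gaussian moment computation you carry out. Your remark that the sign bookkeeping is resolved by $\overline{\overline{v}}_i(L_i(t)) \geq 0$ is a valid and slightly cleaner handling of the two-case split the paper performs on the sign of $\overline{\varphi}_i$.
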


\begin{proposition}[Mode of the MAP estimator (path-independent)]\label{prop:mode_v_short_memory}
\Rew{Let $\mathbb{M}[\zeta_i(t) ~|~ L_i(t) ]$ be the mode of the MAP estimator $\zeta_i(t)$, conditioned on the $i$-th input decisions up to period $t$. Based on Proposition \ref{prop:v_learning_short_mem}, we have}
$$
\mathbb{M}[\zeta_i(t+1) ~|~ L_i(t) ]  = 
\begin{cases}
  \overline{\overline{v}}_i(L_i(t)), & \quad \mbox{if } \displaystyle \dfrac{1}{|\overline{\overline{o}}_i(L_i(t))|\sqrt{ 2 \pi}} \geq  F\left(-\left|\dfrac{\overline{\overline{v}}_i(L_i(t))}{\overline{\overline{\varphi}}_i(L_i(t))}\right|\right)\\[0.3cm]
  0 & \quad \mbox{otherwise.}
\end{cases}
$$
\end{proposition}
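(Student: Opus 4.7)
The plan is to exploit the direct structural parallel with Proposition \ref{prop:mode_v_long_memory}, since the only substantive difference is that the path-independent update in Proposition \ref{prop:v_learning_short_mem} depends on the single pair $(z_i(t), z_i(t)^2)$ rather than on the cumulative sums $(\tilde z^{(1)}_i(t), \tilde z^{(2)}_i(t))$. First, I would rewrite the MAP recursion of Proposition \ref{prop:v_learning_short_mem} as
$$
\zeta_i(t+1) \;=\; \Bigl(\overline{\overline{v}}_i(L_i(t)) + \overline{\overline{\varphi}}_i(L_i(t))\,U\Bigr)^{+},
\qquad U = \varepsilon_i(t)/\sigma_i \sim \mathcal{N}(0,1),
$$
by separating the deterministic and stochastic terms in the numerator and dividing by $1+\gamma_i z_i(t)^2$. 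Conditional on $L_i(t)$, this exhibits $\zeta_i(t+1)$ as the positive-part transformation of a Gaussian random variable with mean $\overline{\overline{v}}_i(L_i(t))$ and standard deviation $|\overline{\overline{\varphi}}_i(L_i(t))|$.

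Next, I would compute the two pieces of the resulting mixed distribution. The atom at zero has probability $\mathbb{P}(\zeta_i(t+1)=0\mid L_i(t)) = F\!\bigl(-|\overline{\overline{v}}_i(L_i(t))/\overline{\overline{\varphi}}_i(L_i(t))|\bigr)$, where the absolute value absorbs the sign of $\overline{\overline{\varphi}}_i$ by symmetry of the standard normal. The continuous part on $(0,\infty)$ coincides with the Gaussian density restricted to $\rset_+$; its supremum equals $\tfrac{1}{|\overline{\overline{\varphi}}_i(L_i(t))|\sqrt{2\pi}}$ and is attained at $y=\overline{\overline{v}}_i(L_i(t))$ whenever $\overline{\overline{v}}_i(L_i(t))\ge 0$ (and at $y\to 0^+$ otherwise, with the same supremum appearing on the boundary in the limit). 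Adopting the same atom-versus-peak comparison convention used in the proof of Proposition \ref{prop:mode_v_long_memory}, the mode is located at $\overline{\overline{v}}_i(L_i(t))$ precisely when the density peak dominates the atom, i.e.\ when $\tfrac{1}{|\overline{\overline{\varphi}}_i(L_i(t))|\sqrt{2\pi}} \ge F\!\bigl(-|\overline{\overline{v}}_i(L_i(t))/\overline{\overline{\varphi}}_i(L_i(t))|\bigr)$, and at $0$ otherwise, yielding the stated case distinction.

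The main obstacle is essentially the same delicate point as in the path-dependent proof: the conditional distribution of $\zeta_i(t+1)$ is a mixture of an atom and an absolutely continuous component, so one must fix a meaningful convention for the ``mode'' of such an object and verify that the transition between the two regimes is consistent with the inequality stated above. Everything else is a transcription of the algebra already carried out for Proposition \ref{prop:mode_v_long_memory}, with $(\overline v_i,\overline\varphi_i)$ replaced by $(\overline{\overline{v}}_i,\overline{\overline{\varphi}}_i)$; the typographical $\overline{\overline o}_i$ in the statement should be read as $\overline{\overline{\varphi}}_i$.
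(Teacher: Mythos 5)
Your proposal is correct and follows essentially the same route as the paper's proof: both represent $\zeta_i(t+1)\mid L_i(t)$ as the positive part of a Gaussian with mean $\overline{\overline{v}}_i(L_i(t))$ and standard deviation $|\overline{\overline{\varphi}}_i(L_i(t))|$, and then compare the peak of the continuous density, $\bigl(|\overline{\overline{\varphi}}_i(L_i(t))|\sqrt{2\pi}\bigr)^{-1}$ attained at $\overline{\overline{v}}_i(L_i(t))$, against the atom $F\bigl(-\overline{\overline{v}}_i(L_i(t))/\overline{\overline{\varphi}}_i(L_i(t))\bigr)$ at zero. Your reading of $\overline{\overline{o}}_i$ as a typographical slip for $\overline{\overline{\varphi}}_i$ is also consistent with the paper.
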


Notably, the functional form of the MAP estimator provided in Proposition \ref{prop:v_learning_short_mem}, as well as its expectation, mode and variance, provided in Propositions \ref{prop:expect_v_short_memory} and \ref{prop:mode_v_short_memory}, are equivalent to the ones studied in the context of path-dependent learning (see Propositions \ref{prop:v_learning_long_mem}, \ref{prop:expect_v_long_memory} and \ref{prop:mode_v_long_memory}). However, their limit properties are much less restrictive and favor the emergence of scenarios in which the learning path converges to the true returns to scale.

\subsection{Limit properties and convergence under path-independent learning}\label{subsec:limit_short_mem}

This subsection mirrors the convergence analysis presented in Subsection \ref{subsec:limit_long_mem} (where we examined the convergence of the expectation and mode of the MAP estimator under path-dependent learning). As in the case of Propositions \ref{prop:long_memory_learning} and \ref{prop:convergence_mode_long_memory}, we again concentrate on the conditional distribution of the MAP estimator $\zeta_i(t) ~|~ \mathcal{I}_i(t-1)$. In doing so, we employ a mean-field approximation, where all instances of $\eta_i(1), \ldots, \eta_i(t-1)$ in $l_i(t)$ are replaced by $q_i$.

It will become increasingly clear that the path-independent (Markovian) learning is a much more favorable method for firms to uncover the true returns to scale parameter. As a starting point of this convergence analysis, Propositions \ref{prop:short_memory_learning} and \ref{prop:convergence_mode_short_memory} below provide the path-independent versions of Propositions \ref{prop:long_memory_learning} and \ref{prop:convergence_mode_long_memory}, respectively.

\begin{proposition}[Convergence of the expectation (path-independent learning)]\label{prop:short_memory_learning} 
We have the following convergence of the learning dynamics
$$
    \underset{t \rightarrow +\infty }\lim\mathbb{E}[\zeta_i(t) ~|~ l_i(t)]=\left\{
    \begin{array}{ll}
      \zeta_i^{(*)} & \text{if } \underset{t \rightarrow +\infty }\lim l_{i}(t) \neq 1 \\
      \zeta_i^{(*)}+\dfrac{\zeta_i(0)-\zeta_i^{(*)}}{\underset{h=1}{\overset{\infty}\prod}\left(1+\gamma_i z_{i}(h)^2\right)}  & \text{if } \underset{t \rightarrow +\infty }\lim l_{i}(t) = 1.
    \end{array}\right.
$$

\end{proposition}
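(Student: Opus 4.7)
The plan is to exploit the affine structure of the MAP update in Proposition \ref{prop:v_learning_short_mem} and compute the conditional expectation in closed form via a one-step linear recursion. I start from the interior branch of \eqref{eq:v_learning_n1_short_mem}, namely
\[
\zeta_i(t+1) \;=\; \frac{\zeta_i(t) + \gamma_i z_i(t)\,\varepsilon_i(t) + \gamma_i \zeta_i^{(*)} z_i(t)^2}{1 + \gamma_i z_i(t)^2},
\]
and would first argue that the positive-part truncation can be dropped for the purpose of the limit analysis. This step is justified by Proposition \ref{prop:expect_v_short_memory}: because $\zeta_i^{(*)}\in[\underline{\zeta},\overline{\zeta}]$ with $\underline{\zeta}>0$, the Gaussian location $\overline{\overline{v}}_i$ stays bounded away from zero, so the correction factors $\tilde{\tilde{F}}_i$ and $\tilde{\tilde{f}}_i$ tend, respectively, to $1$ and $0$ whenever the signal-to-noise ratio $|\overline{\overline{v}}_i/\overline{\overline{\varphi}}_i|$ diverges, which is the generic situation along the asymptotic trajectory.

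Second, I would invoke the mean-field approximation announced in Subsection \ref{subsec:limit_long_mem} and restated in Subsection \ref{subsec:limit_short_mem}, replacing each $\eta_i(h)$ by $q_i$ in the equilibrium labor \eqref{eq:labor_eq}. This renders the sequence $\{z_i(h)\}_h$ effectively deterministic, so that conditioning on the trajectory $L_i(t)$ and using $\mathbb{E}[\varepsilon_i(t)\mid L_i(t)]=0$ together with $\varepsilon_i(t)\perp \zeta_i(t)$ yields
\[
\mathbb{E}[\zeta_i(t+1)\mid L_i(t)] \;-\; \zeta_i^{(*)}
\;=\; \frac{\mathbb{E}[\zeta_i(t)\mid L_i(t-1)] - \zeta_i^{(*)}}{1 + \gamma_i z_i(t)^2}.
\]
Setting $D_t := \mathbb{E}[\zeta_i(t)\mid L_i(t-1)] - \zeta_i^{(*)}$ and iterating back to the initial condition $D_0 = \zeta_i(0) - \zeta_i^{(*)}$ gives the closed-form product
\[
D_{t+1} \;=\; \frac{\zeta_i(0) - \zeta_i^{(*)}}{\prod_{h=1}^{t}\bigl(1 + \gamma_i z_i(h)^2\bigr)}.
\]

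Third, I would pass to the limit and separate the two cases of the proposition. If $\lim_{t\to\infty} l_i(t)\neq 1$ (either the limit exists and is different from $1$, or it fails to exist), then $|z_i(h)|$ is bounded away from zero infinitely often, hence $\gamma_i z_i(h)^2$ is bounded below by a positive constant on an infinite subsequence and the infinite product diverges; therefore $D_t\to 0$ and $\mathbb{E}[\zeta_i(t)\mid l_i(t)]\to \zeta_i^{(*)}$. If instead $\lim_{t\to\infty} l_i(t)=1$, then $z_i(h)\to 0$ and the product $\prod_{h=1}^{\infty}(1+\gamma_i z_i(h)^2)$ is well-defined (finite or infinite depending on the summability of $z_i(h)^2$), yielding the displayed non-trivial limit.

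The main obstacle will be the rigorous justification of dropping the positive-part truncation uniformly in $t$. The cleanest way is to use Proposition \ref{prop:expect_v_short_memory} directly and show that the truncation correction $\overline{\overline{v}}_i(1-\tilde{\tilde{F}}_i) - |\overline{\overline{\varphi}}_i|\tilde{\tilde{f}}_i$ is asymptotically negligible: when $z_i(t)$ is bounded away from zero the ratio $\overline{\overline{v}}_i/\overline{\overline{\varphi}}_i$ is bounded below by a positive constant (thanks to $\underline{\zeta}>0$ and the bounded variance $\sigma_i^2$), whereas when $z_i(t)\to 0$ the ratio explodes because $\overline{\overline{\varphi}}_i\to 0$ while $\overline{\overline{v}}_i$ stays bounded below by $\underline{\zeta}$; in both regimes the Gaussian tail terms vanish and the clean linear recursion governs the limit. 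A secondary subtlety is the endogeneity of $z_i(t)$ through \eqref{eq:labor_eq}, which the mean-field substitution $\eta_i(h)\leftarrow q_i$ precisely absorbs, consistent with the analogous treatment adopted in Subsection \ref{subsec:limit_long_mem}.
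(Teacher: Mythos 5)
Your proposal is correct and follows essentially the same route as the paper's proof: both neutralize the positive-part truncation by showing the Gaussian correction terms $\tilde{\tilde F}_i\to 1$ and $\tilde{\tilde f}_i\to 0$ (via divergence of the ratio $\overline{\overline{v}}_i/\overline{\overline{\varphi}}_i$), reduce the conditional expectation to the linear recursion $D_{t+1}=D_t/(1+\gamma_i z_i(t)^2)$, telescope to the product formula, and split cases according to whether $l_i(t)\to 1$ (i.e., $z_i(t)\to 0$). The paper additionally establishes monotone convergence of the auxiliary sequence before telescoping, but the decisive computation is identical to yours.
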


\begin{proposition}[Convergence of the mode (path-independent)]\label{prop:convergence_mode_short_memory}
Let us define the ordered set 
$$
\Psi_i = \mathbb{N}/\left\{t \in \nset ~:~ |\overline{\overline{\varphi}}_i(L_i(t))|\sqrt{ 2 \pi} F\left(-\left|\dfrac{\overline{\overline{v}}_i(L_i(t))}{\overline{\overline{\varphi}}_i(L_i(t))}\right|\right)> 1 \right\}, \quad \mbox{ for } i \in \mathcal{N},
$$
along with the order function $\psi_i(t)$ that assigns to each $t \in \nset$ the element of $\Psi_i$ in the $t$-th position. We consider the indicator function $\tilde{\mathfrak{1}}_{i, t-1}$, taking value one if $t \in \Psi_i$, and zero, otherwise. 

Conditioned on the sequence of material input decisions $L_i(t)$, the mode of the MAP estimator follows a deterministic sequence $\left\{ \overline{v}_i(t) \right\}_{t\in \Psi}$, where 
\begin{equation}\label{eq:mode_shor_tmemory_general_form}
    \displaystyle \overline{v}_i(t+1)  = \zeta_i(t) \left(\prod_{h=0}^{t-1} \dfrac{\tilde{\mathfrak{1}}_{i,h}}{ (1 + \gamma_i z_i(h)^2)}\right) + \zeta_i^{(*)} \left(\gamma_i \sum_{h=0}^{t-1} z_i(h)^2 \prod_{s=h}^{t-1} \dfrac{\tilde{\mathfrak{1}}_{i,s}}{(1 + \gamma_i z_i(s)^2)} \right).
\end{equation}
We claim that there exists a sub-sequence $\{\overline{v}_i(\psi_i(t))\}_{t}$ such that 
$$
\begin{array}{ll}
\underset{t \rightarrow +\infty }\lim \mathbb{M}[\zeta_i(t) ~|~ L_i(t)] & = \underset{t \rightarrow +\infty }\lim\overline{v}_i(\psi_i(t)) \\[0.5cm]
& =\left\{
    \begin{array}{ll}
      0 & \text{if } \Psi_i\text{ is a finite set,}\\
      \zeta_i^{(*)} & \text{if } \Psi_i\text{ is an infinite set and } \underset{t \rightarrow +\infty }\lim l_{i}(t) \neq 1 ,\\
      \zeta_i^{(*)}+\dfrac{\zeta_i(\psi_i(0))-\zeta_i^{(*)}}{\underset{h=1}{\overset{\infty}\prod}\left(1+\gamma_i z_{i}(\psi_i(h))^2\right)}  & \text{if } \Psi_i\text{ is an infinite set and }\underset{t \rightarrow +\infty }\lim l_{i}(t)=1.
    \end{array}\right.
\end{array}    
$$
\end{proposition}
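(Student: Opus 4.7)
The plan is to combine the MAP update of Proposition~\ref{prop:v_learning_short_mem} with the mode characterization of Proposition~\ref{prop:mode_v_short_memory} to extract a deterministic affine recursion for the sequence of modes, and then to study its limit along the sub-sequence indexed by $\psi_i$. First, I would derive the one-step recursion. Conditionally on $L_i(t)$, the only stochastic ingredient of $\zeta_i(t+1)$ in~\eqref{eq:v_learning_n1_short_mem} is the log-shock $\varepsilon_i(t)$, whose density is symmetric and unimodal at $0$. Proposition~\ref{prop:mode_v_short_memory} then states that the conditional mode equals $\overline{\overline{v}}_i(L_i(t))$ exactly when $t+1\in\Psi_i$ (i.e.\ $\tilde{\mathfrak{1}}_{i,t}=1$), and equals $0$ otherwise. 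Treating this mode as the point-wise estimator carried over to the next decision stage, in line with Assumption~\ref{ass:1}, yields
$$
\overline{v}_i(t+1)\;=\;\tilde{\mathfrak{1}}_{i,t}\,\frac{\overline{v}_i(t)+\gamma_i\,\zeta_i^{(*)}\,z_i(t)^2}{1+\gamma_i z_i(t)^2}.
$$

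Second, a straightforward induction unfolds this recursion into the closed form~\eqref{eq:mode_shor_tmemory_general_form}. Setting $a_h:=(1+\gamma_i z_i(h)^2)^{-1}$ and $b_h:=1-a_h=\gamma_i z_i(h)^2 a_h$, iterating $\overline{v}_i(t+1)=a_t\overline{v}_i(t)+b_t\zeta_i^{(*)}$ on the periods where the indicator is one gives
$$
\overline{v}_i(t+1)=\overline{v}_i(0)\prod_{h=0}^{t-1}\tilde{\mathfrak{1}}_{i,h}\,a_h+\zeta_i^{(*)}\sum_{h=0}^{t-1}b_h\prod_{s=h+1}^{t-1}\tilde{\mathfrak{1}}_{i,s}\,a_s,
$$
which rearranges into the displayed formula after re-expressing $b_h=\gamma_i z_i(h)^2 a_h$ inside the inner product. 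The indicator factors truncate the contribution of every period at which the mode vanishes, making the restriction to $\psi_i$ natural.

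Third, I would restrict to the sub-sequence $\{\psi_i(t)\}$, along which all indicators equal one, and examine the three cases. If $\Psi_i$ is finite, then past its last element the indicator vanishes permanently and $\overline{v}_i(\psi_i(t))=0$ for large $t$, giving the first limit. Otherwise $\Psi_i$ is infinite, and the telescoping identity $\sum_{h=0}^{t-1}b_h\prod_{s=h+1}^{t-1}a_s=1-\prod_{h=0}^{t-1}a_h$, which is valid for any $a_h\in(0,1]$, rewrites the closed form as
$$
\overline{v}_i(\psi_i(t+1))=\zeta_i^{(*)}+\bigl(\overline{v}_i(\psi_i(0))-\zeta_i^{(*)}\bigr)\prod_{h=0}^{t-1}a_{\psi_i(h)},
$$
reducing the analysis to that of the infinite product $\prod_{h\ge 0}(1+\gamma_i z_i(\psi_i(h))^2)$. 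If $\lim_t l_i(t)\neq 1$, then $z_i(\psi_i(h))^2$ is bounded away from $0$ on the tail, the product diverges, $\prod a_{\psi_i(h)}\to 0$, and the limit is $\zeta_i^{(*)}$. If $\lim_t l_i(t)=1$, then $z_i(\psi_i(h))\to 0$ and, by the standard equivalence between the convergence of $\prod(1+x_h)$ and of $\sum x_h$ for nonnegative $x_h$, the product is finite, producing the third limit exactly as stated (or, in the borderline sub-case $\sum z_i(\psi_i(h))^2=\infty$, it collapses back to $\zeta_i^{(*)}$, consistent with the second case).

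The main obstacle is this last case, which requires simultaneous control of the infinite product $\prod(1+\gamma_i z_i(\psi_i(h))^2)$ and a clean application of the telescoping identity above; the remainder is a direct parallel of the argument behind Proposition~\ref{prop:convergence_mode_long_memory}, the only novelty being that the weights in the affine recursion are period-specific rather than cumulative, which is precisely what makes the short-memory limits less restrictive than their long-memory counterparts.
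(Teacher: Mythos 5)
Your proposal is correct and follows essentially the same route as the paper: the same one-step recursion $\overline{v}_i(t+1)=\tilde{\mathfrak{1}}_{i,t}\,\frac{\overline{v}_i(t)+\gamma_i\zeta_i^{(*)}z_i(t)^2}{1+\gamma_i z_i(t)^2}$, the same unfolded closed form, and the same case analysis governed by the behavior of $\prod_h\big(1+\gamma_i z_i(\psi_i(h))^2\big)$. The only difference is presentational: you solve the affine recursion exactly via the telescoping identity and read off all three limits from the infinite product, whereas the paper treats case (2) by a monotone-bounded argument (invoking Proposition \ref{prop:short_memory_learning}) and case (3) by a geometric contraction bound; the exact-solution identity you use is the one the paper itself derives inside the proof of Proposition \ref{prop:short_memory_learning}, so the two arguments coincide in substance.
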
 

Importantly, the learning conditions of Propositions \ref{prop:short_memory_learning} and \ref{prop:convergence_mode_short_memory} are less restrictive than those of Proposition \ref{prop:long_memory_learning} and \ref{prop:convergence_mode_long_memory}. This unveils the \emph{benefit of forgetfulness} to facilitate firms to learn their true returns to scale, as the only case for which $\mathbb{E}[\zeta_i(t) ~|~ l_i(t)]$ does not converge to $\zeta_i^{(*)}$ is $\underset{t \rightarrow +\infty }\lim l_{i}(t) = 1$. When it comes to the mode of the MAP estimator, Proposition \ref{prop:convergence_mode_short_memory} provides evidence of the fact that the learning dynamics can still converge to $v^{(*)}_i$ even if $\underset{t \rightarrow +\infty }\lim l_{i}(t) = 1$ when $\gamma_i = (\tau_i/\sigma_i)^2$ grows large. The next two propositions show that also in the case of path-independent learning, an equilibrium dynamics satisfying $\underset{t \rightarrow +\infty }\lim l_{i}(t) \neq 1$ is possible under the equilibrium conditions of the proposed multi-sector model.

\begin{proposition}[Demographic expansion (path-independent)]\label{prop:demography_short_mem} 
By the equilibrium conditions \eqref{eq:labor_eq}-\eqref{eq:equilibrium_w}, any increasing demographic path $\{\delta(t)\}_t$ (with $\delta(t) \geq \delta(t-1) + 1$) is such that $\underset{t \rightarrow +\infty }\lim l_{i}(t) \neq 1$. 
\end{proposition}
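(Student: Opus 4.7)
The plan is to translate the demographic expansion condition $\delta(t)\to\infty$ into a wage collapse $w(t)\to 0$ via the labor-market-clearing equation \eqref{eq:equilibrium_w}, and then read off the divergence of each sectoral labor demand from the explicit formula \eqref{eq:labor_eq}. Since $\delta(t)\geq \delta(t-1)+1$ implies $\delta(t)\to +\infty$, and $l_0(t)\in(0,\delta(t)]$, the first reduction is to rule out the degenerate scenario in which $l_0(t)$ absorbs essentially all of the demographic expansion: if $\delta(t)-l_0(t)$ stays bounded (or shrinks), Lemma \ref{lemma:L} gives $w(t)$ bounded away from zero, so $l_i(t)$ in \eqref{eq:labor_eq} also stays bounded, and in fact goes to $0$ whenever $\delta(t)-l_0(t)\to 0$, immediately yielding $\lim l_i(t)\neq 1$. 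The substantive case is therefore $\delta(t)-l_0(t)\to+\infty$.

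Under that case, the next step is to invert the monotone equation $L_t(w(t))=\delta(t)-l_0(t)$. By Lemma \ref{lemma:L}, for every $t$ the map $w\mapsto L_t(w)$ is continuous, strictly decreasing, and surjective onto $(0,+\infty)$. The key additional observation is that $L_t(\bar w)$ is uniformly bounded in $t$ on any fixed $\bar w>0$: since $\hat{\zeta}_i(t)\in[\underline{\zeta},\overline{\zeta}]\subset(0,1)$ and $\mathbb{E}[\eta_i(t)^{\alpha}]$ is a time-independent finite positive constant, each summand in $L_t(\bar w)$ is bounded above by an explicit function of $\bar w$ alone. Combining this uniform bound with $L_t(w(t))\to+\infty$ forces $w(t)\to 0$.

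Finally, plugging $w(t)\to 0$ into \eqref{eq:labor_eq} finishes the proof. The numerator $\mathbb{E}[\eta_i(t)^{\alpha}]\hat{\zeta}_i(t)$ is bounded below by $\underline{\zeta}\cdot \mathbb{E}[\eta_i(t)^{\alpha}]>0$, while the exponent $1/(1-\hat{\zeta}_i(t)\alpha)$ lies in the compact subinterval $[1/(1-\underline{\zeta}\alpha),\,1/(1-\overline{\zeta}\alpha)]\subset(1,+\infty)$ thanks to $\overline{\zeta}\alpha<1$. Hence the base $\mathbb{E}[\eta_i(t)^{\alpha}]\hat{\zeta}_i(t)/w(t)$ diverges and so does $l_i(t)$, uniformly over $i\in\mathcal N$, giving $\lim_{t\to+\infty} l_i(t)=+\infty\neq 1$, as claimed.

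The main obstacle I anticipate is the uniform-in-$t$ control of $L_t(\cdot)$, because $\hat{\zeta}_i(t)$ fluctuates with the learning dynamics and the exponents in \eqref{eq:equilibrium_w} are $t$-dependent. The rescue is precisely the clipping rule $\hat{\zeta}_i(t)\in[\underline{\zeta},\overline{\zeta}]$ built into the model: all the exponents, as well as the numerators $\mathbb{E}[\eta_i(t)^{\alpha}]\hat{\zeta}_i(t)$, can be bounded above and below by fixed positive constants, so the inversion of $L_t(w(t))=\delta(t)-l_0(t)$ and the subsequent divergence of $l_i(t)$ both hold uniformly in $t$. No additional probabilistic argument is required, since the conclusion is purely about the deterministic equilibrium map once the sequence $\{\delta(t)\}_t$ is fixed.
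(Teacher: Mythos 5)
Your argument is correct on the main branch and reaches the same conclusion as the paper ($l_i(t)\to+\infty$, hence $\neq 1$), but by a genuinely different route. The paper's proof works through the implicit function theorem: it differentiates $w^*$ with respect to $\delta(t)$ via the market-clearing identity, obtains $\frac{d}{d\delta(t)}\ln l_i(t)=\kappa_{i,t}\big/\sum_j \xi_{j,t}\kappa_{j,t}\,w^*(t)^{\kappa_{j,t}}$, shows this is positive once $\delta(t)$ exceeds a threshold $\bar\delta(t)$, and concludes that $\ln l_i(t)$ diverges along the demographic path. You instead invert the clearing equation directly: a uniform-in-$t$ upper bound on $L_t(\bar w)$ (available because $\hat\zeta_i(t)$ is clipped to $[\underline\zeta,\overline\zeta]$) combined with $L_t(w(t))=\delta(t)-l_0(t)\to\infty$ forces $w(t)\to 0$, after which \eqref{eq:labor_eq} gives divergence of $l_i(t)$ since the base blows up and the exponent lives in a compact subinterval of $(1,\infty)$. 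Your version is more elementary, avoids the calculus of the implicit wage function, and — importantly — makes explicit the uniformity-in-$t$ control of $L_t$ that the paper's derivative computation uses only implicitly. You also correctly flag that the statement is silent about $l_0(t)$, a point the paper's proof does not address.

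One small gap remains in your case analysis. When $\delta(t)-l_0(t)$ stays bounded but does \emph{not} tend to $0$ (e.g.\ $l_0(t)=\delta(t)-c$ for a constant $c>0$), you conclude only that $l_i(t)$ is bounded, which does not rule out $\lim_t l_i(t)=1$: the constraint $\sum_i l_i(t)=c$ is perfectly compatible with each $l_i(t)\to 1$ when $c=n$. So your reduction to the case $\delta(t)-l_0(t)\to+\infty$ is not fully justified as stated. To be fair, the paper's own proof has the same blind spot — it tacitly assumes the demographic expansion is not absorbed by sector $0$, and its conclusion is phrased as the existence of a suitable monotone sequence rather than a claim about every such sequence. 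If you want to close your version cleanly, either add the standing assumption that $l_0(t)$ is bounded (so that $\delta(t)\geq\delta(t-1)+1$ automatically gives $\delta(t)-l_0(t)\to+\infty$), or restrict the claim to paths for which $\delta(t)-l_0(t)\to+\infty$.
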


\begin{proposition}[Demographic requirement (path-independent)]\label{prop:l_i_bound} For each $i \in \mathcal{N}$, let us define 
$$
\kappa_{i,t} = - \dfrac{1 }{1 - \hat{\zeta}_{i}(t)\alpha } \quad \mbox{ and } \quad 
\chi_{i,t}  = \left(\mathbb{E}\left[ \eta_{i}(t)^{\alpha} \right] \hat{\zeta}_{i}(t) \right)^{\dfrac{1 }{1 - \hat{\zeta}_{i}(t)\alpha }}. 
$$
\RRew{For any $\lambda_{\min}$, $\lambda_{\max} > 0$, we have that
$$
\displaystyle \mbox{if } \quad \hat{\zeta}_{i}(t) ~\geq~ \dfrac{1}{\mathbb{E}\left[ \eta_{i}(t)^{\alpha} \right]}\max_{j}\left \{  \left( \dfrac{\delta(t) - l_0(t)}{n \chi_{j,t}}\right)^{\frac{1}{\kappa_{j,t}}} \right\} \left(\lambda_{\min}\right)^{-\frac{1}{\kappa_{i,t}}}, \quad \mbox{ then } \quad l_i(t) ~\geq~ \lambda_{\min},
$$
and,
$$
\displaystyle \mbox{if } \quad  \hat{\zeta}_{i}(t)  ~\leq~ \dfrac{1}{\mathbb{E}\left[ \eta_{i}(t)^{\alpha} \right]} \min_{j}\left \{  \left( \dfrac{\delta(t) - l_0(t)}{n \chi_{j,t}}\right)^{\frac{1}{\kappa_{j,t}}} \right\}\left(\lambda_{\max}\right)^{-\frac{1}{\kappa_{i,t}}}, \quad \mbox{ then } \quad l_i(t) ~\leq~ \lambda_{\max}.
$$}
\end{proposition}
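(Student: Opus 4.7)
The plan is to recast the equilibrium labor demand and the market-clearing equation in the compact form $l_i(t) = \chi_{i,t}\,w(t)^{\kappa_{i,t}}$ and $\sum_{j=1}^n \chi_{j,t}\,w(t)^{\kappa_{j,t}} = \delta(t) - l_0(t)$, noting that $\kappa_{j,t}<0$ because $\hat{\zeta}_j(t)\alpha<1$. Consequently $w\mapsto w^{\kappa_{j,t}}$ is strictly decreasing on $\rset_+$, and inverting any inequality of the form $w^{\kappa_{j,t}} \lessgtr c$ flips its direction. With $A_i(t) := \mathbb{E}[\eta_i(t)^\alpha]\,\hat{\zeta}_i(t)$, one has the identities $\chi_{i,t} = A_i(t)^{-\kappa_{i,t}}$ and $l_i(t) = \bigl(A_i(t)/w(t)\bigr)^{-\kappa_{i,t}}$, which are the workhorses of the argument.

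First, I translate the two target inequalities on $l_i(t)$ into equivalent inequalities linking $A_i(t)$ and $w(t)$. Since $-\kappa_{i,t}>0$, the inequality $l_i(t)\ge\lambda_{\min}$ is equivalent to $A_i(t)\ge w(t)\,\lambda_{\min}^{-1/\kappa_{i,t}}$, and $l_i(t)\le\lambda_{\max}$ is equivalent to $A_i(t)\le w(t)\,\lambda_{\max}^{-1/\kappa_{i,t}}$. To turn these into conditions that depend only on current beliefs, they must be tightened by replacing $w(t)$ with, respectively, an upper bound and a lower bound read off the market-clearing equation.

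Second, those bounds on $w(t)$ follow from a pigeonhole argument. Since the $n$ positive summands in $\sum_j \chi_{j,t}\, w(t)^{\kappa_{j,t}}=\delta(t)-l_0(t)$ have mean $(\delta(t)-l_0(t))/n$, there exists an index $j^\star$ with $\chi_{j^\star,t}\,w(t)^{\kappa_{j^\star,t}}\ge(\delta(t)-l_0(t))/n$ and an index $j^{\star\star}$ with $\chi_{j^{\star\star},t}\,w(t)^{\kappa_{j^{\star\star},t}}\le(\delta(t)-l_0(t))/n$. Using $\kappa_{j,t}<0$ to invert, these become $w(t)\le\bigl((\delta(t)-l_0(t))/(n\chi_{j^\star,t})\bigr)^{1/\kappa_{j^\star,t}}$ and $w(t)\ge\bigl((\delta(t)-l_0(t))/(n\chi_{j^{\star\star},t})\bigr)^{1/\kappa_{j^{\star\star},t}}$. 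Since $j^\star$ and $j^{\star\star}$ are a priori unknown, passing to uniform bounds via $\max_j$ and $\min_j$ delivers a ceiling $W_{\max}$ and a floor $W_{\min}$ on $w(t)$ expressible only in terms of exogenous data and current beliefs.

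Finally, plugging $w(t)\le W_{\max}$ into the sufficient condition $A_i(t)\ge w(t)\,\lambda_{\min}^{-1/\kappa_{i,t}}$, plugging $w(t)\ge W_{\min}$ into $A_i(t)\le w(t)\,\lambda_{\max}^{-1/\kappa_{i,t}}$, and then dividing through by $\mathbb{E}[\eta_i(t)^\alpha]$, I recover exactly the two stated sufficient conditions on $\hat{\zeta}_i(t)$. No deep mathematical obstacle is anticipated; the main care point is the bookkeeping of signs induced by $\kappa_{j,t}<0$, which dictates whether each inequality inverts upon taking the power $1/\kappa_{j,t}$ and, correspondingly, whether the uniform bound over $j$ materialises as $\max_j$ (for the lower bound on $l_i(t)$) or $\min_j$ (for the upper bound on $l_i(t)$).
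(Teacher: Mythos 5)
Your proposal is correct and follows essentially the same route as the paper's proof: both arguments bound the equilibrium wage above by $\max_j\bigl\{\bigl((\delta(t)-l_0(t))/(n\chi_{j,t})\bigr)^{1/\kappa_{j,t}}\bigr\}$ and below by the corresponding $\min_j$, using the fact that at least one summand in the market-clearing sum must weakly exceed (resp.\ fall below) the average $(\delta(t)-l_0(t))/n$ together with the sign flip from $\kappa_{j,t}<0$, and then substitute these wage bounds into $l_i(t)=\bigl(\mathbb{E}[\eta_i(t)^{\alpha}]\hat{\zeta}_i(t)/w(t)\bigr)^{-\kappa_{i,t}}$. Your pigeonhole phrasing is just a restatement of the paper's chain of set inclusions, so the two proofs coincide in substance.
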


Therefore, \RRew{by picking $\lambda_{\min} > 1$}, the sufficient condition in Proposition \ref{prop:l_i_bound} entails $\underset{t \rightarrow +\infty }\lim l_{i}(t) \neq 1$, for all $i \in \mathcal{N}$. Notably, both Propositions \ref{prop:demography_short_mem} and \ref{prop:l_i_bound} establish that an increasing labor supply can support learning under the specific equilibrium conditions of the proposed multi-sector model. 

While the overall picture of the path-independent MAP estimator to infer the returns to scale parameter is in line with the case of path-dependent learning (Example \ref{ex:2}), the following stylized example (Example \ref{ex:4}) clarifies that this convergence is much faster and more accurate in the context of path-independent learning.

\begin{example}\label{ex:4} We consider the illustrative case described in Example \ref{ex:1}, where $n=1$, $\zeta_{i}^{(\ast)}= 0.5$, $\alpha = 0.5$ and $m_i = 0$. Figures \ref{fig:Example4_1} and \ref{fig:Example4_2} show the dynamics of $v_1(t)$ along a time horizon $\{1, \ldots, 500\}$, for different values of $\tau_i$ and $\zeta_{i}^{(0)}$. While the convergence of $\zeta_{i}(t)$ to $\zeta_{i}^{(\ast)}$ results under all scenarios, its speed strongly depends on $\tau_i$.

\begin{figure}[H]
        \centering
        \begin{subfigure}[b]{0.4\textwidth}\includegraphics[scale=0.67]{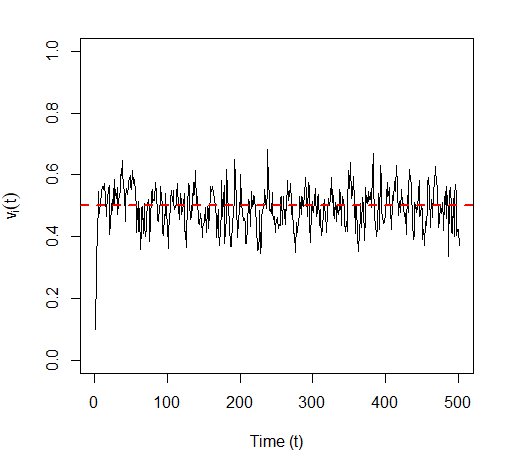}
        \caption{\footnotesize Setting $\zeta_{i}^{(0)} = 0.1$.}
        \end{subfigure}
        \begin{subfigure}[b]{0.4\textwidth}
        \includegraphics[scale=0.67]{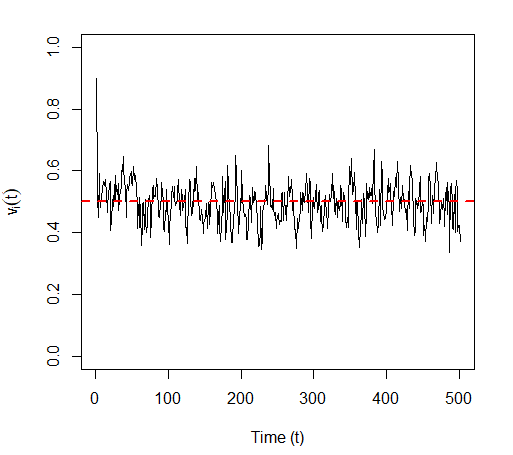}
        \caption{Setting $\zeta_{i}^{(0)} = 0.9$.}
        \end{subfigure} 
        \caption{\footnotesize Learning path $\{\zeta_i(t)\}_t$ with $\sigma_i = 0.1$ and $\tau=0.1$  \label{fig:Example4_1} }
\end{figure}

\begin{figure}[H]
        \centering
        ~~~\begin{subfigure}[b]{0.4\textwidth}     
        \includegraphics[scale=0.65]{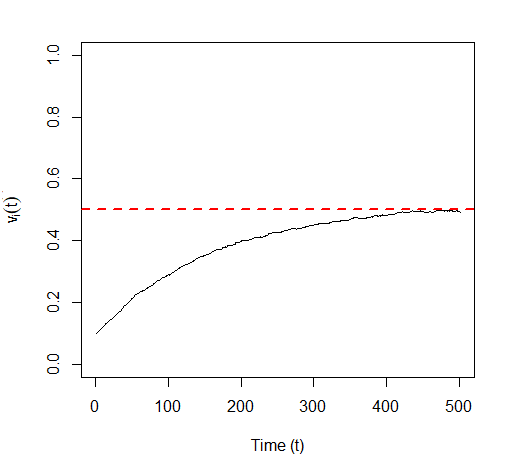}
        \caption{\footnotesize Setting $\zeta_{i}^{(0)} = 0.1$.}
        \end{subfigure} ~
        \begin{subfigure}[b]{0.4\textwidth}
        \includegraphics[scale=0.65]{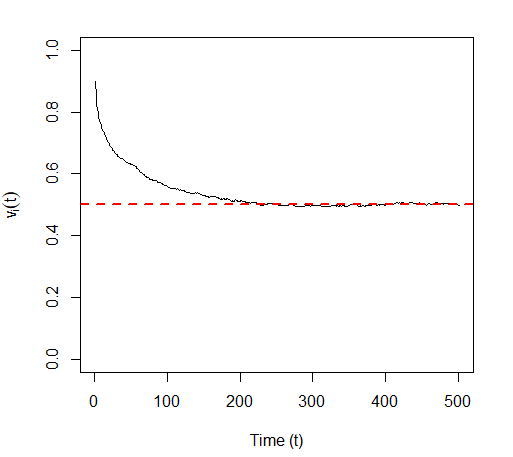}
        \caption{Setting $\zeta_{i}^{(0)} = 0.1$.}
        \end{subfigure} 
        \caption{\footnotesize Learning path $\{\zeta_i(t)\}_t$ with $\sigma_i = 0.1$ and $\tau=0.01$.  \label{fig:Example4_2} }
\end{figure}   

Table \ref{tab:Ex4} reports the difference between $\mathbb{E}[\zeta_i(1000)]$ and $\zeta_{i}^{(\ast)}$, for different values of $\sigma_i$ and $\tau_i$. The numerical values in Table \ref{tab:Ex4} support our theory about the convergence of the MAP estimator in the context of path-independent learning, and reaffirms that a major role is played by the variance of the idiosyncratic productivity shocks $\sigma_i$ and the degree of uncertainty of the prior distribution $\tau_i$. At the same time, sufficient conditions for the convergence of the learning path to $\zeta_i^{(\star)}$ are obtained by considering a minimum labor supply $\delta(t)$ (see Propositions \ref{prop:demography_short_mem} and \ref{prop:l_i_bound}).

\begin{table}[H]
\centering
\scalebox{0.98}{   
    \begin{tabular}{|cc|cc|cc|}
        \hline
        $\tau_1$ & $\sigma_1$ & \multicolumn{2}{c|}{$\zeta_{i}^{(\ast)}=0.4$} & \multicolumn{2}{c|}{$\zeta_{i}^{(\ast)}=0.6$} \\ \cline{3-6}
        & & $\zeta_{i}^{(0)} = 0.1$ & $\zeta_{i}^{(0)} = 0.9$ & $\zeta_{i}^{(0)} = 0.1$ & $\zeta_{i}^{(0)} = 0.9$ \\ 
        \hline
        $0.01$ & $0.01$ & 1.387e-16 & 1.388e-16 & 1.850e-16 & $<$1.00e-20 \\
        $0.01$ & $0.05$ & 0.0036624 & 0.0011406 & 0.0007456 & 0.0000265 \\
        $0.01$ & $0.10$ & 0.0594514 & 0.0594514 & 0.1126307 & 0.0075471 \\\hline
        $0.05$ & $0.01$ & $<$1.00e-20 & $<$1.00e-20 & $<$1.00e-20 & $<$1.00e-20 \\
        $0.05$ & $0.05$ & $<$1.00e-20 & $<$1.00e-20 & $<$1.00e-20 & $<$1.00e-20 \\
        $0.05$ & $0.10$ &   8.995e-11 & 1.683e-11 & 1.665e-15
 & 0.0049896 \\\hline
        $0.10$ & $0.01$ & $<$1.00e-20 & $<$1.00e-20 & $<$1.00e-20 & $<$1.00e-20 \\
        $0.10$ & $0.05$ & $<$1.00e-20 & $<$1.00e-20 & $<$1.00e-20 & $<$1.00e-20 \\
        $0.10$ & $0.10$ &   2.775e-16 & $<$1.00e-20 & 1.850e-16 & $<$1.00e-20
 \\\hline
    \end{tabular}}
    \caption{\footnotesize Relative absolute differences between the $\zeta_i(t)$ and $\zeta_i^{(\star)}$. \label{tab:Ex4}}
\end{table}

\end{example}

\section{A Bayesian (deterministic belief) filter with endogenous feedback}\label{subsec:BayesianFiltering}

We now turn our attention to providing a mathematically grounded explanation of the observed distinctions between the long-memory (path-dependent) and short-memory (path-independent) learning, based on the theory of Bayesian filtering. We first show that the alternating dynamics between the decision-making and learning stages define a class of Bayesian filters with endogenous feedback (more specifically, deterministic belief filters), where the evolution of the latent state is directly governed by the agent's optimization problem. To interpret our results from the Bayesian filtering perspective, we relate the general framework of recursive Bayesian estimation to the learning path formulation studied in this paper. Using an analogous notation as the one adopted so far, a recursive Bayesian filter is typically represented as a discrete-time system governed by the following equations \citep{DORE2009213}:
$$
v(t) = \nu(v(t-1), \omega(t-1)) \quad \mbox{ and } \quad x(t) = \mu(v(t), \eta(t)), 
$$
where \( v(t) \) is the latent state vector at time \( t \), \( x(t) \) is the observable signal, \( \omega(t) \) is the process noise (typically i.i.d.), and \( \eta(t) \) is the observation noise. The functions \( \nu \) and \( \mu \) represent the state transition and observation models, respectively. Estimation is carried out recursively in two steps: a prediction step using the system dynamics to propagate the prior, and an update step using the latest observation. A closed-form solution is available when the system is linear and the noise terms are Gaussian, in which case the optimal estimator is given by the Kalman filter. However, the probabilistic setting proposed in our paper departs from this classical setup in two crucial respects: (i) the transition function \( \nu \) is deterministic and observation-driven, and (ii) the observation function \( \mu \) is nonlinear and endogenous. First, as shown in Sections \ref{sec:learning_dynamics_long} and \ref{sec:learning_dynamics_short}, our learning dynamics are expressed as \( \zeta_i(t+1) = \nu(x_i(1), \ldots, x_i(t)) \) in the path-dependent case, and \( \zeta_i(t+1) = \nu(\zeta_i(t), x_i(t)) \) in the path-independent case. These are deterministic MAP estimators based on observed output, without any exogenous process noise \( \omega(t) \). This makes the learning process entirely observation-driven, unlike standard filtering where state transitions are governed by a stochastic law. Second, the observation function \( \mu \) is nonlinear in the latent state \( \zeta_i(t) \) due to the equilibrium feedback. In particular, from Proposition \ref{prop:equilibrium}, the observed production can be expressed as:
\[
x_i(t) = \eta_i(t) l_i(t)^{\zeta_i^*} = \eta_i(t) \left[ \left( \mathbb{E}[\eta_i(t)^\alpha] \right)^{1/(1 - \hat{\zeta}_i(t))} \left( \frac{\hat{\zeta}_i(t)}{w(t)} \right)^{1/[(1 - \hat{\zeta}_i(t))\alpha]} \right]^{\zeta_i^*},
\]
where \( \hat{\zeta}_i(t) = \max\left\{ \underline{\zeta}, \min\left\{ \overline{\zeta}, \zeta_i(t) \right\} \right\} \). This introduces an endogenous feedback loop: the signal \( x_i(t) \) depends on labor input, which is chosen based on the current belief \( \zeta_i(t) \), which in turn is updated based on \( x_i(t) \). Moreover, \( w(t) \), the equilibrium wage, depends on all firms' beliefs, further coupling the observation structure across agents.

As a result of these departures, standard convergence results for adaptive linear filters (e.g., those reviewed in \citet{liu2011kernel}) do not directly apply. These results typically assume exogenous, fixed regressors and linearity in both state and observation equations. In contrast, \RRew{$\zeta_i(t+1)$ is obtained recursively from $\zeta_i(t)$ and the newly observed signal $x_i(t)$, which itself depends on $\zeta_i(t)$, where the regressors \( z_i(t) = \log l_i(t) \) are themselves endogenous functions of \( \zeta_i(t) \). Hence, the learning rule is nonlinear and belief-dependent, exhibiting serial correlation. This dependence is not an additional assumption, but an intrinsic property of Bayesian updating.

Consequently, while our recursive update rules share a formal resemblance to adaptive linear filters, the convergence behavior must be understood through the lens of nonlinear Bayesian learning under endogenous feedback.} To investigate the convergence behavior of our learning rules (particularly the faster convergence of the path-independent update), we now turn to a unified characterization of both learning mechanisms.

\begin{proposition}\label{prop:comparison_learning} Let $p \in \{PD, PI\}$ and define
$$
A(u) = \frac{1}{1 - u} \log \mathbb{E}[\eta_i(t)^\alpha], \quad B(u, w) = \frac{1}{(1 - u)\alpha} \log\left( \frac{u}{w} \right), \quad \sigma^*_i(u,w) =  \left( \frac{\sigma_i}{ A(u) + B(u ,w)  } \right)^2. 
$$

Both path-dependent (PD) and path-independent (PI) learning equations \eqref{eq:v_learning_n1_long_mem} and \eqref{eq:v_learning_n1_short_mem} can be commonly expressed as:
$$
\zeta_i(t+1) = \Big( a_t^{p} \zeta_i(t) + (1-a_t^{p}) \varepsilon_i^*(t) \Big)^+, \quad \mbox{ where } \quad \varepsilon_i^*(t) \sim \mathcal{N}\Big( \zeta_i^{(*)},\sigma^*_i(\hat{\zeta}_i(t), w(t)) \Big),
$$
and
$$
a_t^{PD} =  \frac{1 + \gamma_i \sum_{l=1}^{t-1}  \left( A(\hat{\zeta}_i(l)) + B(\hat{\zeta}_i(l), w(l)) \right)^2 }{1 + \gamma_i \sum_{l=1}^t \left( A(\hat{\zeta}_i(l)) + B(\hat{\zeta}_i(l), w(l)) \right)^2} \quad \mbox{ and } \quad a_t^{PI} = \frac{1}{1 + \gamma_i \left( A(\hat{\zeta}_i(t)) + B(\hat{\zeta}_i(t), w(t)) \right)^2 }.
$$
\end{proposition}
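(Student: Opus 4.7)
The plan is to treat both recursions as algebraic rewritings of the closed-form MAP updates derived in Propositions \ref{prop:v_learning_long_mem} and \ref{prop:v_learning_short_mem}, built on top of the equilibrium labor-demand identity \eqref{eq:labor_eq}. A direct substitution in \eqref{eq:labor_eq} shows that $z_i(t)=\log l_i(t)$ is a deterministic function of $\hat{\zeta}_i(t)$ and $w(t)$ and admits the decomposition $z_i(t)=A(\hat{\zeta}_i(t))+B(\hat{\zeta}_i(t),w(t))$. This preliminary step is purely mechanical and will only be needed at the end, to express the variance of the effective noise in terms of $\hat{\zeta}_i(t)$ and $w(t)$.

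For the path-independent case, I would start from \eqref{eq:v_learning_n1_short_mem} and expand the product $z_i(t)\,s_i(t)$ using the definition $s_i(t)=\varepsilon_i(t)+\zeta_i^{(*)}z_i(t)$ inherited from \eqref{eq:reparam}. Splitting the numerator so that the denominator $1+\gamma_i z_i(t)^2$ plays the role of a convex-combination weight yields
$$
\zeta_i(t+1)=\Big(\tfrac{1}{1+\gamma_i z_i(t)^2}\,\zeta_i(t)+\tfrac{\gamma_i z_i(t)^2}{1+\gamma_i z_i(t)^2}\,\zeta_i^{(*)}+\tfrac{\gamma_i z_i(t)}{1+\gamma_i z_i(t)^2}\,\varepsilon_i(t)\Big)^{+}.
$$
Identifying $a_t^{PI}=(1+\gamma_i z_i(t)^2)^{-1}$ and matching the remaining two terms with $(1-a_t^{PI})\,\varepsilon_i^*(t)$ forces $\varepsilon_i^*(t)=\zeta_i^{(*)}+\varepsilon_i(t)/z_i(t)$, which is Gaussian with mean $\zeta_i^{(*)}$ and variance $\sigma_i^2/z_i(t)^2=\sigma_i^{*}(\hat{\zeta}_i(t),w(t))$ by the $A+B$ decomposition.

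For the path-dependent case, I would exploit the telescoping structure of \eqref{eq:v_learning_n1_long_mem}. Setting $N_t=\zeta_i^{(0)}+\gamma_i\sum_{\ell=1}^{t}z_i(\ell)\,s_i(\ell)$ and $D_t=1+\gamma_i\sum_{\ell=1}^{t}z_i(\ell)^2$, the MAP estimator reads $\zeta_i(t+1)=N_t/D_t$ with the one-step recursions $N_t=N_{t-1}+\gamma_i z_i(t)\,s_i(t)$ and $D_t=D_{t-1}+\gamma_i z_i(t)^2$. Substituting $N_{t-1}=\zeta_i(t)\,D_{t-1}=\zeta_i(t)\,(D_t-\gamma_i z_i(t)^2)$ and expanding $s_i(t)$ gives
$$
\zeta_i(t+1)=\Big(\tfrac{D_{t-1}}{D_t}\,\zeta_i(t)+\tfrac{\gamma_i z_i(t)^2}{D_t}\,\zeta_i^{(*)}+\tfrac{\gamma_i z_i(t)}{D_t}\,\varepsilon_i(t)\Big)^{+},
$$
so $a_t^{PD}=D_{t-1}/D_t$ coincides with the claimed formula. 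The coefficient matching is then formally identical to the path-independent case, recovering $\varepsilon_i^*(t)=\zeta_i^{(*)}+\varepsilon_i(t)/z_i(t)$ with the same variance $\sigma_i^{*}(\hat{\zeta}_i(t),w(t))$. The positive-part truncation is preserved because both updates apply $(\cdot)^{+}$ to the linearly reconstructed right-hand side.

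The only genuinely non-mechanical step is the one-step recursion for the path-dependent MAP estimator: it is not immediate from the cumulative ratio in \eqref{eq:v_learning_n1_long_mem}, but follows once one observes that the increments of $N_t$ and $D_t$ depend exclusively on date-$t$ data, which permits the back-substitution $N_{t-1}=\zeta_i(t)\,D_{t-1}$. Once this recursion is in hand, the rest of the proof reduces to coefficient matching and is symmetric across the two learning schemes, and the common distributional form of $\varepsilon_i^*(t)$ follows from the Gaussianity of $\varepsilon_i(t)$ together with the deterministic rescaling by $1/z_i(t)$.
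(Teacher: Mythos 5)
Your proposal is correct and follows essentially the same route as the paper's proof: for the path-dependent case you recover the one-step recursion by writing the cumulative MAP ratio as $N_t/D_t$ and back-substituting $N_{t-1}=\zeta_i(t)D_{t-1}$ (the paper does exactly this by multiplying through by the denominator and appending the date-$t$ increment), and for the path-independent case you perform the same convex-combination split of \eqref{eq:v_learning_n1_short_mem}, identifying the effective noise $s_i(t)/z_i(t)=\zeta_i^{(*)}+\varepsilon_i(t)/z_i(t)$ and its variance via $z_i(t)=A(\hat{\zeta}_i(t))+B(\hat{\zeta}_i(t),w(t))$. Both your argument and the paper's share the same implicit caveat that the back-substitution step presumes the positive-part truncation did not bind at the previous date, so there is no substantive difference between the two.
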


This expression reveals that the estimator can be written as a convex combination of the previous estimate \( \zeta_i(t) \) and the current-period sufficient statistic (a Gaussian centered at $\zeta_i^{(*)}$), with time-varying weights. In the case of path-dependent learning, these weights increasingly favor \( \zeta_i(t) \) as \( t \to \infty \) under fairly mild conditions. Differently, the behavior of $a_t^{PI}$ is non-monotonic and state-dependent. To analyze the asymptotic behavior of $a_t^{PI}$ depending on the limiting behavior of $\hat{\zeta}_i(t)$, we consider thee cases.

One case is when $\hat{\zeta}_i(t) \to 1^-$, which implies $\frac{1}{1 - \hat{\zeta}_i(t)} \to \infty$. Consequently, both $A(\hat{\zeta}_i(t))$ and $B(\hat{\zeta}_i(t), w(t))$ diverge, and thus $a_t^{PI} \to 0$. In this case, $\zeta_i(t+1) \to \mathcal{N}\left( \zeta_i^{(*)}, \sigma^*_i(\hat{\zeta}_i(t), w(t)) \right)$, with $\sigma^*_i(\hat{\zeta}_i(t), w(t)) \to 0$. Another case is when $\hat{\zeta}_i(t) \to 0^+$, which implies that $A(\hat{\zeta}_i(t))$ remains bounded, but $B(\hat{\zeta}_i(t), w(t)) \to -\infty$ because $\log(\hat{\zeta}_i(t)/w(t)) \to -\infty$. Also in this case $a_t^{PI} \to 0$, so that $\zeta_i(t+1) \to \mathcal{N}\left( \zeta_i^{(*)}, \sigma^*_i(\hat{\zeta}_i(t), w(t)) \right)$, with $\sigma^*_i(\hat{\zeta}_i(t), w(t)) \to 0$. Finally, when $\hat{\zeta}_i(t)$ remains bounded away from $0$ and $1$, the functions $A(\hat{\zeta}_i(t))$ and $B(\hat{\zeta}_i(t), w(t))$ remain bounded too. Thus, $z_i(t)^2$ is bounded, and: $a_t^{PI} \in (0,1)$ which implies a balanced weighting between past belief and a Gaussian centered at $\zeta_i^{(*)}$. This regime corresponds to persistent, moderate learning behavior.

In conclusion, while the dynamics of the path-dependent learning resembles an auto-regressive process with a vanishing noise centered at $\zeta_i^{(*)}$, the path-independent learning mirrors a nonlinear dynamics in which the convergence to $\zeta_i^{(*)}$ accelerates when the sequence passes close to the boundaries zero and one.


\section{Extension to high-dimensional learning}\label{sec:Hight_dimension}


For the sake of generality of the proposed learning analysis, we conclude our study by investigating the possibility to extend our approach to the case of multi-input production with unknown input-output elasticities. Following \cite{horvath2000sectoral} and \cite{acemoglu2012network, acemoglu2017microeconomic}, each good in the economy can be either consumed or used in the next period by other sectors as an input for production. Hence, we let $y_{i,j}(t)$ denote the amount of production of sector $j$ used as an input for sector $i$ at period $t$, and establish the following clearing conditions:
$$
x_j(t) = c_j(t+1) + \sum_{i\in \mathcal{N}} y_{i,j}(t+1).
$$
The output of each sector is
\begin{equation}\label{eq:ExpectedProduction_multi}
x_i(t) = \eta_{i}(t)\mu_i(t), \quad \mbox{ with } \quad \mu_i(t) = (l_{i}(t))^{(1-\phi)} \prod_{j=1}^{n} (y_{i,j}(t))^{ \phi \beta_{i,j}} ,
\end{equation}
\noindent The exogenous parameter $\phi \in [0, 1]$ establishes the material versus labor intensity in production. The inter-sectoral input-output elasticities are summarized by $\phi \beta_{i,j}$, measuring the output response to a change in levels of production inputs, where $\beta_{i,j}$ is the input-specific factor of this elasticity. In this case, $\beta_{i,1}, \ldots, \beta_{i,n}$ represent the unknown payoff-relevant parameters firm $i$ wishes to learn by observing the outcomes (realized production) of its action (input decisions).\footnote{The multi-sector general equilibrium models of \cite{acemoglu2012network} and \cite{acemoglu2017microeconomic} rely upon constant returns to scale (i.e., $\sum_j \beta_{i,j} = 1$), whereas this assumption is not set in our modelling design.} We let $B$ be a $n \times n$ non-singular matrix with components $\beta_{i,j}$. Similarly, we let $B^{(*)} = \big( \beta_{i,j}^{(*)} \big)_{i,j=1}^n$ be the true input-output parameter matrix that firms wish to discover. Also, $Y(t)$ is used to denote a $n \times n$ matrix with components $y_{i,j}(t)$. The random factor $\eta_{i}(t) \sim \log N(m_i,\sigma_i )$ is defined as in Section \ref{sec:model}. 

Firm $i$ has an initial knowledge $\pi_{0,i}(\beta_{i,j})$ quantified as a zero-truncated Gaussian with parameters $\beta^{(0)}_{i,j}$ and $\tau_i$ (this is analogous to Assumption \ref{ass:3}, for the one-dimensional learning). We define:
\begin{equation}\label{eq:sr}
\displaystyle
s_i(t) = \sigma_i \varepsilon_i +  \sum_{j=1}^{n} \phi \beta_{i,j}^{(*)} \log y_{i,j}(t) \qquad \mbox{ and } \qquad z_{i,j}(t) = \phi \log y_{i,j}(t),
\end{equation}
where $\varepsilon_i \sim N(0,1)$. Further, for a given ordered subset of indices $\mathcal{U} \subseteq \mathcal{N}$, we define the $|\mathcal{U}|$-dimensional vector $\mathbf{z}_{i}(t, \mathcal{U})$, containing the components of $z_{i,j}(t)$, for every $j \in \mathcal{U}$, and the following collection of $|\mathcal{U}| \times |\mathcal{U}|$ matrices:
\begin{equation}\label{eq:H_matrix}
    H_i(t', t, \mathcal{U}) = \displaystyle \left(\sum_{\ell=t}^{t'} \frac{1}{\sigma_i^2} \mathbf{z}_{i,.}(\ell, \mathcal{U})\mathbf{z}_{i,.}(\ell, \mathcal{U})\T + \frac{1}{\tau_i^2} I_{|\mathcal{U}|} \right), \quad \mbox{ for } t' > t,
\end{equation}
where $I_{h}$ is the identity matrix of dimension $h$. Similarly, we consider the $|\mathcal{U}|$-dimensional vector  $\boldsymbol{\beta}_{i}^{(0)}(\mathcal{U})$, containing the components $\beta_{i,j}^{(0)}$, for every $j \in \mathcal{U}$, and define:
\begin{equation}\label{eq:B_vector}
\mathbf{b}_{i}(t', t, \mathcal{U}) = \displaystyle  \frac{1}{\tau_i^2}  \boldsymbol{\beta}_{i}(t-1, \mathcal{U}) + \frac{1}{\sigma_i^2} \sum_{\ell=t}^{t'} s_i(\ell) \mathbf{z}_{i,.}(\ell, \mathcal{U}), \quad \mbox{ for } t' > t,
\end{equation}
where $\boldsymbol{\beta}_{i}(t-1,\mathcal{U})$ is the $|\mathcal{U}|$-dimensional sub-vector of $\boldsymbol{\beta}_{i}(t-1)$, whose components correspond to the indices in $\mathcal{U}$ (in the same order), and set $\boldsymbol{\beta}_{i}(0,\mathcal{U}) = \boldsymbol{\beta}_{i}^{(0)}(\mathcal{U})$. We introduce the notation $\Omega_{i,t} = \left\{ \mathcal{U} \subseteq \mathcal{N} : H_i(t, 1, \mathcal{U})^{-1} \mathbf{b}_{i}(t, 1, \mathcal{U}) > 0 \right\}$. As for the one-dimensional learning, the MAP estimator $B(t)$ is constructed by maximizing the log-posterior distribution:
\begin{equation}\label{eq:MAP_W}
\begin{array}{ll}
    \displaystyle \max_{B(t) \geq 0} ~ \sum_{h=1}^{t} - \frac{1}{ \sigma_i^{2}} \left( \log x_i(h) - \log (l_{i}(h))^{(1-\phi)} \prod_{j=1}^{n} (y_{i,j}(h))^{ \phi \beta_{i,j}(t) }\right)^2- \frac{1}{\tau_i^{2}} \left(\beta_{i,j}(t) - \beta_{i,j}^{(0)} \right)^2\\[0.5cm]
    \displaystyle \max_{B(t) \geq 0} ~ - \frac{1}{ \sigma_i^{2}} \left( \log x_i(t) - \log (l_{i}(t))^{(1-\phi)} \prod_{j=1}^{n} (y_{i,j}(t))^{ \phi \beta_{i,j}(t) }\right)^2- \frac{1}{\tau_i^{2}} \left(\beta_{i,j}(t) - \beta_{i,j}(t-1) \right)^2
\end{array}
\end{equation}
for the path-dependent and path-independent case, respectively. Below we provide an algebraic characterization of the MAP estimators, consistently with the ones in Propositions \ref{prop:v_learning_long_mem} and \ref{prop:v_learning_short_mem}.

\begin{proposition}[MAP dynamics]\label{prop:multidim_MAP}
For each $i \in \mathcal{N}$ and $t \geq 1$, we have
$$
\boldsymbol{\beta}_i(t,\Omega_{i,t}) = H_i(t, 1, \Omega_{i,t})^{-1} \mathbf{b}_{i}(t, 1, \Omega_{i,t}) \qquad \mbox{ and } \qquad \boldsymbol{\beta}_i(t, \mathcal{N}/\Omega_{i,t}) = \mathbf{0},
$$
for the path-dependent learning, and
$$
\boldsymbol{\beta}_i(t,\Omega_{i,t}) = H_i(t, t, \Omega_{i,t})^{-1} \mathbf{b}_{i}(t, t, \Omega_{i,t}) \qquad \mbox{ and } \qquad \boldsymbol{\beta}_i(t, \mathcal{N}/\Omega_{i,t}) = \mathbf{0},
$$
for the path-independent learning.
\end{proposition}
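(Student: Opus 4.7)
The plan is to show that both MAP updates arise as the projection onto the non-negative orthant of a strictly concave Gaussian log-posterior, so the core of the argument is to identify the quadratic form in $\boldsymbol{\beta}_i(t)$ and then apply the KKT conditions for bound-constrained quadratic programming. To this end, I would first rewrite the residuals entering the likelihood: using the production technology \eqref{eq:ExpectedProduction_multi}, each residual $\log x_i(h) - \log \mu_i(h;\boldsymbol{\beta}_i(t))$ equals $s_i(h) - \mathbf{z}_i(h)\T \boldsymbol{\beta}_i(t)$ (up to the deterministic offset $(1-\phi)\log l_i(h)+m_i$, which does not depend on $\boldsymbol{\beta}_i(t)$), where $\mathbf{z}_i(h)$ is the full $n$-vector with components $z_{i,j}(h) = \phi \log y_{i,j}(h)$, and this residual is Gaussian with variance $\sigma_i^2$.

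For the path-dependent case, expanding each squared residual $(s_i(h) - \mathbf{z}_i(h)\T \boldsymbol{\beta})^2$ over $h=1,\ldots,t$, adding the prior quadratic $\|\boldsymbol{\beta} - \boldsymbol{\beta}_i^{(0)}\|^2/\tau_i^2$, and collecting terms, I would obtain (up to constants independent of $\boldsymbol{\beta}$) the log-posterior
\begin{equation*}
\ell_t^{\mathrm{PD}}(\boldsymbol{\beta}) \;=\; -\tfrac{1}{2}\,\boldsymbol{\beta}\T H_i(t,1,\mathcal{N})\,\boldsymbol{\beta} \;+\; \mathbf{b}_i(t,1,\mathcal{N})\T \boldsymbol{\beta},
\end{equation*}
with Hessian $H_i(t,1,\mathcal{N})$ and linear coefficient $\mathbf{b}_i(t,1,\mathcal{N})$ exactly matching the definitions in \eqref{eq:H_matrix}--\eqref{eq:B_vector}. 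Because $\tau_i^{-2} I_n$ is positive definite, so is $H_i(t,1,\mathcal{N})$, and $\ell_t^{\mathrm{PD}}$ is strictly concave, so the constrained maximization over $\boldsymbol{\beta}\geq 0$ admits a unique maximizer. Writing $\mathcal{U} = \{j : \beta_{i,j}(t)>0\}$ for its support, the KKT stationarity conditions require the gradient of $\ell_t^{\mathrm{PD}}$ to vanish on $\mathcal{U}$ and to be non-positive on $\mathcal{U}^c$. Restricting the normal equations $H_i(t,1,\mathcal{N})\boldsymbol{\beta} = \mathbf{b}_i(t,1,\mathcal{N})$ to the rows indexed by $\mathcal{U}$, and using that $\boldsymbol{\beta}$ vanishes outside $\mathcal{U}$, collapses the off-support columns and yields the sub-system $H_i(t,1,\mathcal{U})\,\boldsymbol{\beta}_i(t,\mathcal{U}) = \mathbf{b}_i(t,1,\mathcal{U})$; inversion then gives the stated closed form, and the positivity $\boldsymbol{\beta}_i(t,\mathcal{U})>0$ is precisely what places $\mathcal{U}$ in $\Omega_{i,t}$.

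For the path-independent case, Assumptions~\ref{ass:4}--\ref{ass:5} replace the accumulated likelihood with the single period-$(t-1)$ observation and shift the prior mean from $\boldsymbol{\beta}_i^{(0)}$ to the previous MAP $\boldsymbol{\beta}_i(t-1)$. The same derivation then applies verbatim, except that the sum defining the data term reduces to the single term $\ell=t$ and the prior quadratic becomes $\|\boldsymbol{\beta} - \boldsymbol{\beta}_i(t-1)\|^2/\tau_i^2$; this reproduces Hessian $H_i(t,t,\mathcal{N})$ and linear coefficient $\mathbf{b}_i(t,t,\mathcal{N})$, and the same KKT argument produces the second closed-form expression.

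The main obstacle is the disambiguation of the active set $\mathcal{U}$: as defined, the family $\Omega_{i,t}$ may a priori contain several subsets compatible with the positivity condition, so one must combine strict concavity of $\ell_t$ with the dual feasibility part of the KKT conditions (gradient non-positive on $\mathcal{U}^c$) to single out the unique support of the MAP; the proposition should be read with $\Omega_{i,t}$ denoting that specific subset rather than the whole family. A secondary technical point is to verify that switching between the likelihood written for $\log x_i(h)$ and for $x_i(h)$ only contributes a Jacobian independent of $\boldsymbol{\beta}_i(t)$, mirroring the observation in the footnote preceding \eqref{eq:likelihood} in the scalar case.
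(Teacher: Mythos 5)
Your proposal is correct and follows essentially the same route as the paper's proof: write the log-posterior as a strictly concave quadratic in $\boldsymbol{\beta}_i(t)$ with Hessian $H_i(\cdot)$ and linear term $\mathbf{b}_i(\cdot)$, apply the KKT conditions with complementary slackness to split indices into a positive support and a zero set, and restrict the normal equations to the support before inverting. Your closing remark about disambiguating the active set via dual feasibility is well taken --- the paper's proof handles exactly this point at the end by refining $\Omega_{i,t}$ with the additional gradient-sign condition on $\mathcal{N}/\Omega_{i,t}$ --- so no substantive gap remains.
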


This proposition has an important consequence. In fact, the complexity of the statistical learning in a parameters space of dimension $n$ from the realization of a noisy production of dimension one is reflected by the non-identifiablility of the log-likelihood function (corresponding to the first term in the expression \eqref{eq:MAP_W}), with respect to the whole input elasticities $\beta_{i,1}(t), \ldots \beta_{i,n}(t)$. This can be seen from the fact that $H_i(t', t, \mathcal{U})$ becomes singular when $\tau_i = \infty$ (improper uniform prior) and $t' - t$ is sufficiently small. In other words, in the absence of a prior knowledge, only a long series of realized productions can result in an identifiable learning model. Conversely, the existence of prior knowledge is sufficient for ensuring the identifiablility.

As established in Lemma \ref{lemma:H_conv} and Corollary \ref{cor:f_conditional} below, interesting properties can be deduced for two limit cases: when either $\gamma_i = (\tau_i/\sigma_i)^2$ or $\phi$ go to zero. 

\begin{lemma}[Diagonal inverse]\label{lemma:H_conv} For any $t \geq t' \geq 0$ and any $\Omega_{i,t} \subseteq \mathcal{N}$, we have 
$$
\displaystyle \lim_{\gamma_i \rightarrow 0} ~ H_{i,}^{-1}(t, t', \Omega_{i,t}) ~=~ 
\displaystyle \lim_{\phi \rightarrow 0} ~ H_{i,}^{-1}(t, t', \Omega_{i,t}) ~=~ \tau_i^2 I_{|\Omega_{i,t}|}.
$$
\end{lemma}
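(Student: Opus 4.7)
The plan is to reduce both limits to a single continuity-of-inversion argument by factoring $1/\tau_i^2$ out of $H_i(t,t',\Omega_{i,t})$. First I would rewrite
$$
H_i(t,t',\Omega_{i,t}) \;=\; \frac{1}{\tau_i^2}\Big(I_{|\Omega_{i,t}|} + \gamma_i\, S_i(t,t',\Omega_{i,t})\Big),
$$
where $S_i(t,t',\Omega_{i,t}) = \sum_{\ell=t'}^{t}\mathbf{z}_{i,\cdot}(\ell,\Omega_{i,t})\mathbf{z}_{i,\cdot}(\ell,\Omega_{i,t})^\top$ is a sum of rank-one positive semidefinite matrices, and I have used $\gamma_i = (\tau_i/\sigma_i)^2$ to absorb the $\sigma_i^{-2}$ factor. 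Since $S_i$ is PSD, every eigenvalue of $I + \gamma_i S_i$ is at least $1$ for any $\gamma_i \geq 0$, which guarantees that the inverse is well-defined throughout the limits, and yields
$$
H_i^{-1}(t,t',\Omega_{i,t}) \;=\; \tau_i^2\big(I_{|\Omega_{i,t}|} + \gamma_i S_i(t,t',\Omega_{i,t})\big)^{-1}.
$$

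For the first limit, $\gamma_i \to 0$, I would use continuity of the matrix inverse on the open set of invertible matrices: since $I + \gamma_i S_i \to I$ entrywise as $\gamma_i \to 0$, the inverse converges to $I_{|\Omega_{i,t}|}^{-1}=I_{|\Omega_{i,t}|}$, giving $\lim_{\gamma_i \to 0}H_i^{-1} = \tau_i^2 I_{|\Omega_{i,t}|}$. For the second limit, $\phi \to 0$, I would exploit the definition $z_{i,j}(\ell) = \phi \log y_{i,j}(\ell)$ from \eqref{eq:sr}, which makes $\mathbf{z}_{i,\cdot}(\ell,\Omega_{i,t})$ linear in $\phi$; consequently, each outer product $\mathbf{z}_{i,\cdot}\mathbf{z}_{i,\cdot}^\top$ is of order $\phi^2$ and $S_i(t,t',\Omega_{i,t}) \to 0$ as $\phi\to 0$. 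The same continuity argument then gives $\lim_{\phi \to 0}H_i^{-1} = \tau_i^2 I_{|\Omega_{i,t}|}$.

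The argument is essentially routine; the only subtlety worth flagging is that both limits rely on the finiteness of $\log y_{i,j}(\ell)$ over the finite window $\{t',\ldots,t\}$, so that $S_i$ stays bounded and the $o(\gamma_i)$ or $O(\phi^2)$ perturbation of $I$ vanishes uniformly in the entries. Because this holds for every fixed finite history and every fixed $\Omega_{i,t}\subseteq \mathcal{N}$, the conclusion applies uniformly to the statement of the lemma. Hence no deeper obstacle arises, and the two limits coincide with the stated value $\tau_i^2 I_{|\Omega_{i,t}|}$.
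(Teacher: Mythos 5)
Your proof is correct, but it takes a genuinely different route from the paper's. The paper invokes the generalized Sherman--Morrison formula of Miller (1981) to express $H_i^{-1}$ through a recursion over the rank-one updates $\mathbf{z}_{i,\cdot}(\ell)\mathbf{z}_{i,\cdot}(\ell)^\top$, then argues by induction that each correction term (whose entries are of the form $\gamma_i\phi^2 a_s a_q /(1+\gamma_i\phi^2\sum_j a_j^2)$) vanishes as $\gamma_i\to 0$ or $\phi\to 0$, so that $H_i^{-1}(\ell,1)\to\tau_i^2 I$ at every step of the recursion. Your argument instead factors $H_i = \tau_i^{-2}\bigl(I_{|\Omega_{i,t}|}+\gamma_i S_i\bigr)$ with $S_i$ positive semidefinite, notes that all eigenvalues of $I+\gamma_i S_i$ are at least $1$ (so invertibility is never in question), and concludes by continuity of matrix inversion at the identity, handling the $\phi\to 0$ case via $z_{i,j}(\ell)=\phi\log y_{i,j}(\ell)$ so that $S_i=O(\phi^2)\to 0$. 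Your route is shorter and more transparent for the lemma itself; the paper's recursive formulation has the side benefit of producing explicit period-by-period expressions for $H_i^{-1}$ that are reused in the surrounding analysis (e.g.\ in the derivation leading to Corollary~\ref{cor:f_conditional}), which a pure continuity argument does not provide. Your flagged caveat --- finiteness of $\log y_{i,j}(\ell)$ over the finite window so that $S_i$ stays bounded --- is the right thing to check and matches the implicit assumption in the paper. One presentational remark: both you and the paper treat $\tau_i$ as fixed while sending $\gamma_i=(\tau_i/\sigma_i)^2$ to zero (i.e.\ the limit is effectively $\sigma_i\to\infty$); this is consistent with the statement of the lemma, but it is worth being explicit that $\gamma_i$ is varied through $\sigma_i$ so that the limiting value $\tau_i^2 I_{|\Omega_{i,t}|}$ is well defined.
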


\begin{corollary}[Conditional distribution of $B(t)$]\label{cor:f_conditional} Based on Lemma \ref{lemma:H_conv}, as  $\gamma_i$ or $\phi$ approach zero, we have the following limit behavior
$$
\displaystyle \boldsymbol{\beta}_{i,.}(t+1, \Omega_{i,t}) ~ \rightsquigarrow~   \boldsymbol{\beta}_{i,.}^{(0)}(\Omega_{i,t}) + \gamma_i \sum_{\ell=1}^{t}  \left(\boldsymbol{\varepsilon}(\ell) + (B^{(*)}(\Omega_{i,t})\mathbf{z}_{i,.}(\ell, \Omega_{i,t})) \otimes \mathbf{z}_{i,.}(\ell, \Omega_{i,t})\right), \mbox{ and } \boldsymbol{\beta}_i(t, \mathcal{N}/\Omega_{i,t}) = \mathbf{0},
$$
for the path-dependent learning, and
$$
\displaystyle \boldsymbol{\beta}_{i,.}(t+1, \Omega_{i,t}) ~ \rightsquigarrow~   \boldsymbol{\beta}_{i,.}(t, \Omega_{i,t}) + \gamma_i \left(\boldsymbol{\varepsilon}(t) + (B^{(*)}(\Omega_{i,t})\mathbf{z}_{i,.}(t, \Omega_{i,t})) \otimes \mathbf{z}_{i,.}(t, \Omega_{i,t})\right), \mbox{ and } \boldsymbol{\beta}_i(t, \mathcal{N}/\Omega_{i,t}) = \mathbf{0},
$$
for the path-independent learning, where $\boldsymbol{\varepsilon}(\ell) \sim N(0,\sigma_i I_{|\Omega_{i,t}|} )$,  $\otimes$ is the element-wise matrix product, and $\rightsquigarrow$ denotes the limit behavior.\footnote{Formally, given functions $f(x)$ and $g(x)$, we have $f(x)\rightsquigarrow g(x)$ if and only if $\lim_{x \rightarrow \infty} \frac{f(x)}{g(x)} = 1$ \citep{de1981asymptotic}.} 
\end{corollary}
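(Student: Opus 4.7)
The plan is to combine the closed-form MAP representation of Proposition~\ref{prop:multidim_MAP} with the limit identity of Lemma~\ref{lemma:H_conv}, and then expand $\mathbf{b}_i$ to leading order in $\gamma_i$ (respectively $\phi$) to extract the drift and noise contributions. The complementary assertion $\boldsymbol{\beta}_i(t,\mathcal{N}/\Omega_{i,t}) = \mathbf{0}$ is already enforced by Proposition~\ref{prop:multidim_MAP}, so the work concentrates on the restriction to the active index set.

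For the path-dependent case, I start from Proposition~\ref{prop:multidim_MAP} applied to the index set $\Omega_{i,t}$, which reads
\[
\boldsymbol{\beta}_i(t+1,\Omega_{i,t}) \;=\; H_i(t+1,1,\Omega_{i,t})^{-1}\,\mathbf{b}_i(t+1,1,\Omega_{i,t}).
\]
Lemma~\ref{lemma:H_conv} yields $H_i^{-1}(t+1,1,\Omega_{i,t}) \rightsquigarrow \tau_i^2 I_{|\Omega_{i,t}|}$ in the prescribed limit, so the MAP is asymptotically equivalent to $\tau_i^2\,\mathbf{b}_i(t+1,1,\Omega_{i,t})$. Substituting definition~\eqref{eq:B_vector} and using $\gamma_i = (\tau_i/\sigma_i)^2$ to factor the prior normalization, I obtain
\[
\boldsymbol{\beta}_i(t+1,\Omega_{i,t}) \;\rightsquigarrow\; \boldsymbol{\beta}_i^{(0)}(\Omega_{i,t}) + \gamma_i \sum_{\ell=1}^{t} s_i(\ell)\,\mathbf{z}_{i,.}(\ell,\Omega_{i,t}).
\]
Decomposing $s_i(\ell)$ via \eqref{eq:sr}, the idiosyncratic part $\sigma_i\varepsilon_i(\ell)\,\mathbf{z}_{i,.}(\ell,\Omega_{i,t})$ is identified with the centered Gaussian vector $\boldsymbol{\varepsilon}(\ell)\sim\mathcal{N}(0,\sigma_i I_{|\Omega_{i,t}|})$ of the corollary, while the deterministic part, once its inner sum is restricted to $j\in\Omega_{i,t}$, rearranges into the element-wise factorization $(B^{(*)}(\Omega_{i,t})\mathbf{z}_{i,.}(\ell,\Omega_{i,t})) \otimes \mathbf{z}_{i,.}(\ell,\Omega_{i,t})$.

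The path-independent case is obtained by repeating the same three steps with $H_i(t+1,t+1,\Omega_{i,t})$ and $\mathbf{b}_i(t+1,t+1,\Omega_{i,t})$; the single-period sum in \eqref{eq:B_vector} replaces the initial prior by the previous estimate $\boldsymbol{\beta}_i(t,\Omega_{i,t})$, delivering the claimed Markovian recursion. The main obstacle is to justify the $\rightsquigarrow$ equivalence uniformly across coordinates: the next-order corrections to $H_i^{-1}$, which carry a factor of order $\tau_i^4/\sigma_i^2$ when $\gamma_i\to 0$ (and an analogous $\phi^2$ factor when $\phi\to 0$), generate only $O(\gamma_i^2)$ (respectively $O(\phi^2)$) perturbations once multiplied into $\mathbf{b}_i$, and are therefore absorbed in the asymptotic equivalence. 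Handling this uniformly when $\Omega_{i,t}\subsetneq\mathcal{N}$ further requires verifying that the off-diagonal block of $H_i$ coupling $\Omega_{i,t}$ to its complement does not contribute at leading order, which follows from the projection already built into Proposition~\ref{prop:multidim_MAP}.
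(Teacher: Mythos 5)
Your proposal is correct and follows essentially the same route as the paper's own proof: invoke Proposition~\ref{prop:multidim_MAP} for the closed form on $\Omega_{i,t}$, replace $H_i^{-1}$ by $\tau_i^2 I_{|\Omega_{i,t}|}$ via Lemma~\ref{lemma:H_conv}, and expand $\tau_i^2\,\mathbf{b}_i$ using $\gamma_i=(\tau_i/\sigma_i)^2$ together with the decomposition of $s_i(\ell)$ in \eqref{eq:sr}. Your added remarks on the order of the neglected corrections and the off-diagonal coupling are a modest strengthening of the paper's terser argument, not a departure from it.
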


From Corollary \ref{cor:f_conditional}, $\boldsymbol{\beta}_{i,.}(t+1, \Omega_{i,t})$ behaves as a combination of $\boldsymbol{\beta}_{i,.}^{(0)}(\Omega_{i,t})$ (the initial belief) and $B^{(*)}$ (the true parameter matrix). Consequently, when the uncertainty of the prior distribution is much smaller than the uncertainty of the production (i.e., when $\gamma_i$ goes to zero), the distribution of $\boldsymbol{\beta}_{i,.}(t+1, \Omega_{i,t})$ is centered around $\boldsymbol{\beta}_{i,.}^{(0)}(\Omega_{i,t})$, so that firms cannot learn. This result reaffirms an analogous pattern observed in Propositions \ref{prop:long_memory_learning} and \ref{prop:convergence_mode_long_memory}.

As already studied for the one-dimensional case (the returns to scale learning), the short memory can also be analyzed when learning the input-output elasticity matrix. Formally, in the short memory method, \eqref{eq:H_matrix} and \eqref{eq:B_vector} become
\begin{eqnarray}\label{eq:H_matrix_red}
        H_i(t+1, t+1, \mathcal{U}) & = & \displaystyle  \frac{1}{\sigma_i^2} \mathbf{z}_{i,.}(t+1, \mathcal{U})\mathbf{z}_{i,.}(t+1, \mathcal{U})\T + \frac{1}{\tau_i^2} I_{|\mathcal{U}|},\\[0.2cm]
        \mathbf{b}_{i}(t+1, t+1, \mathcal{U}) & = &  \displaystyle  \frac{1}{\tau_i^2}  \boldsymbol{\delta}_{i}(t,\mathcal{U}) + \frac{1}{\sigma_i^2}  s_i(t+1, \mathcal{U}) \mathbf{z}_{i,.}(t+1, \mathcal{U}),       
\end{eqnarray}
so that Corollary \ref{cor:f_conditional} reduces to
$$
\displaystyle \boldsymbol{\beta}_{i,.}(t+1, \Omega_{i,t}) ~ \rightsquigarrow~   \boldsymbol{\beta}_{i,.}(t,\Omega_{i,t}) + \gamma_i \left(\boldsymbol{\varepsilon}(t) + (B^{(*)}(\Omega_{i,t})\mathbf{z}_{i,.}(t, \Omega_{i,t})) \otimes \mathbf{z}_{i,.}(t, \Omega_{i,t})\right),
$$
still behaving as a combination of the previous belief $\boldsymbol{\beta}_{i,.}(t,\Omega_{i,t})$ and the true input-output elasticity structure $B^{(*)}$. This high-dimensional extension thus suggests that the whole analysis of input-output elasticity learning mirrors the dynamics of the returns to scale case, despite the increased complexity induced by the matrix expressions. However, a full proof for the belief convergence in this high-dimensional context is still an open question that we leave for future work.

\section{Conclusions}\label{sec:conclusions}

This study advances the understanding of learning mechanisms in a multi-sector general equilibrium framework. By embedding firms' information processing and decision-making into this structure, the research addresses the critical economic question of how firms adaptively uncover their production capabilities. We modeled a collection of representative firms operating in uncertain production environments without full knowledge of profit-relevant parameters. Our investigation focused on returns to scale (a cornerstone of production theory) and utilized a dynamic Bayesian framework to analyze firms' learning paths. Central to our analysis was the MAP estimator, which integrates input decisions and production outcomes to iteratively refine firms' beliefs.

The findings underscore key insights into the relationship between economic dynamics and learning. We identified the essential conditions under which firms can asymptotically learn the true returns to scale, highlighting the influence of idiosyncratic productivity shocks and the informativeness of prior beliefs. Notably, robust economic growth coupled with increasing input quantities fosters learning under a path-dependent approach, while path-independent learning offers a more efficient trajectory by prioritizing recent observations over historical data. In this vein, we demonstrated that long-memory (path-dependent) learning dynamics that keep track of all past estimations end up having worse performance than a short-memory (path-independent) approach.

In conclusion, this study constitutes a theoretical foundation for analyzing learning by economic agents within a multi-sector equilibrium framework, where the data-generating process that pins down the estimation problem is itself generated by firms' optimal input choices and market-clearing. In this sense, the Bayesian updating rule is grounded in equilibrium conditions and its properties are assessed jointly with the induced dynamics of allocations and prices. This structural link is also what underlies our comparison between long-memory and short-memory learning, since the sequence of observations used for inference is endogenously shaped by firms' decisions. 

Building on this approach, several extensions are natural. A particularly promising and challenging direction is a formal analysis of belief convergence in the high-dimensional setting developed in Section \ref{sec:Hight_dimension}, where firms must learn the entire input-output elasticity structure. While our results suggest that the dynamics closely mirror those of the one-dimensional case, a complete proof of convergence in this high-dimensional context remains an open question that we identify as a key avenue for future research. Secondly, future research can focus on applying the proposed MAP estimator in the Bayesian general equilibrium of \cite{toda2015bayesian}, where non-optimizing agents respond to prices by setting a prior distribution on their demand. Finally, the inclusion of durable capital inputs (which implies inter-temporal decisions by firms) might have an impact on the learning dynamics and deserves to be part of a future analysis.

\bibliographystyle{plainnat} 
\bibliography{references}    

\newpage

\appendix
\setcounter{page}{1}

\section{: Mathematical proofs}\
\label{Section:appendix1}


\subsection*{\textbf{Proposition \ref{prop:equilibrium}}}

\begin{proof} Let $x_i(t)$ and $\Bar{x}_i(t)$ denote the true and the believed production in sector $i$, respectively. Similarly, let $p_i(t)$ and $\Bar{p}_i(t)$ denote the corresponding true and believed prices in sector $i$. Note that both $x_i(t)$ and $\Bar{x}_i(t)$ (as well as $p_i(t)$ and $\Bar{p}_i(t)$) are random variables induced by the shock $\eta_i(t)$. In accordance with step (i) of the decision--estimation dynamics, for each $i \in \mathcal{N}$, the optimal demand for labor is obtained by solving \eqref{eq:firm_problem}:
\begin{equation}\label{eq:labor_demand}
    l_i(t) = \left(\dfrac{\hat{\zeta}_{i}(t)\,\RRew{\mathbb{E}\!\left[ \Bar{p}_i(t)\eta_{i}(t) \;\big|\;\mathcal{I}_i(t) \right]}}{ w(t) }\right)^{\dfrac{1}{1-\hat{\zeta}_i(t)}},
\end{equation}
where $\hat{\zeta}_i(t) = \max\left\{ \underline{\zeta}, \min\left\{ \overline{\zeta}, \zeta_i(t) \right\} \right\}$. Using the \RRew{true} consumer demand $p_{i}(t) = c_{i}(t)^{\alpha-1}$ and the market-clearing conditions $x_{i}(t) = c_{i}(t)$, the true prices satisfy
\RRew{
\[
x_i(t)=\eta_{i}(t)\left(\dfrac{\hat{\zeta}_{i}(t)P_{i,t}}{ w(t)}\right)^{\dfrac{\zeta_{i}^{(\ast)}}{1-\hat{\zeta}_{i}(t)}} 
= \left(\dfrac{1}{p_{i}(t)}\right)^{\dfrac{1}{1-\alpha}}
= c_i(t),
\]
therefore
\[
p_{i}(t)\,\eta_{i}(t)^{(1-\alpha)}
\left(\dfrac{\hat{\zeta}_{i}(t)P_{i,t}}{ w(t)}\right)^{\dfrac{\zeta_{i}^{(\ast)}(1-\alpha)}{1-\hat{\zeta}_{i}(t)}}
=1,
\]
}
where $P_{i,t}=\RRew{\mathbb{E}\!\left[ \Bar{p}_i(t)\eta_{i}(t)\;\big|\;\mathcal{I}_i(t)\right]}$. Hence,
\begin{equation}\label{eq:p_it}
    p_{i}(t)
    =\eta_{i}(t)^{\alpha-1}\left(\dfrac{\hat{\zeta}_{i}(t)P_{i,t}}{ w(t)}\right)^{\dfrac{\zeta_{i}^{(\ast)}(\alpha-1)}{1-\hat{\zeta}_{i}(t)}} .
\end{equation}

In the same way, the $i$-th representative firm deduces $\Bar{p}_i(t)$ from the believed consumer demand $\Bar{p}_{i}(t) = c_{i}(t)^{\alpha-1}$ and the market-clearing conditions $\Bar{x}_{i}(t) = c_{i}(t)$, obtaining
\[
\Bar{x}_{i}(t)
=\eta_{i}(t)\left(\dfrac{\hat{\zeta}_{i}(t)P_{i,t}}{ w(t)}\right)^{\dfrac{\hat{\zeta}_{i}(t)}{1-\hat{\zeta}_{i}(t)}}
=\left(\dfrac{1}{\Bar{p}_{i}(t)}\right)^{\dfrac{1}{1-\alpha}} .
\]

Note that \eqref{eq:p_it} is the true price that satisfies market clearing. To obtain the price expected by the $i$-th firm, we replace $\zeta_{i}^{(\ast)}$ with $\hat{\zeta}_{i}(t)$ in \eqref{eq:p_it}. \RRew{Since $\mathbb{E}[ \eta_{i}(t)^{\alpha} ] = \mathbb{E}[ \eta_{i}(t)^{\alpha} \mid \mathcal{I}_i(t) ]$}, we have
\[
\begin{array}{ll}
P_{i,t}
= \RRew{\mathbb{E}\!\left[ \Bar{p}_i(t)\eta_{i}(t) \;\big|\;\mathcal{I}_i(t) \right]}
&= \mathbb{E}\!\left[ \eta_{i}(t)^{\alpha}
\left(\dfrac{\hat{\zeta}_{i}(t)P_{i,t}}{ w(t)}\right)^{\dfrac{\hat{\zeta}_{i}(t)(\alpha-1)}{1-\hat{\zeta}_{i}(t)}}
\Bigg|\; \mathcal{I}_i(t) \right] \\[0.5cm]
&= \mathbb{E}\!\left[ \eta_{i}(t)^{\alpha} \right]
\left(\dfrac{\hat{\zeta}_{i}(t)P_{i,t}}{ w(t)}\right)^{\dfrac{\hat{\zeta}_{i}(t)(\alpha-1)}{1-\hat{\zeta}_{i}(t)}} .
\end{array}
\]
Hence,
\[
\begin{array}{ll}
\left( P_{i,t} \right)^{\dfrac{1-\hat{\zeta}_{i}(t)\alpha}{1-\hat{\zeta}_{i}(t)}}
&= \mathbb{E}\!\left[  \eta_{i}(t)^{\alpha} \right]\left(\dfrac{\hat{\zeta}_{i}(t)}{ w(t)}\right)^{\dfrac{\hat{\zeta}_{i}(t)(\alpha-1)}{1-\hat{\zeta}_{i}(t)}} ,\\[0.5cm]
P_{i,t}
&= \mathbb{E}\!\left[  \eta_{i}(t)^{\alpha} \right]^{\dfrac{1-\hat{\zeta}_{i}(t)}{1-\hat{\zeta}_{i}(t)\alpha}}
\left(\dfrac{\hat{\zeta}_{i}(t)}{ w(t)}\right)^{\dfrac{\hat{\zeta}_{i}(t)(\alpha-1)}{1-\hat{\zeta}_{i}(t)\alpha}} .
\end{array}
\]

Using \eqref{eq:labor_demand}, we obtain
\[
\begin{array}{ll}
l_i(t)
&= \left[\dfrac{\hat{\zeta}_{i}(t) P_{i,t}}{w(t)}\right]^{\dfrac{1}{1-\hat{\zeta}_i(t)}} \\[0.5cm]
&= \left[ \mathbb{E}\!\left[ \eta_{i}(t)^{\alpha} \right] \right]^{\dfrac{1}{1-\hat{\zeta}_i(t)\alpha}}
\left(\dfrac{\hat{\zeta}_{i}(t)}{ w(t)}\right)^{\dfrac{1}{1-\hat{\zeta}_{i}(t)\alpha}} .
\end{array}
\]

Using \eqref{eq:p_it}, we obtain
\[
\begin{array}{ll}
p_{i}(t)
&= \eta_{i}(t)^{\alpha-1}\left[\dfrac{\hat{\zeta}_{i}(t) P_{i,t}}{ w(t)}\right]^{\dfrac{\zeta_{i}^{(\ast)}(\alpha-1)}{1-\hat{\zeta}_{i}(t)}} \\[0.5cm]
&= \eta_{i}(t)^{\alpha-1}
\left[\mathbb{E}\!\left[  \eta_{i}(t)^{\alpha} \right]\right]^{\dfrac{\zeta_{i}^{(\ast)}(\alpha-1)}{1-\hat{\zeta}_{i}(t)\alpha}}
\left(\dfrac{\hat{\zeta}_{i}(t)}{ w(t)}\right)^{\dfrac{\zeta_{i}^{(\ast)}(\alpha-1)}{1-\hat{\zeta}_{i}(t)\alpha}} \\[0.5cm]
&= \left(\eta_{i}(t)\,l_i(t)^{\zeta_{i}^{(\ast)}}\right)^{\alpha-1}.
\end{array}
\]

Therefore,
$$
l_i(t) = \displaystyle \left[ \mathbb{E}\!\left[ \eta_{i}(t)^{\alpha} \right] \right]^{\dfrac{1}{1-\hat{\zeta}_i(t)\alpha}}
\left(\dfrac{\hat{\zeta}_{i}(t)}{ w(t)}\right)^{\dfrac{1}{1-\hat{\zeta}_{i}(t)\alpha}} \quad \mbox{and} \quad 
p_i(t) = \left(\eta_{i}(t)\,l_i(t)^{\zeta_{i}^{(\ast)}}\right)^{\alpha-1}.
$$
\end{proof}


\subsection*{\textbf{Lemma \ref{lemma:L}}}

\begin{proof}
We first express the labor demand as
$l_i(t)=\chi_{i,t}\,w(t)^{\kappa_{i,t}}$,
where
\[
\kappa_{i,t} = - \dfrac{1 }{1 - \hat{\zeta}_{i}(t)\alpha } \quad \mbox{ and } \quad 
\chi_{i,t}  =  \left( \mathbb{E}\!\left[ \eta_{i}(t)^{\alpha}  \right] \hat{\zeta}_{i}(t) \right)^{\dfrac{1 }{1 - \hat{\zeta}_{i}(t)\alpha } } .
\]
Since $\hat{\zeta}_i(t) < 1$ and $\alpha < 1$, we have $\hat{\zeta}_{i}(t)\alpha < 1$ and hence $\kappa_{i,t}<0$.
Therefore, $L_{t}(w)$ is continuous (as it is a sum of continuous functions), and
\[
\frac{d}{dw} L_{t}(w) \;=\; \sum_{i=1}^{n} \chi_{i,t}\,\kappa_{i,t}\, w^{\kappa_{i,t} - 1} \;<\; 0.
\]
As a consequence, we have $\lim_{w \rightarrow + \infty} L_t(w) = 0$ and $\lim_{w \rightarrow 0} L_t(w) = +\infty$.
\end{proof}


\subsection*{\textbf{Proposition \ref{prop:v_learning_long_mem}}}

\begin{proof}
We drop the index $i$ and the time period $t$ from $\zeta_i(t)$, as this proof is valid for all firms and all periods. Based on Assumption~\ref{ass:3}, $\pi_{0,i}$ is the density of a zero-truncated Gaussian random variable with parameters $\zeta^{(0)}_i$ and $\tau_i$. As clarified in Section \ref{sec:learning_dynamics_long}, the likelihood is written in terms of log-output. Hence, defining $s_i(\ell):=\log x_i(\ell)-m_i$ and $z_i(\ell):=\log l_i(\ell)$ (hence $\log\mu_i(\ell)=\zeta z_i(\ell)$), the conditional density of $s_i(\ell)$ given $(\zeta,z_i(\ell))$ is Gaussian:
\[
f\big(s_i(\ell)\mid \zeta,z_i(\ell)\big)
=\frac{1}{\sigma_i\sqrt{2\pi}}
\exp\!\left(-\frac{\big(s_i(\ell)-\zeta z_i(\ell)\big)^2}{2\sigma_i^2}\right).
\]
Assuming conditional independence across $\ell$, we obtain:
\[
\mathscr{L}(\zeta;\mathcal{I}_i(t))
=\prod_{\ell=1}^{t}\frac{1}{\sigma_i\sqrt{2\pi}}
\exp\!\left(-\frac{\big(s_i(\ell)-\zeta z_i(\ell)\big)^2}{2\sigma_i^2}\right).
\]
For the case $l_i(\ell)>0$ for all $\ell=1,\ldots,t$, we have $z_i(\ell)\in\mathbb{R}$, so that the (unnormalized) posterior satisfies
\[
\mathscr{L}(\zeta;\mathcal{I}_i(t))\,\pi_{0,i}(\zeta)\ \propto\
\begin{cases}
\displaystyle \exp\!\left(-\frac{1}{2}\theta_i(\zeta;\mathcal{I}_i(t))\right) & \text{if }\zeta\ge 0,\\[0.2cm]
0 & \text{otherwise,}
\end{cases}
\]
where $\theta_i$ is defined as
\begin{equation*}
\theta_i(\zeta;\mathcal{I}_i(t))=
\begin{cases}
\displaystyle \sum_{\ell=1}^{t}\frac{1}{\sigma_i^2}\big(s_i(\ell)-\zeta z_i(\ell)\big)^2
+\frac{1}{\tau_i^2}\big(\zeta-\zeta_i^{(0)}\big)^2, & \text{if } t>1,\\[0.6cm]
\displaystyle \frac{1}{\tau_i^2}\big(\zeta-\zeta_i^{(0)}\big)^2, & \text{if } t=1.
\end{cases}
\end{equation*}
To maximize the posterior distribution for each period $t$, it is sufficient to solve
\[
\max_{\zeta\ge 0}\ \exp\!\left(-\frac{1}{2}\theta_i(\zeta;\mathcal{I}_i(t))\right),
\]
since the normalizing constant $\int \mathscr{L}(\zeta;\mathcal{I}_i(t))\,\pi_{0,i}(\zeta)\,d\zeta$ does not depend on $\zeta$. By the Karush--Kuhn--Tucker conditions, there exists a multiplier $\lambda \ge 0$ such that
\begin{equation*}
\frac{\partial}{\partial \zeta}\left[ \exp\!\left(-\frac{1}{2}\theta_i(\zeta; \mathcal{I}_i(t))\right) - \lambda \zeta\right]=0,
\qquad \lambda \zeta = 0.
\end{equation*}
Therefore,
\[
-\frac{1}{2}\exp\!\left(-\frac{1}{2}\theta_i(\zeta; \mathcal{I}_i(t))\right)\left(\frac{\partial}{\partial \zeta}\theta_i(\zeta; \mathcal{I}_i(t))\right)=\lambda .
\]
Due to complementarity, either $\lambda = 0$ or $\zeta = 0$. Therefore, if $\zeta>0$, then $\lambda=0$ and $\frac{\partial}{\partial v}\theta_i(\zeta; \mathcal{I}_i(t))=0$. This implies that for any $i\in\mathcal{N}$, $\zeta$ is the maximum between zero and the solution of
\begin{equation*}
\begin{cases}
\displaystyle  \sum_{\ell=1}^{t} \frac{1}{\sigma_i^2} z_{i}(\ell)\Big( s_i(\ell) - \zeta\, z_{i}(\ell)  \Big) = \frac{1}{\tau_i^2} \left( \zeta - v_{i}^{(0)} \right)  & \quad \mbox{if } t > 1,\\[0.3cm]
\displaystyle \zeta = v_{i}^{(0)} & \quad \mbox{if } t = 1.
\end{cases}
\end{equation*}
Therefore, we obtain \eqref{eq:v_learning_n1_long_mem}, for the case $l_i(\ell) > 0$ for all $\ell = 1, \ldots, t$ and $i \in \mathcal{N}$.
\end{proof}


\subsection*{\textbf{Proposition \ref{prop:conv_pointwise}}}

\begin{proof}
Let us define the following transformation:
\[
\zeta_i(L_i(t), E_i(t)) \;=\; \left( \dfrac{\zeta^{(0)}_i  + \gamma_i \sum_{k=1}^{t}  z_i(k)\, s_i(k)}{1 + \gamma_i \sum_{k=1}^{t} z_i(k)^2} \right)^{+},
\]
where $L_i(t)=\{l_i(k)\}_{k=1}^{t}$ is a deterministic sequence and $E_i(t)=\{\varepsilon_i(k)\}_{k=1}^{t}$ is a sequence of i.i.d.\ Gaussian random variables $\varepsilon_i(k)\sim \mathcal{N}(m_i,\sigma_i)$. Using $s_i(k)=\varepsilon_i(k)+\zeta_i^{(*)}z_i(k)$, we can write
\[
\sum_{k=1}^{t} z_i(k)s_i(k)
= \sum_{k=1}^{t} z_i(k)\varepsilon_i(k) + \zeta_i^{(*)}\sum_{k=1}^{t} z_i(k)^2.
\]
As $l_i(\ell)\to 0$, we have $z_i(\ell)=\log l_i(\ell)\to -\infty$, while $z_i(k)$ is fixed for $k\neq \ell$. Hence,
\[
\dfrac{\sum_{k=1}^{t} z_i(k)\varepsilon_i(k)}{\sum_{k=1}^{t} z_i(k)^2}
= \dfrac{z_i(\ell)\varepsilon_i(\ell) + \sum_{k\neq \ell} z_i(k)\varepsilon_i(k)}{z_i(\ell)^2 + \sum_{k\neq \ell} z_i(k)^2}
\longrightarrow 0,
\]
and also $\dfrac{v_i^{(0)}}{\gamma_i \sum_{k=1}^{t} z_i(k)^2}\to 0$ and $\dfrac{1}{\gamma_i \sum_{k=1}^{t} z_i(k)^2}\to 0$. Therefore, for each realization $E_i(t)$,
\[
\underset{l_i(\ell)\rightarrow 0}{\lim}\, \zeta_i(L_i(t),E_i(t))
= \left(\zeta_i^{(*)}\right)^{+}
= \zeta_i^{(*)},
\]
since $\zeta_i^{(*)}>0$. Because this limit holds for every realization of $E_i(t)$, it follows that
\[
\mathbb{P}\!\left(\underset{l_i(\ell)\rightarrow 0}{\lim}\, \zeta_i(t+1) = \zeta_i^{(*)}\right)=1.
\]
\end{proof}


\subsection*{\textbf{Proposition \ref{prop:expect_v_long_memory}}}

\begin{proof}
Let $L_i(t)=\{l_i(\ell)\}_{\ell=1}^{t}$. Throughout the proof we work \emph{conditionally on} $L_i(t)$ (equivalently, on the sequence $\{z_i(\ell)\}_{\ell\le t}$, since $z_i(\ell)$ is a deterministic function of $l_i(\ell)$). In particular, the quantities
\[
\tilde z_i^{(1)}(t)=\sum_{\ell=1}^{t-1} z_i(\ell)
\quad\text{and}\quad
\tilde z_i^{(2)}(t)=\sum_{\ell=1}^{t-1} z_i(\ell)^2
\]
are non-random given $L_i(t)$ (they depend only on past input choices), and hence so are
$\overline{v}_i(L_i(t))$, $\overline{\varphi}_i(L_i(t))$, and $b_i(t)$ as defined below. Under the log-output specification, the only source of randomness in $\zeta_i(t+1)$ conditional on $L_i(t)$ is the Gaussian term $\varepsilon\sim\mathcal{N}(0,1)$. For a given $i\in\mathcal{N}$, define
\[
b_i(t)=-\dfrac{\overline{v}_i(L_i(t))}{\overline{\varphi}_i(L_i(t))}
\quad\mbox{ and }\quad
\tilde{\zeta}_i(t)=\overline{v}_i(L_i(t))+\overline{\varphi}_i(L_i(t))\,\varepsilon,
\]
where $\varepsilon \sim \mathcal{N}(0,1)$. We distinguish two cases.

\begin{itemize}
    \item[-] Case 1. If $\sum_{\ell=1}^{t} z_i(\ell) > 0$, we have
\begin{equation}\label{eq:Exp_U}
\begin{array}{ll}
\mathbb{E}\!\left[\max(0,\tilde{\zeta}_i(t))\right]
&= \mathbb{E}\!\left[(\overline{v}_i(L_i(t)) + \overline{\varphi}_i(L_i(t)) \varepsilon)^{+}\right]\\[0.5cm]
& \displaystyle =  \overline{v}_i(L_i(t)) \int_{b_i(t)}^{\infty}  \frac{1}{\sqrt{2\pi}}e^{-\varepsilon^2/2} d \varepsilon
~+~ \overline{\varphi}_i(L_i(t))\int_{b_i(t)}^{\infty}  \varepsilon \frac{e^{-\varepsilon^2/2} }{\sqrt{2\pi}}d \varepsilon \\[0.5cm]
& \displaystyle =  \overline{v}_i(L_i(t))(1-F(b_i(t)))
~+~ \overline{\varphi}_i(L_i(t))\int_{b_i(t)}^{\infty}   \varepsilon \frac{e^{-\varepsilon^2/2}}{\sqrt{2\pi}} d \varepsilon.
\end{array}
\end{equation}
To compute the latter integral, we define
$$
B(b,m) = \int_{b}^{\infty}   \varepsilon^m \frac{1}{\sqrt{2\pi}}e^{-\varepsilon^2/2} d \varepsilon ,
$$
and note that using integration by parts
$$
\begin{array}{ll}
B(b,m-2)
& = \displaystyle \dfrac{1}{m-1}\left( \frac{1}{\sqrt{2\pi}} \lim_{x \rightarrow \infty} \left[ x^{m-1} e^{-x^2/2} \right]
- b^{m-1} f(b)  + B(b,m) \right)  \\[0.5cm]
& = \displaystyle \dfrac{1}{m-1}\left(B(b,m) - b^{m-1} f(b) \right),
\end{array}
$$
where the last equality comes from the fact that
$$
0 \leq \lim_{x \rightarrow \infty} \left[ x^{m-1} e^{-x^2/2} \right]
= \lim_{x \rightarrow \infty} \left[ x e^{-\frac{x^2}{2(m-1)}} \right]^{m-1}
\leq \lim_{x \rightarrow \infty} \left[ e^{-\frac{x^2}{2(m-1)} + x } \right]^{m-1} = 0.
$$
We obtain the recurrence relation $B(b,m) = (m-1)B(b,m-2) + b^{m-1} f(b)$, so that $B(b,1) = f(b)$.
Substituting it back into \eqref{eq:Exp_U}, we obtain
$$
\begin{array}{lll}
\mathbb{E}[\zeta_i(t+1) ~|~ L_i(t)]
& = \overline{v}_i(L_i(t))(1-F(b_i(t)))  ~+~ \overline{\varphi}_i(L_i(t)) f(b_i(t)) \\[0.5cm]
& =  \dfrac{\Big(\zeta^{(0)}_i  + \gamma_i \zeta_i^{(*)} \tilde{z}^{(2)}_{i}(t)\Big)\left(1-F\left(b_i(t)\right)\right)
+ \gamma_i \sigma_i \tilde{z}^{(1)}_{i}(t) f\left(b_i(t)\right)}{1 + \gamma_i \tilde{z}^{(2)}_{i}(t)}.
\end{array}
$$
Applying the same procedure for the second order moment, and using the recurrence relation
$B(b,m) = (m-1)B(b,m-2) + b^{m-1} f(b)$, with the initial conditions
$$
\left\{\begin{array}{lll}
B(b,0) & = \displaystyle \int_{b}^{\infty}   \frac{1}{\sqrt{2\pi}}e^{-\varepsilon^2/2} d \varepsilon & = 1-F(b),\\[0.5cm]
B(b,1) & = f(b), & 
\end{array}\right.
$$
we have $B(b,2) = B(b,0) + b f(b)$, so that
$$
\begin{array}{ll}
\mathbb{E}[((\tilde{\zeta}_i(t))^+)^2]
& = \mathbb{E}\!\left[((\overline{v}_i(L_i(t)) + \overline{\varphi}_i(L_i(t)) \varepsilon)^{+})^2\right]\\[0.5cm]
& \displaystyle =
\left[\overline{v}_i(L_i(t))^2+\overline{\varphi}_i(L_i(t))^2\right]\left[1-F(b_i(t))\right]
+ \overline{v}_i(L_i(t)) \overline{\varphi}_i(L_i(t))f(b_i(t)).
\end{array}
$$

\item[-] Case 2. If $\sum_{\ell=1}^{t} z_i(\ell) < 0$, we have
$$
\begin{array}{ll}
   \mathbb{E}\!\left[(\tilde{\zeta}_i(t))^+\right]
   & =  \displaystyle \int_{-\infty}^{b_i(t)}  (\overline{v}_i(L_i(t)) + \overline{\varphi}_i(L_i(t)) \varepsilon)\frac{1}{\sqrt{2\pi}}e^{-\varepsilon^2/2} d \varepsilon \\[0.5cm]
   & = \displaystyle  \overline{v}_i(L_i(t))F(b_i(t))  ~+~
   \overline{\varphi}_i(L_i(t))\int_{-\infty}^{b_i(t)}   \varepsilon \frac{1}{\sqrt{2\pi}}e^{-\varepsilon^2/2} d \varepsilon.
\end{array}
$$
Since $\int_{-\infty}^{b} \varepsilon \frac{1}{\sqrt{2\pi}}e^{-\varepsilon^2/2} d \varepsilon = -f(b)$, we obtain
$$
\begin{array}{lll}
\mathbb{E}[\zeta_i(t+1) ~|~ L_i(t)]
& = \overline{v}_i(L_i(t))F(b_i(t)) ~-~ \overline{\varphi}_i(L_i(t)) f(b_i(t)) \\[0.5cm]
& =  \dfrac{\Big(\zeta^{(0)}_i  + \gamma_i \zeta_i^{(*)} \tilde{z}^{(2)}_{i}(t)\Big)F\left(b_i(t)\right)
- \gamma_i \sigma_i \tilde{z}^{(1)}_{i}(t) f\left(b_i(t)\right)}{1 + \gamma_i \tilde{z}^{(2)}_{i}(t)}.
\end{array}
$$
To compute the second order moment, define
$$
\overline{B}(b,m) = \int_{-\infty}^{b}   \varepsilon^m \frac{1}{\sqrt{2\pi}}e^{-\varepsilon^2/2} d \varepsilon.
$$
By the same arguments, $\overline{B}(b,m) = (m-1)\overline{B}(b,m-2) - b^{m-1} f(b)$, and therefore
$$
\mathbb{E}[((\tilde{\zeta}_i(t))^+)^2] =
\left[\overline{v}_i(L_i(t))^2+\overline{\varphi}_i(L_i(t))^2\right]F(b_i(t))
- \overline{v}_i(L_i(t)) \overline{\varphi}_i(L_i(t))f(b_i(t)).
$$
\end{itemize}

\end{proof}


\subsection*{\textbf{Proposition \ref{prop:mode_v_long_memory}}}

\begin{proof}
Using \eqref{eq:v_learning_n1_long_mem} and \eqref{eq:v_bar}, and conditioning on the input decisions $L_i(t)$, we have
\[
\zeta_i(t+1)=\big(\overline{v}_i(L_i(t))+\overline{\varphi}_i(L_i(t))\,\varepsilon\big)^+,
\qquad \varepsilon\sim\mathcal{N}(0,1).
\]
Hence, the conditional distribution of $\zeta_i(t+1)$ is a zero-truncated Gaussian: it has a continuous density on $(0,\infty)$ and an atom at $0$. In particular, for $z>0$,
\[
f_{\zeta_i(t+1)\mid L_i(t)}(z)
=\frac{1}{|\overline{\varphi}_i(L_i(t))|\sqrt{2\pi}}
\exp\!\left(
-\frac{\big(z-\overline{v}_i(L_i(t))\big)^2}{2|\overline{\varphi}_i(L_i(t))|^2}
\right),
\]
while the point mass at $0$ is
\[
\mathbb{P}\!\left(\zeta_i(t+1)=0 \mid L_i(t)\right)
=\mathbb{P}\!\left(\overline{v}_i(L_i(t))+\overline{\varphi}_i(L_i(t))\,\varepsilon \le 0 \mid L_i(t)\right)
=F\!\left(-\frac{\overline{v}_i(L_i(t))}{\overline{\varphi}_i(L_i(t))}\right),
\]
where $F$ denotes the cdf of standard Gaussian random variable.
Equivalently, we can write
\[
\mathbb{P}\!\left(\zeta_i(t+1)\mid L_i(t)\right)=
\begin{cases}
\displaystyle
\frac{1}{|\overline{\varphi}_i(L_i(t))|\sqrt{2\pi}}
\exp\!\left(
-\frac{\big(\zeta_i(t+1)-\overline{v}_i(L_i(t))\big)^2}{2|\overline{\varphi}_i(L_i(t))|^2}
\right),
& \mbox{if } \zeta_i(t+1)>0,\\[0.5cm]
\displaystyle
F\!\left(-\frac{\overline{v}_i(L_i(t))}{\overline{\varphi}_i(L_i(t))}\right),
& \mbox{if } \zeta_i(t+1)=0,\\[0.2cm]
0, & \mbox{otherwise.}
\end{cases}
\]
For $\zeta_i(t+1)>0$, the density above is maximized at
$\zeta_i(t+1)=\overline{v}_i(L_i(t))$ if $\overline{v}_i(L_i(t))>0$, with peak value
$\big(|\overline{\varphi}_i(L_i(t))|\sqrt{2\pi}\big)^{-1}$.
At $\zeta_i(t+1)=0$, the probability mass is
$F\!\left(-\overline{v}_i(L_i(t))/\overline{\varphi}_i(L_i(t))\right)$.
Therefore,
\[
\mathbb{M}\!\left[\zeta_i(t+1)\mid L_i(t)\right]=
\begin{cases}
\overline{v}_i(L_i(t))
& \quad \mbox{if } \displaystyle \frac{1}{|\overline{\varphi}_i(L_i(t))|\sqrt{2\pi}}
\geq
F\!\left(-\dfrac{\overline{v}_i(L_i(t))}{\overline{\varphi}_i(L_i(t))}\right),\\[0.4cm]
0,
& \quad \mbox{otherwise.}
\end{cases}
\]
\end{proof}


\subsection*{\textbf{Proposition \ref{prop:long_memory_learning}}}

\begin{proof} We distinguish the following cases: Recall the following sequence (defined in \eqref{eq:v_bar}) 
$$
    \overline{v}_i(L_i(t))  = \dfrac{ \zeta^{(0)}_i  + \gamma_i \zeta_i^{(*)} \tilde{z}^{(2)}_{i}(t)}{1 + \gamma_i \tilde{z}^{(2)}_{i}(t)}, \quad 
    \overline{\varphi}_i(L_i(t))  = \dfrac{\gamma_i  \sigma_i \tilde{z}^{(1)}_{i}(t)}{1 + \gamma_i \tilde{z}^{(2)}_{i}(t)}, \quad -\left|\dfrac{\overline{v}_i(L_i(t))}{\overline{\varphi}_i(L_i(t))}\right| = -\dfrac{ \zeta^{(0)}_i  + \gamma_i \zeta_i^{(*)} \tilde{z}^{(2)}_{i}(t)}{\gamma_i |\tilde{z}^{(1)}_{i}(t)|}
$$
We have the following cases

\begin{itemize}
    \item[(1)] If $\underset{t \rightarrow +\infty }\lim |\tilde{z}_{i}^{(1)}(t)| = \underset{t \rightarrow +\infty }\lim \tilde{z}_{i}^{(2)}(t)=+\infty$ and 
    $$
    \underset{t \rightarrow +\infty }\lim  \dfrac{ \zeta^{(0)}_i  + \gamma_i \zeta_i^{(*)} \tilde{z}^{(2)}_{i}(t)}{\gamma_i |\tilde{z}^{(1)}_{i}(t)|} = L < +\infty,
    $$
    then 
    $$
    \underset{t \rightarrow +\infty }\lim  \dfrac{ \zeta^{(0)}_i  + \gamma_i \zeta_i^{(*)} \tilde{z}^{(2)}_{i}(t)}{\gamma_i |\tilde{z}^{(1)}_{i}(t)|} = L+o(t)
    $$
    with $\underset{t \rightarrow +\infty }\lim o(t)=0$. Then,
    $$
    \begin{array}{ll}
    \underset{t \rightarrow +\infty }\lim\mathbb{E}[\zeta_i(t)] & = \underset{t \rightarrow +\infty }\lim \dfrac{\Big( \zeta^{(0)}_i  + \gamma_i \zeta_i^{(*)}\tilde{z}_{i}^{(2)}(t)\Big)(1-F(-L-o(t))+     \sigma_i\gamma_i|\tilde{z}_{i}^{(1)}(t)|f(-L-o(t))}{1+\gamma_i\tilde{z}_{i}^{(2)}(t)}\\
    & = \zeta_i^{(*)}(1-F(-L)) +
    \sigma_i f(-L) \underset{t \rightarrow +\infty }\lim\dfrac{|\tilde{z}_{i}^{(2)}(t)|}{\tilde{z}_{i}^{(1)}(t)}.
    \end{array}
    $$ 
    \item[(2)] If $\underset{t \rightarrow +\infty }\lim \tilde{z}_{i}^{(1)}(t) =L_1 < +\infty$ and $\underset{t \rightarrow +\infty }\lim \tilde{z}_{i}^{(2)}(t) = L_2< +\infty$, let us consider the following function 
    $$
    \Upsilon(x,y)=\dfrac{\Big( \zeta^{(0)}_i  + \gamma_i \zeta_i^{(*)}y\Big)G_i(x,y)+ \sigma_i\gamma_i x g_i(x,y)}{1+\gamma_i y}.
    $$
    Since $\Upsilon$ is continuous in $\mathbb{R}_+^{*}\times\mathbb{R}_+$, then we need to distinguish the two following cases:
    \begin{itemize}
    \item[(2.1)] If $L_1\neq0$, then by continuity of $\Upsilon$, we obtain that $$
        \underset{t \rightarrow +\infty }\lim \mathbb{E}[\zeta_i(t)] =\dfrac{\Big( \zeta^{(0)}_i  + \gamma_i \zeta_i^{(*)}L_2\Big)G_i(L_1,L_2)+\sigma_i\gamma_i L_1 g_i(L_1,L_2)}{1+\gamma_i L_2}.
    $$
    \item[(2.2)] If $L_1=0$, then $g_i(\tilde{z}_{i}^{(1)}(t),\tilde{z}_{i}^{(2)}(t))=o_1(t)$ and  $G_i(\tilde{z}_{i}^{(1)}(t),\tilde{z}_{i}^{(2)}(t))=1+o_2(t)$ with $\underset{t \rightarrow +\infty }\lim o_1(t)=\underset{t \rightarrow +\infty }\lim o_2(t)=0$. Then,
    $$
    \mathbb{E}[\zeta_i(t)] = \dfrac{\Big( \zeta^{(0)}_i  + \gamma_i \zeta_i^{(*)}\tilde{z}_{i}^{(1)}(t)\Big)(1+o_2(t)) + \sigma_i \gamma_i |\tilde{z}_{i}^{(1)}(t)|o_1(t)}{1+\gamma_i\tilde{z}_{i}^{(2)}(t)}.
    $$
    Therefore, we obtain
    $$
        \begin{array}{ll}
        \underset{t \rightarrow +\infty }\lim \mathbb{E}[\zeta_i(t)] & =
        \underset{t \rightarrow +\infty }\lim \dfrac{\Big( \zeta^{(0)}_i  + \gamma_i \zeta_i^{(*)}\tilde{z}_{i}^{(2)}(t)\Big)             (1+o_2(t))+\sigma_i\gamma_i|\tilde{z}_{i}^{(1)}(t)|o_1(t)}{1+\gamma_i\tilde{z}_{i}^{(2)}(t)}\\
        &=\underset{t \rightarrow +\infty }\lim \dfrac{\Big( \zeta^{(0)}_i  + \gamma_i \zeta_i^{(*)}\tilde{z}_{i}^{(2)}(t) \Big)                (1+o_2(t))}{1+\gamma_i\tilde{z}_{i}^{(2)}(t)}\\  &=\dfrac{\zeta^{(0)}_i  + \gamma_i \zeta_i^{(*)}L_2}{1+\gamma_i L_2}. \end{array}
        $$
        \end{itemize}
        \item[(3)] If $\underset{t \rightarrow +\infty }\lim \tilde{z}_{i}^{(1)}(t) =L_1$ and $\underset{t \rightarrow +\infty }\lim \tilde{z}_{i}^{(2)}(t) =\infty$, then $g_i(\tilde{z}_{i}^{(1)}(t),\tilde{z}_{i}^{(2)}(t))=o_1(t)$ and $G_i(\tilde{z}_{i}^{(1)}(t),\tilde{z}_{i}^{(2)}(t))=1+o_2(t)$ with $\underset{t \rightarrow +\infty }\lim o_1(t) = \underset{t \rightarrow +\infty }\lim o_2(t)=0$. Then,
        $$
        \mathbb{E}[\zeta_i(t)] = \dfrac{\Big( \zeta^{(0)}_i  + \gamma_i \zeta_i^{(*)}\tilde{z}_{i}^{(2)}(t)\Big)(1+o_2(t))+ \sigma_i\gamma_i|\tilde{z}_{i}^{(1)}(t)|o_1(t)}{1+\gamma_i\tilde{z}_{i}^{(1)}(t)}.
        $$
        Therefore,
        $$
        \begin{array}{ll}
        \underset{t \rightarrow +\infty }\lim \mathbb{E}[\zeta_i(t)] & = \underset{t \rightarrow +\infty }\lim \dfrac{\Big( \zeta^{(0)}_i  + \gamma_i \zeta_i^{(*)}\tilde{z}_{i}^{(2)}(t)\Big)                (1+o_2(t))+\sigma_i\gamma_i|\tilde{z}_{i}^{(1)}(t)|o_{(1)}(t)}{1+\gamma_i\tilde{z}_{i}^{(2)}(t)}\\
        &= \underset{t \rightarrow +\infty }\lim \dfrac{\Big( \zeta^{(0)}_i  + \gamma_i \zeta_i^{(*)}\tilde{z}_{i}^{(2)}(t)\Big)}{1+\gamma_i\tilde{z}_{i}^{(2)}(t)}\\
        &=\zeta_i^{(*)}.
        \end{array}
        $$
    \item[(4)] If $\underset{t \rightarrow +\infty }\lim \tilde{z}_{i}^{(1)}(t)=\infty$ and $\underset{t \rightarrow +\infty }\lim \tilde{z}_{i}^{(2)}(t) =L_2$, then $g_i(\tilde{z}_{i}^{(1)}(t),\tilde{z}_{i}^{(2)}(t))=\dfrac{1}{\sqrt{2\pi}}+o_1(t)$ and $G_i(\tilde{z}_{i}^{(1)}(t),\tilde{z}_{i}^{(2)}(t)) = \dfrac{1}{2}+o_2(t)$ with $\underset{t \rightarrow +\infty }\lim o_1(t)= \underset{t \rightarrow +\infty }\lim o_2(t)=0$. Then,
    $$
    \mathbb{E}[\zeta_i(t)] = \dfrac{\Big( \zeta^{(0)}_i  + \gamma_i \zeta_i^{(*)}\tilde{z}_{i}^{(2)}(t)\Big)(\dfrac{1}{2}+o_2(t)) + \sigma_i \gamma_i |\tilde{z}_{i}^{(1)}(t)|\Big(\dfrac{1}{\sqrt{2\pi}}+o_1(t)\Big)}{1+\gamma_i\tilde{z}_{i}^{(2)}(t)}.
    $$
    Therefore,
    $$
    \begin{array}{ll}
    \underset{t \rightarrow +\infty }\lim \mathbb{E}[\zeta_i(t)]&=\dfrac{\Big( \zeta^{(0)}_i  + \gamma_i \zeta_i^{(*)}\tilde{z}_{i}^{(2)}(t)\Big)(\dfrac{1}{2}+o_2(t)) + \sigma_i \gamma_i|\tilde{z}_{i}^{(1)}(t)|\Big(\dfrac{1}{\sqrt{2\pi}}+o_1(t)\Big)}{1+\gamma_i\tilde{z}_{i}^{(2)}(t)}\\
    &=\dfrac{\sigma_i\gamma_i|\tilde{z}_{i}^{(1)}(t)|\dfrac{1}{\sqrt{2\pi}}}{1+\gamma_i\tilde{z}_{i}^{(2)}(t)}\\
    &=
    \underset{t \rightarrow +\infty }\lim |\tilde{z}_{i}^{(1)}(t)|\\
    &=+\infty.
    \end{array}
    $$
    \item[(5)] If $\underset{t \rightarrow +\infty }\lim |\tilde{z}_{i}^{(1)}(t)| = \underset{t \rightarrow +\infty }\lim \tilde{z}_{i}^{(2)}(t)=\underset{t \rightarrow +\infty }\lim\dfrac{\zeta^{(0)}_i  + \gamma_i \zeta_i^{(*)} \tilde{z}^{(2)}_{i}(t)}{\gamma_i \sigma_i \tilde{z}^{(1)}_{i}(t)} =  +\infty$, then $g_i(\tilde{z}_{i}^{(1)}(t),\tilde{z}_{i}^{(2)}(t))=o_1(t)$ and
    $G_i(\tilde{z}_{i}^{(1)}(t),\tilde{z}_{i}^{(2)}(t))=1+o_2(t)$ with $\underset{t \rightarrow +\infty }\lim o_1(t) = \underset{t \rightarrow +\infty }\lim o_2(t)=0$. Hence,
    $$
    \mathbb{E}[\zeta_i(t)] = \dfrac{\Big( \zeta^{(0)}_i  + \gamma_i \zeta_i^{(*)}\tilde{z}_{i}^{(2)}(t)\Big)(1+o_2(t))+\sigma_i\gamma_i|\tilde{z}_{i}^{(1)}(t)|o_1(t)}{1+\gamma_i\tilde{z}_{i}^{(2)}(t)}.
    $$
    Therefore,
    $$
    \begin{array}{ll}
    \underset{t \rightarrow +\infty }\lim \mathbb{E}[\zeta_i(t)]&=
    \underset{t \rightarrow +\infty }\lim \dfrac{\Big( \zeta^{(0)}_i  + \gamma_i \zeta_i^{(*)}\tilde{z}_{i}^{(2)}(t)\Big)(1+o_2(t))+\sigma_i \gamma_i|\tilde{z}_{i}^{(1)}(t)|o_1(t)}{1+\gamma_i\tilde{z}_{i}^{(2)}(t)}\\
    &=
    \zeta_i^{(*)}+\underset{t \rightarrow +\infty }\lim \dfrac{|\tilde{z}_{i}^{(1)}(t)|o_1(t)}{\tilde{z}_{i}^{(2)}(t)}
    \end{array}
    $$
    Since $\underset{t \rightarrow +\infty }\lim\dfrac{\zeta^{(0)}_i  + \gamma_i \zeta_i^{(*)} \tilde{z}^{(2)}_{i}(t)}{\gamma_i \sigma_i \tilde{z}^{(1)}_{i}(t)} =  +\infty$, then $\underset{t \rightarrow +\infty }\lim \tilde{z}^{(1)}_{i}(t) = \underset{t \rightarrow +\infty }\lim |\tilde{z}^{(1)}_{i}(t)| \geq 0$. Therefore,
    $$
    \underset{t \rightarrow +\infty }\lim\dfrac{\zeta^{(0)}_i  + \gamma_i \zeta_i^{(*)} \tilde{z}^{(2)}_{i}(t)}{\gamma_i \sigma_i \tilde{z}^{(1)}_{i}(t)} =  +\infty \quad \implies \quad \underset{t \rightarrow +\infty }\lim \dfrac{|\tilde{z}_{i}^{(1)}(t)|}{\tilde{z}_{i}^{(2)}(t)} < +\infty.
    $$
    Hence, 
    $$
    \underset{t \rightarrow +\infty }\lim \dfrac{|\tilde{z}_{i}^{(1)}(t)|o_1(t)}{\tilde{z}_{i}^{(2)}(t)} = 0,
    $$
    and 
    $$
    \underset{t \rightarrow +\infty }\lim \mathbb{E}[\zeta_i(t)] = \zeta_i^{(*)}.
    $$
\end{itemize}
\end{proof}


\subsection*{\textbf{Proposition \ref{prop:demography_long_mem}}}

\begin{proof}

Let $w_t*(\delta(t) - l_0(t))$ denote the unique solution of $\delta(t) - l_0(t) = L_t(w^*(t))$, whose existence is established in Lemma \ref{lemma:L}. To show that there exists a demographic path $\{\delta(t)\}_t$ for which 
$$
\underset{t \rightarrow +\infty }\lim |\tilde{z}_{i}^{(1)}(t)| = \underset{t \rightarrow +\infty }\lim \tilde{z}_{i}^{(2)}(t)=\underset{t \rightarrow +\infty }\lim\dfrac{\zeta^{(0)}_i  + \gamma_i \zeta_i^{(*)} \tilde{z}^{(2)}_{i}(t)}{\gamma_i \sigma_i \tilde{z}^{(1)}_{i}(t)} =  +\infty,
$$
we express the labor demand as    
$l_i(t) = \chi_{i,t} (w_t*(d(t)))^{\kappa_{i,t}}$, where $d(t) = \delta(t) - l_0(t)$ and
$$
\kappa_{i,t} = - \dfrac{1 }{1 - \hat{\zeta}_{i}(t)\alpha } \quad \mbox{ and } \quad 
\chi_{i,t}  =  \left( \mathbb{E}\left[ \eta_{i}(t)^{\alpha}  \right] \hat{\zeta}_{i}(t) \right)^{\dfrac{1 }{1 - \hat{\zeta}_{i}(t)\alpha } } .
$$
Since $\hat{\zeta}_i(t) < 1$ and $\alpha < 1$, then $\hat{\zeta}_{i}(t)\alpha < 1$ and $\kappa_{i,t} < -1$. Let us define 
    \begin{equation}\label{eq:Z_fun}
            \begin{array}{lll}
        Z_{i,t}^{(2)}(d(t) ) &\displaystyle  = \tilde{z}_{i}^{(2)}(t) & \displaystyle =  \left[\ln\left(\xi_{i,t} w^*(t)(d(t) )^{\kappa_{i,t}} \right)\right]^2 + \sum_{\ell = 1}^{t-1} \left[\ln\left(\xi_{i,t} w_{\ell}^*( d(\ell) )^{\kappa_{i,t}} \right)\right]^2,\\[0.5cm]
        Z_{i,t}^{(1)}( d(t) ) & = \tilde{z}_{i}^{(1)}(t) & \displaystyle = \ln\left(\xi_{i,t} w^*(t)(d(t) )^{\kappa_{i,t}} \right) + \sum_{\ell = 1}^{t-1} \ln\left(\xi_{i,t} w_{\ell}^*(d(\ell) )^{\kappa_{i,t}} \right).
    \end{array}
    \end{equation}

    We now show conditions for the monotonicity of $Z_{i,t}^{(2)}$ and $Z_{i,t}^{(1)}$ for all $\delta(t)$ greater than a fixed constant $\Bar{\delta}(t)$. 

    \paragraph{Derivative of $Z_{i,t}^{(2)}$. } To characterize the derivative of $Z_{i,t}^{(2)}$ with respect to $\delta(t)$, we consider that $w^*(t)(d(t))$ is an implicit function defined by the equation $L_{t}(w^*(t)(\delta(t) - l_0)) = \delta(t) - l_0$. We have    
    $$
    \begin{array}{lll}
    \frac{d }{d ~\delta(t)} Z_{i,t}^{(2)} & \displaystyle  = \frac{d}{d ~\delta(t)} \left[\ln\left(\xi_{i,t} \cdot w^*(t)(\delta(t) - l_0(t))^{\kappa_{i,t}}\right)\right]^2\\[0.5cm]
    & \displaystyle = 2 \ln\left(\xi_{i,t} \cdot w^*(t)(\delta(t) -l_0(t))^{\kappa_{i,t}}\right) \cdot \frac{d}{d ~\delta(t)} \left[\ln\left(\xi_{i,t} \cdot w^*(t)(\delta(t) -l_0(t))^{\kappa_{i,t}}\right)\right]\\[0.5cm]
    & \displaystyle =  \frac{2 \ln\left(\xi_{i,t} \cdot w^*(t)(\delta(t) -l_0(t))^{\kappa_{i,t}}\right)}{\xi_{i,t} \cdot w^*(t)(\delta(t) -l_0(t))^{\kappa_{i,t}}} \cdot \xi_{i,t} \cdot \kappa_{i,t} \cdot w^*(t)(\delta(t) -l_0(t))^{\kappa_{i,t}-1} \cdot \left(\frac{d }{d ~\delta(t)} w^*(t)( \delta(t) ) \right)\\[0.5cm]
    & = 2 \ln\left(\xi_{i,t} \cdot w^*(t)(\delta(t) -l_0)^{\kappa_{i,t}}\right) \cdot \frac{\kappa_{i,t}}{w^*(t)(\delta(t) - l_0(t) )} \cdot \left( \frac{d }{d ~ \delta(t) } w^*(t)(\delta(t) -l_0(t)) \right) .
    \end{array}
    $$
    Now, we need to find $\frac{d }{d ~\delta(t)} w^*(t)$. Since $w^*(t)(\delta(t) - l_0(t))$ is defined implicitly by the equation $L_{t}(w^*(t)(\delta- l_0(t))) = \delta - l_0(t)$, let's differentiate both sides with respect to $\delta_0$:
    $$
    \frac{d}{d ~\delta(t)} L_{t}(w^*(t)(\delta(t)- l_0(t))) = \frac{d}{d ~\delta(t)}(\delta(t) - l_0(t)).
    $$
    By Lemma \ref{lemma:L}, we have
    $$
    \displaystyle \sum_{i=1}^{n} \xi_{i,t} \cdot \kappa_{i,t} \cdot w^*(t)(\delta(t) - l_0(t))^{\kappa_{i,t} - 1} \cdot \dfrac{d}{d ~\delta(t)} w^*(t)(\delta(t) - l_0(t)) = 1,
    $$   
    Solving for $\frac{d}{d ~\delta} w^*(t)(\delta(t)- l_0(t))$ and noticing that $\kappa_{i,t} < -1$, we have:
    \begin{equation}\label{eq:d_w_d_l}
        \dfrac{d}{d ~\delta(t)} w^*(t)(\delta(t) - l_0(t)) = \dfrac{1}{\sum_{i=1}^{n} \xi_{i,t} \cdot \kappa_{i,t} \cdot w^*(t)(\delta(t) - l_0(t))^{\kappa_{i,t} - 1} } < 0.
    \end{equation}
    Now we can substitute $\frac{d}{d ~\delta(t)} w^*(t)(\delta(t)- l_0(t)) $ back into $ \frac{d~ Z_{i,t}^{(2)}}{d ~ \delta(t)}$:
    $$
    \frac{d }{d ~ \delta(t)} Z_{i,t}^{(2)} = 2 \ln\left(\xi_{i,t} \cdot w^*(t)(\delta(t) - l_0(t))^{\kappa_{i,t}}\right) \cdot \left(\dfrac{\kappa_{i,t}}{\sum_{j=1}^{n} \xi_{j,t} \cdot \kappa_{j,t} \cdot w^*(t)(\delta(t) - l_0(t))^{\kappa_{j,t}} } \right).
    $$
    We can establish sufficient conditions for the sign of this derivative: 
    $$
    \left\{\begin{array}{ll}
        \frac{d }{d ~ \delta(t)} Z_{i,t}^{(2)} > 0 &  \mbox{ if } \xi_{i,t} w^*(t)(\delta(t) - l_0(t))^{\kappa_{i,t}} > 1, \\
        \frac{d }{d ~ \delta(t)} Z_{i,t}^{(2)} < 0 &  \mbox{ if } \xi_{i,t} w^*(t)(\delta(t) - l_0(t))^{\kappa_{i,t}} < 1. \\
    \end{array}\right.
    $$
    Moreover, since $w^*(t)(\delta- l_0(t))$ is continuous (see Lemma \ref{lemma:L}), $\underset{d \rightarrow 0}{\lim} w^*(t)(d) = + \infty$ and $\underset{d \rightarrow + \infty}{\lim} w^*(t)(d) = 0$. Thus, there exists $\Bar{\delta}(t)$, such that for all $\delta(t) > \Bar{\delta}(t)$,
    $$
    \xi_{i,t} w^*(t)(\delta(t) - l_0(t))^{\kappa_{i,t}} > 1 \quad \mbox{ and } \quad \frac{d }{d ~\delta(t)} Z_{i,t}^{(2)} > 0.
    $$
     
    \paragraph{Derivative of $Z_{i,t}^{(1)}$. } We follow an analogous procedure and obtain: 
    $$
    \begin{array}{ll}
        \frac{dZ_{i,t}^{(1)}(l_0)}{d ~\delta(t)} & = \displaystyle \frac{d}{d ~\delta(t)} \left[ \ln\left(\xi_{i,t} w^*(t)(\delta(t) - l_0(t) )^{\kappa_{i,t}} \right) + \sum_{\ell = 1}^{t-1} \ln\left(\xi_{i,t} w^*(t)( l_0(\ell) )^{\kappa_{i,t}} \right) \right]\\
        & = \displaystyle \frac{d}{d ~\delta(t)}  \ln\left(\xi_{i,t} w^*(t)(\delta(t) - l_0(t) )^{\kappa_{i,t}} \right) \\[0.5cm]
        & = \displaystyle \frac{d}{d ~\delta(t)}  \kappa_{i,t} \ln\left( w^*(t)(\delta(t) - l_0(t) )\right) \\[0.5cm] 
        & \displaystyle = \frac{\kappa_{i,t}}{w^*(t)(\delta(t) - l_0(t) )}  \left(\frac{d ~ w^*(t)(\delta- l_0(t))}{d ~\delta(t)} \right).
    \end{array}
    $$
Using $\frac{d ~w^*(t)(\delta(t) - l_0(t) )}{d ~\delta(t)}$ in \eqref{eq:d_w_d_l}, we have
    $$
    \frac{d }{d ~\delta(t)} Z_{i,t}^{(1)} = \dfrac{\kappa_{i,t}}{\sum_{j=1}^{n} \xi_{j,t} \cdot \kappa_{j,t} \cdot w^*(t)(\delta(t) - l_0(t))^{\kappa_{j,t}} } = \dfrac{1}{2 \ln\left(\xi_{i,t} \cdot w^*(t)(\delta(t) - l_0(t))^{\kappa_{i,t}}\right)} \frac{d }{d ~\delta(t)} Z_{i,t}^{(2)}
    $$
    Again, there exists $\Bar{\delta}(t)$, such that for all $\delta(t) > \Bar{\delta}(t)$,
    $$
    \xi_{i,t} w^*(t)(\delta(t) - l_0(t))^{\kappa_{i,t}} > 1 \quad \mbox{ and } \quad  \frac{d }{d ~\delta(t)} Z_{i,t}^{(1)} > 0 
    $$

    \paragraph{Divergence. }  Since $\underset{d \rightarrow + \infty}{\lim} w^*(t)(d) = + \infty$, then
    $$
     \begin{array}{ll}
    \underset{d(t) \rightarrow + \infty}{\lim} ~ Z_{i,t}^{(2)}(d(t)) =  \underset{d(t) \rightarrow +\infty }\lim  \left[\ln\left(\xi_{i,t} w^*(t)(d(t) )^{\kappa_{i,t}} \right)\right]^2 + \sum_{\ell = 1}^{t-1} \left[\ln\left(\xi_{i,t} w_{\ell}^*( d(\ell) )^{\kappa_{i,t}} \right)\right]^2 = +\infty.\\[0.5cm]
    \underset{d(t) \rightarrow + \infty}{\lim} ~ Z_{i,t}^{(1)}(d(t)) =  \underset{d(t) \rightarrow +\infty }\lim  \ln\left(\xi_{i,t} w^*(t)(d(t) )^{\kappa_{i,t}} \right) + \sum_{\ell = 1}^{t-1} \ln\left(\xi_{i,t} w_{\ell}^*( d(\ell) )^{\kappa_{i,t}} \right) = +\infty.
    \end{array}
    $$
    Therefore, there exists a monotonic sequence $\{\delta(t)\}_t$, with $\delta(t)> \delta(t-1)$, such that the following uniform divergence holds:\footnote{Let $\{f_n\}_{n=1}^{\infty}$ be a sequence of real-valued functions defined on a domain $D \subseteq \mathbb{R}$. We say that the sequence $\{f_n(x)\}$ diverges uniformly on $D$ if for every $M > 0$, there exists a positive integer $N$ such that for all $n \geq N$ and for all $x \in D$, we have $|f_n(x)| > M$. In other words, beyond some index $N$, the magnitude of the functions $f_n(x)$ exceeds any given bound $M$ uniformly over the domain $D$.}
    $$
     \underset{t \rightarrow +\infty }\lim |\tilde{z}_{i}^{(1)}(t)| = \underset{t \rightarrow + \infty}{\lim} ~ Z_{i,t}^{(1)}(\delta(t)- l_0(t)) = +\infty \quad \mbox{and} \quad  \underset{t \rightarrow +\infty }\lim |\tilde{z}_{i}^{(2)}(t)| = \underset{t \rightarrow + \infty}{\lim} ~ Z_{i,t}^{(2)}(\delta(t)- l_0(t)) = +\infty.
    $$
    Finally, we note that, for all $\delta(t) > \Bar{\delta}(t)$,
    $$
    \frac{d }{d ~\delta(t)} Z_{i,t}^{(2)} > \frac{d }{d ~\delta(t)} Z_{i,t}^{(1)} \quad \mbox{ and } \quad Z_{i,t}^{(2)} > Z_{i,t}^{(1)}.
    $$
    By applying L'Hopital's rule, we have
    $$
    \begin{array}{ll}
    \underset{t \rightarrow +\infty }\lim\dfrac{\zeta^{(0)}_i  + \gamma_i \zeta_i^{(*)} \tilde{z}^{(2)}_{i}(t)}{\gamma_i \sigma_i \tilde{z}^{(1)}_{i}(t)} & =    \underset{t \rightarrow +\infty }\lim\dfrac{\gamma_i \zeta_i^{(*)} \frac{d }{d ~\delta(t)} Z_{i,t}^{(2)} }{\gamma_i \sigma_i \frac{d }{d ~\delta(t)} Z_{i,t}^{(1)} }\\
    & =    \underset{t \rightarrow +\infty }\lim\dfrac{\gamma_i \zeta_i^{(*)} \frac{d }{d ~\delta(t)} Z_{i,t}^{(2)} }{\gamma_i \sigma_i \frac{d }{d ~\delta(t)} Z_{i,t}^{(1)} } \\
    & =     \underset{t \rightarrow +\infty }\lim 2 \dfrac{\zeta_i^{(*)} }{\sigma_i  } \ln\left(\xi_{i,t} \cdot w^*(t)(\delta(t) - l_0(t))^{\kappa_{i,t}}\right)\\
    & = +\infty.
        \end{array}
    $$

\end{proof}


\subsection*{\textbf{Proposition \ref{prop:convergence_mode_long_memory}}}

\begin{proof}
\Rew{We invoke Proposition \ref{prop:mode_v_long_memory} to characterize the mode of the MAP estimator:
$$
\displaystyle \mathbb{M}[\zeta_i(t) ~|~ L_i(t) ] = \underset{\zeta_i(t+1)}{\mbox{ argmax }} ~ \mathbb{P}( \zeta_i(t+1) ~|~ L_i(t) ) = 
\begin{cases}
  \overline{v}_i(L_i(t)) & \quad \mbox{if } \displaystyle \dfrac{1}{|\overline{\varphi}_i(L_i(t))|\sqrt{ 2 \pi}} \geq  F\left(-\left|\dfrac{\overline{v}_i(L_i(t))}{\overline{\varphi}_i(L_i(t))}\right|\right),\\[0.4cm]
  0 & \quad \mbox{otherwise.}
\end{cases}
$$
We construct the ordered set $\overline{\Psi}_i$ (as defined in \eqref{eq:Psi_mode}) as the collection of time periods for which $\mathbb{M}[\zeta_i(t) ~|~ L_i(t) ] > 0$. This allows constructing a sub-sequence by indexing the elements of $\overline{\Psi}_i$ by $\varrho(t)$, for $t \in \mathbb{N}$. Hence,
$$
\left\{ ~\mathbb{M}[\zeta_i(t) ~|~ L_i(t) ] ~ \right\}_{t \in \overline{\Psi}_i} \equiv \left\{ ~\overline{v}_i(\varrho(t)) ~\right\}_{t\geq0}
$$
Focusing on the sub-sequence $\{\overline{v}_i(\varrho(t))\}_{t\geq0}$, we distinguish the following cases:}
\begin{itemize}
    \item[(1)] If $\overline{\Psi}_i$ is a finite set, then there is a $t_0\in \nset$ such that for each $t\geq t_0$, we have $\mathbb{M}[\zeta_i(t) ~|~ L_i(t) ]=0$ and therefore $\left\{ \mathbb{M}[\zeta_i(t) ~|~ L_i(t) ] \right\}_{t}$  converges to 0.
    
    \item[(2)] Assume that $\overline{\Psi}_i$ is an infinite set. Then, without loss of generality, let $\overline{\Psi}_i\equiv \nset$ (in other words, if $\overline{\Psi}_i\not\equiv\nset$, then there exists a bijection $\varphi:\nset\rightarrow\varphi(\nset)=\overline{\Psi}_i$). 
    
    \begin{itemize}
        \item[(2.1)] If $\underset{t\rightarrow\infty}\lim \tilde{z}^{(2)}_{i}(t) = +\infty$, then $\mathbb{M}[\zeta_i(t) ~|~ L_i(t) ] = \dfrac{ \zeta^{(0)}_i  + \gamma_i \zeta_i^{(*)} \tilde{z}^{(2)}_{i}(t)}{1 + \gamma_i \tilde{z}^{(2)}_{i}(t)}$. As a consequence, we obtain
          $$
          \begin{array}{ll}  
          \underset{t \rightarrow +\infty }\lim
          \mathbb{M}[\zeta_i(t) ~|~ L_i(t) ] & = 
          \underset{t \rightarrow +\infty }\lim
          \dfrac{ \zeta^{(0)}_i  + \gamma_i \zeta_i^{(*)} \tilde{z}^{(2)}_{i}(t)}{1 + \gamma_i \tilde{z}^{(2)}_{i}(t)}\\
          &=\underset{t \rightarrow +\infty }\lim
          \dfrac{ \gamma_i \zeta_i^{(*)} \tilde{z}^{(2)}_{i}(t)}{1 + \gamma \tilde{z}^{(2)}_{i}(t)}\\
          &=\zeta_i^{(*)}.
          \end{array}
          $$
        \item[(2.2)] If $\underset{t\rightarrow\infty}\lim \tilde{z}^{(2)}_{i}(t) =L_2 < +\infty$, then there exists a function $o(t)$ such that $\tilde{z}^{(2)}_{i}(t) = L_2 + o(t)$ with $\underset{t \rightarrow +\infty }\lim o(t)=0$. Then we obtain that 
        $$
          \begin{array}{ll} 
          \underset{t \rightarrow +\infty }\lim
          \mathbb{M}[\zeta_i(t) ~|~ L_i(t) ] &= 
          \underset{t \rightarrow +\infty }\lim
          \dfrac{ \zeta^{(0)}_i  + \gamma_i \zeta_i^{(*)} (L_2+o(t))}{1 + \gamma_i (L_2+o(t))}\\[0.2cm]
         &= \dfrac{ \zeta^{(0)}_i+\gamma_i \zeta_i^{(*)} L_2}{1 + \gamma_i L_2}.
        \end{array}
        $$
    \end{itemize}
\end{itemize}    
\end{proof}


\subsection*{\textbf{Proposition \ref{prop:v_learning_short_mem} and Proposition \ref{prop:expect_v_short_memory}}}

\begin{proof}

Proposition \ref{prop:v_learning_short_mem} and Proposition \ref{prop:expect_v_short_memory} are obtained by replacing $\zeta_i^{(0)}$ with $\zeta_i(t)$ in Proposition \ref{prop:v_learning_long_mem}, and $\tilde{z}^{(1)}_{i}(t)$ and $\tilde{z}^{(2)}_{i}(t)$ with $z_i(t)$ and $z_i(t)^2$, respectively, in Proposition \ref{prop:expect_v_long_memory}. To lighten the exposition, we avoid reporting whole expressions extensively.

\end{proof}


\subsection*{\textbf{Proposition \ref{prop:mode_v_short_memory}}}

\begin{proof}
We consider the probability density function of $\zeta_i(t)$, conditioned on the input decisions
$$
\mathbb{P}( \zeta_i(t+1) ~|~ L_i(t) ) = 
\begin{cases}
  \dfrac{\exp\left(-\dfrac{(\zeta_i(t)-\overline{\overline{v}}_i(L_i(t)))^2}{2|\overline{\overline{\varphi}}_i(L_i(t))|^2 }\right)}{|\overline{\overline{\varphi}}_i(L_i(t))| \sqrt{2 \pi}} , & \quad \mbox{if } \displaystyle \zeta_i(t) > 0 \\[0.3cm]
  F\left(-\overline{\overline{v}}_i(L_i(t))/\overline{\overline{\varphi}}_i(L_i(t)) \right), & \quad \mbox{if } \displaystyle \zeta_i(t) = 0 \\[0.2cm]
  0, & \quad \mbox{otherwise.}
\end{cases}
$$
Note that for $\zeta_i(t) > 0$, if $\overline{\overline{v}}_i(L_i(t))) > 0$, then $\mathbb{P}(\zeta_i(t) ~|~ L_i(t) )$ is maximized when $\zeta_i(t) = \overline{\overline{v}}_i(L_i(t)))$, with $\mathbb{P}( \overline{\overline{v}}_i(L_i(t))) ~|~ L_i(t) ) = \dfrac{1}{|\overline{\overline{\varphi}}_i(L_i(t))| \sqrt{2 \pi}}$. Likewise,  $\mathbb{P}( ~ 0 ~|~ L_i(t) ) = F\left(-\overline{\overline{v}}_i(L_i(t))/\overline{\overline{\varphi}}_i(L_i(t)) \right)$. Therefore,
$$
\max_{\zeta_i(t)} ~ \mathbb{P}( \zeta_i(t) ~|~ L_i(t) ) = 
\begin{cases}
  \overline{\overline{v}}_i(L_i(t)) & \quad \mbox{if } \displaystyle \dfrac{1}{|\overline{\overline{\varphi}}_i(L_i(t))|\sqrt{ 2 \pi}} \geq  F\left(-\overline{\overline{v}}_i(L_i(t))/\overline{\overline{\varphi}}_i(L_i(t)) \right),\\[0.2cm]
  0 & \quad \mbox{otherwise.}
\end{cases}
$$
\end{proof}


\subsection*{\textbf{Proposition \ref{prop:short_memory_learning}}}

\begin{proof} Let us prove that the following sequence
$$
\mathbb{E}_{\varepsilon_{i}(t)}[\zeta_i(t+1)] = \dfrac{ \mathbb{E}_{\varepsilon_{i}(t)}[\zeta_i(t)]  + \gamma_i \zeta_i^{(*)} z_i(t)^2}{1 + \gamma_i z_i(t)^2} G_i(z_{i}(t),z_{i}(t)^2) +  \dfrac{\gamma_i  \sigma_i |z_i(t)|}{1 + \gamma_i z_i(t)^2} g_i(z_{i}(t),z_{t}(t)^2)
$$
is convergent. For this, let us consider the following sequence 
$$B_i(t)=\dfrac{\zeta_i{(t)}  + \gamma_i \zeta_i^{(*)} z_i(t)^2}{\gamma_i  \sigma_i z_{i}(t)}.$$

We distinguish two cases: 

\begin{itemize}
    \item[(1)] If $\underset{t \rightarrow +\infty }\lim |z_{i}(t)| =+\infty$, then $\underset{t \rightarrow +\infty }\lim B_i(t) =+\infty$ and consequently,    $g_i(z_{i}(t),z_{i}(t)^2)=o_1(t)$ and 
          $G_i(z_{i}(t),z_{i}(t)^2)=1+o_2(t)$ with $\underset{t \rightarrow +\infty }\lim o_1(t)=
          \underset{t \rightarrow +\infty }\lim o_2(t)=0$. Then
            $$
                \mathbb{E}_{\varepsilon_{i}(t)}[\zeta_i(t+1)] = \dfrac{\Big( \mathbb{E}_{\varepsilon_{i}(t)}[\zeta_i(t)]  + \gamma_i \zeta_i^{(*)}z_{i}(t)\Big)(1+o_2(t))+
                \sigma_i\gamma_i|z_{i}(t)|o_1(t)}{1+\gamma_i z_{i}(t)^2}.
            $$
            Therefore, we obtain that
            $$
            \begin{array}{ll}
              \underset{t \rightarrow +\infty }\lim\mathbb{E}_{\varepsilon_{i}(t)}[\zeta_i(t+1)] &=
                \underset{t \rightarrow +\infty }\lim \dfrac{\Big( \mathbb{E}_{\varepsilon_{i}(t)}[\zeta_i(t)]  + \gamma_i \zeta_i^{(*)}z_{i}(t)^2
                \Big)
                (1+o_2(t))+\sigma_i\gamma_i|z_{i}(t)|o_1(t)}{1+\gamma_i z_{i}(t)^2}\\
              &=
                \underset{t \rightarrow +\infty }\lim \dfrac{ \mathbb{E}_{\varepsilon_{i}(t)}[\zeta_i(t)]  + \gamma_i \zeta_i^{(*)}z_{i}(t)^2}{1+\gamma_iz_i(t)^2}.
            \end{array}
         $$
        Define the following sequence as follows by 
        $$
        \Bar{V}_{i}(t) = \mathbb{E}_{\varepsilon_{i}(t)}[\zeta_i(t)], \text{ and } 
        \Bar{V}_{i}(t+1)=\dfrac{ \Bar{V}_{i}(t)  + \gamma_i \zeta_i^{(*)}z_{i}(t)^2}{1+\gamma_i z_{i}(t)^2}.
        $$
        We have the following implications:
    \begin{itemize}
      \item[I.1.] If $\Bar{V}_{i}(0)\geq \zeta_i^{(*)}$, then the sequence $\left\{\Bar{V}_{i}(t) \right\}_{t\geq 0}$ is decreasing and lower bounded by $\zeta_i^{(*)}$.
      \item[I.2.]  If $\Bar{V}_{i}(0)\leq \zeta_i^{(*)}$, then the sequence $\left\{\Bar{V}_{i}(t) \right\}_{t\geq 0}$ is increasing and upper bounded by $\zeta_i^{(*)}$.
    \end{itemize}
    For the first case, by induction we have $\Bar{V}_{i}(0)\geq \zeta_i^{(*)}$ and $X_{1}\geq \zeta_i^{(*)}$. Assume that $\Bar{V}_{i}(t)\geq \zeta_i^{(*)}$. If $\Bar{V}_{i}(t+1)< \zeta_i^{(*)}$, then $$
    \dfrac{ \Bar{V}_{i}(t) + \gamma_i \zeta_i^{(*)} z_{i}(t)^2 }{\left(1 + \gamma_i z_{i}(t)^2\right)}<\zeta_i^{(*)},
    $$ which implies that $\Bar{V}_{i}(t)< \zeta_i^{(*)}$ which is a contradiction. Hence,
    $$
    \begin{array}{ll}
    \Bar{V}_{i}(t+1)-\Bar{V}_{i}(t) &= \dfrac{ \Bar{V}_{i}(t) + \gamma_i \zeta_i^{(*)} z_{i}(t)^2 }{\left(1 + \gamma_i z_{i}(t)^2\right)}- \Bar{V}_{i}(t)\\
    &= \gamma_i \zeta_i^{(*)} z_{i}(t)^2\dfrac{ \zeta_i^{(*)} - \Bar{V}_{i}(t)}{\left(1 + \gamma_i z_{i}(t)^2\right)}\\
    &\leq0.
    \end{array}
    $$
    Then the sequence $\left\{ \Bar{V}_{i}(t) \right\}_{t\geq 0}$ is decreasing. We can prove the second case by an analogous reasoning. We then conclude that the sequence $\left\{ \Bar{V}_{i}(t) \right\}_{t\geq 0}$ is convergent. 

    \item[(2)] For the case $\underset{t \rightarrow +\infty }\lim |z_{i}(t)| =L < +\infty$, let us assume that the sequence $\{\mathbb{E}_{\varepsilon_{i}(t)}[\zeta_i(t)]\}_t$ is divergent. Then, 
    $\underset{t \rightarrow +\infty }\lim B_i(t) =+\infty$ and consequently,    $g_i(z_{i}(t),z_{i}(t)^2)=o_1(t)$ and 
          $G_i(z_{i}(t),z_{i}(t)^2)=1+o_2(t)$ with $\underset{t \rightarrow +\infty }\lim o_1(t)=
          \underset{t \rightarrow +\infty }\lim o_2(t)=0$. Therefore, we obtain
            $$ \mathbb{E}_{\varepsilon_{i}(t)}[\zeta_i(t+1)] = \dfrac{\Big( \mathbb{E}_{\varepsilon_{i}(t)}[\zeta_i(t)]  + \gamma_i \zeta_i^{(*)}z_{i}(t)\Big)(1+o_2(t))+
            \sigma_i\gamma_i |z_{i}(t)|o_1(t)}{1+\gamma_i z_{i}(t)^2},
            $$
            which implies
            $$
            \begin{array}{ll}
              \underset{t \rightarrow +\infty }\lim\mathbb{E}_{\varepsilon_{i}(t)}[\zeta_i(t+1)] &=
                \underset{t \rightarrow +\infty }\lim \dfrac{\Big( \mathbb{E}_{\varepsilon_{i}(t)}[\zeta_i(t)]  + \gamma_i \zeta_i^{(*)}z_{i}(t)^2
                \Big)
                (1+o_2(t))+\sigma_i \gamma_i |z_{i}(t)|o_1(t)}{1+\gamma_i z_i(t)^2}\\
              &=
                \underset{t \rightarrow +\infty }\lim \dfrac{ \mathbb{E}_{\varepsilon_{i}(t)}[\zeta_i(t)]  + \gamma_i \zeta_i^{(*)}z_{i}(t)^2}{1+\gamma_i z_i(t)^2}.
            \end{array}
         $$
        Next, we define the following sequence: 
        $$\Bar{V}_{i}(t)= \mathbb{E}_{\varepsilon_{i}(t)}[\zeta_i(t)], \text{ and } 
        \Bar{V}_{i}(t+1)=\dfrac{ \Bar{V}_{i}(t)  + \gamma_i \zeta_i^{(*)}z_{i}(t)^2}{1+\gamma_i z_i(t)^2}.
        $$
        By I.1 and I.2, the sequence $\{\Bar{V}_{i}(t)\}_t$ is convergent, which is a contradiction with the divergence of $\{\mathbb{E}_{\varepsilon_{i}(t)}[\zeta_i(t)]\}_t$. Therefore, the sequence $\left\{ \Bar{V}_{i}(t) \right\}_{t\geq 0}$ converges.
\end{itemize}
By using the fact that in both cases (i) and (ii), $\{\mathbb{E}_{\varepsilon_{i}(t)}[\zeta_i(t)]\}_t$ is a convergent sequence, since 
$$
\Bar{V}_{i}(t+1)=\dfrac{ \Bar{V}_{i}(t)  + \gamma_i \zeta_i^{(*)}z_{i}(t)^2}{1+\gamma_i z_i(t)^2}.
$$
we have
$$
\Bar{V}_{i}(t+1)-\zeta_i^{(*)}=\dfrac{ \Bar{V}_{i}(t)  - \zeta_i^{(*)}}{1+\gamma_i z_i(t)^2}, \qquad \mbox{ and } \qquad 
\Bar{V}_{i}(t)-\zeta_i^{(*)}=\dfrac{ \Bar{V}_{i}(0)  - \zeta_i^{(*)}}{\underset{h=1}{\overset{t-1}\prod}\left(1+\gamma_i z_i(t)^2\right)}.
$$
Hence, we obtain that 
$$
\underset{t \rightarrow +\infty }\lim\mathbb{E}_{\varepsilon_{i}(t)}[\zeta_i(t)] = \left\{
    \begin{array}{ll}
      \zeta_i^{(*)}, & \text{if } \underset{t \rightarrow +\infty }\lim z_{i}(t) \neq0 \\
      \zeta_i^{(*)}+\dfrac{\zeta_i(0)-\zeta_i^{(*)}}{\underset{h=1}{\overset{\infty}\prod}\left(1+\gamma_i z_{i}(t)^2\right)},  & \text{if } \underset{t \rightarrow +\infty }\lim z_{i}(t)=0.
\end{array}\right.
$$
\end{proof}


\subsection*{\textbf{Proposition \ref{prop:convergence_mode_short_memory}}}

\begin{proof}
Conditioned on the input decisions $L_i(t)$, the MAP estimator follows a  deterministic sequence:
\begin{equation}\label{eq:v_learning_n1_onestep_expect}
    \overline{v}_i(t+1)  =
    \tilde{\mathfrak{1}}_{i, t}
  \dfrac{ \overline{v}_i(t) + \gamma_i \zeta_i^{(*)} z_{i}(t)^2 }{\left(1 + \gamma_i z_{i}(t)^2\right)}.
\end{equation}
By recursively substituting $\overline{v}_i(t)$ into $\overline{v}_i(t+1)$, we obtain:
$$
\begin{array}{ll}
    \overline{v}_i(1) &  = \tilde{\mathfrak{1}}_{i,0} \dfrac{ \zeta^{(0)}_i + \gamma_i v^{(*)}_i z_i(0)^2 }{R_{i,0}},\\
    \overline{v}_i(2) &  = \tilde{\mathfrak{1}}_{i,0} \tilde{\mathfrak{1}}_{i,1}  \dfrac{ \zeta^{(0)}_i + \gamma v^{(*)}_i z_i(0)^2 }{R_{i,0} R_{i,1}} + \tilde{\mathfrak{1}}_{i,1} \dfrac{ \gamma_i v^{(*)}_i z_i(1)^2 }{R_{i,1}},\\
    \overline{v}_i(3) &  = \tilde{\mathfrak{1}}_{i,0} \tilde{\mathfrak{1}}_{i,1} \tilde{\mathfrak{1}}_{i,2} \dfrac{ \zeta^{(0)}_i + \gamma_i v^{(*)}_i z_i(0)^2 }{R_{i,0} R_{i,1} R_{i,2}} + \tilde{\mathfrak{1}}_{i,1} \tilde{\mathfrak{1}}_{i,2} \dfrac{ \gamma_i v^{(*)}_i z_i(1)^2 }{R_{i,1} R_{i,2}} + \tilde{\mathfrak{1}}_{i,2} \dfrac{ \gamma_i v^{(*)}_i z_i(2)^2 }{R_{i,2}}, \\ 
    \overline{v}_i(4) &  = \tilde{\mathfrak{1}}_{i,0} \tilde{\mathfrak{1}}_{i,1} \tilde{\mathfrak{1}}_{i,2} \tilde{\mathfrak{1}}_{i,3} \dfrac{ \zeta^{(0)}_i + \gamma v^{(*)}_i z_i(0)^2 }{R_{i,0} R_{i,1} R_{i,2} R_{i,3}} + \tilde{\mathfrak{1}}_{i,1} \tilde{\mathfrak{1}}_{i,2} \tilde{\mathfrak{1}}_{i,3}\dfrac{ \gamma_i v^{(*)}_i z_i(1)^2 }{R_{i,1} R_{i,2} R_{i,3}} + \tilde{\mathfrak{1}}_{i,2} \tilde{\mathfrak{1}}_{i,3} \dfrac{ \gamma_i v^{(*)}_i z_i(2)^2 }{R_{i,2} R_{i,3}} + \tilde{\mathfrak{1}}_{i,3} \dfrac{ \gamma_i v^{(*)}_i z_i(3)^2 }{R_{i,3}},\\     
    \vdots & \vdots \\
    \overline{v}_i(t) & = \displaystyle \zeta^{(0)}_i \left(\prod_{h=0}^{t-1} \dfrac{\tilde{\mathfrak{1}}_{i,h}}{R_{i,h}}\right) + v^{(*)}_i \left(\gamma_i \sum_{h=0}^{t-1} z(h)^2 \prod_{s=h}^{t-1} \dfrac{\tilde{\mathfrak{1}}_{i,h} }{ R_{i,s}}\right),
\end{array}
$$
where $R_{i,t} = (1 + \gamma_i z_i(t)^2)$. This proves \eqref{eq:mode_shor_tmemory_general_form}. We distinguish the following cases:
\begin{itemize}
    \item[(1)] If $\Psi$ is a finite set, then there is a $t_0\in \nset$ such that for each $t\geq t_0$, we have $\overline{v}_i(t+1)=0$ and therefore $\left\{ \overline{v}_i(t) \right\}_{t}$  converges to 0.
    
    \item[(2)] Assume that $\Psi_i$ is an infinite set and $\underset{t\rightarrow\infty}\lim z_{i}(t) = 0$. Then, without loss of generality, let $\Psi_i\equiv \nset$ (in other words, if $\Psi_i\not\equiv\nset$, then there exists a bijection $\varphi:\nset\rightarrow\varphi(\nset)=\Psi_i$). 
    
    We now note that the convergence of $\left\{ \overline{v}_i(t) \right\}_{t}$ is a direct consequence of the following result:
    \begin{itemize}
      \item[I.3] If $\overline{v}_i(0)\geq \zeta_i^{(*)}$, then the sequence $\left\{ \overline{v}_i(t) \right\}_{t}$ is decreasing and lower bounded by $\zeta_i^{(*)}$.
      \item[I.4] If $\overline{v}_i(0)\leq \zeta_i^{(*)}$, then the sequence $\left\{ \overline{v}_i(t) \right\}_{t}$ is increasing and upper bounded by $\zeta_i^{(*)}$.
    \end{itemize}
    In the case of I.3, we first prove that $\left\{ \overline{v}_i(t) \right\}_{t}$ is lower bounded by $\zeta_i^{(*)}$. We proceed by induction and let $\overline{v}_i(0)\geq \zeta_i^{(*)}$. We show that if $\overline{v}_i(t)\geq \zeta_i^{(*)}$, then $\overline{v}_i(t+1)\geq \zeta_i^{(*)}$. By contradiction, let us assume that $\overline{v}_i(t+1)< \zeta_i^{(*)}$. Then, we have 
    $$
    \dfrac{ \overline{v}_i(t) + \gamma_i \zeta_i^{(*)} z_{i}(t)^2 }{\left(1 + \gamma_i z_{i}(t)^2\right)}<\zeta_i^{(*)}.
    $$ 
    This implies that $\overline{v}_i(t)< \zeta_i^{(*)}$, which is a contradiction. Still in I.3, we can now prove that the sequence $\left\{ \overline{v}_i(t) \right\}_{t}$ is decreasing:
    $$
    \begin{array}{ll}
    \overline{v}_i(t+1)-\overline{v}_i(t) &= \dfrac{ \overline{v}_i(t) + \gamma_i \zeta_i^{(*)} z_{i}(t)^2 }{\left(1 + \gamma_i z_{i}(t)^2\right)}- \overline{v}_i(t)\\
    &= \gamma_i \zeta_i^{(*)} z_{i}(t)^2\dfrac{ \zeta_i^{(*)} - \overline{v}_i(t)}{\left(1 + \gamma_i z_{i}(t)^2\right)}\\
    &\leq0.
    \end{array}
    $$
    In the case of I.4, we can use the same argument to prove that $\left\{ \overline{v}_i(t) \right\}_{t}$ is increasing and upper bounded by $\zeta_i^{(*)}$.
    
    Using the proof of Proposition \ref{prop:short_memory_learning}, we know that $\{\mathbb{E}_{\varepsilon_{i}(t)}[\zeta_i(t)]\}_t$ is a convergent sequence, and that in the case $\underset{t \rightarrow +\infty }\lim z_{i}(t)=0$ we have
    $$
    \underset{t \rightarrow +\infty }\lim \overline{v}_i(t)=\zeta_i^{(*)}+\dfrac{\zeta_i(0)-\zeta_i^{(*)}}{\underset{h=1}{\overset{\infty}\prod}\left(1+\gamma_i z_{i}(h)^2\right)}.
    $$
    
    \item[(3)] If $\Psi$ is an infinite set and $\underset{t\rightarrow\infty}\lim z_{i}(t) \neq 0$, then we show that $\left\{ \overline{v}_i(t) \right\}_{t}$ converges to $v^{(*)}_i$. In doing so, we first rewrite \eqref{eq:v_learning_n1_onestep_expect} as
    $$
    \overline{v}_i(t+1)=\left(\frac{1}{1+\gamma_i z_i(t)^2}\right)\left[\overline{v}_i(t)+\gamma_i v^{(*)}_i z_i(t)^2\right].
    $$
    Next, we note that
    $$
    \overline{v}_i(t+1) - v^{(*)}_i = \frac{\overline{v}_i(t)-v^{(*)}_i}{1+\gamma_i z_i(t)^2}
    $$
    implies the monotonicity of the sequence (namely, if $\overline{v}_i(0)\geq v^{(*)}_i $ then $\overline{v}_i(t+1)\leq \overline{v}_i(t)$ and if $\overline{v}_i(0)\leq v^{(*)}_i$ then $\overline{v}_i(t+1)\geq \overline{v}_i(t)$). Next, we have
$$
\frac{\overline{v}_i(t+1)-v^{(*)}_i}{\overline{v}_i(t)-v^{(*)}_i}=\frac{1}{1+\gamma_i z_i(t)^2}\leq 1.
$$

\begin{description}
\item Case 1) If the sequence $\{z_i(t)^2\}_t$ is divergent, then $\forall M>0$, there exists $t_0>0$, such that $\forall t>t_0$, $z_i(t)^2>M$. Therefore, for $M=1$, $\exists t_0>0$ , $\forall t>t_0$, $z_i(t)^2>1$. Then 
$$
\frac{1}{1+\gamma_i z_i(t)^2}\leq \frac{1}{1+\gamma_i},\text{ for } t>t_0.
$$
Let $\overline{\gamma}_i = \frac{1}{1+\gamma_i}<1$, then we obtain
$$
|\overline{v}_i(t+1)-v^{(*)}_i|\leq \overline{\gamma}_i |\overline{v}_i(t)-v^{(*)}_i|.
$$ 
This implies 
$$
|\overline{v}_i(t)-v^{(*)}_i|\leq \overline{\gamma}_i^t |\zeta_i(0)-v^{(*)}_i|\underset{t\rightarrow\infty}\longrightarrow 0.
$$
\item Case 2) If the sequence $\{z_i(t)^2\}_t$ converges to $M$, then by hypothesis 
$$
\underset{t\rightarrow\infty}\lim z_i(t) = M \neq 0.
$$
As a consequence, $\forall \epsilon>0$, $\exists t_0>0$ such that $\forall t>t_0$, $M-\epsilon\leq r^2(t)\leq M+\epsilon$. Choose an arbitrary $\epsilon>0$ with $M-\epsilon>0$. Then, $\exists t_0>0$ such that $\forall t>t_0$, $0< M-\epsilon\leq r^2(t)$. Therefore 
$$\frac{1}{1+\gamma_i z_i(t)^2}\leq \frac{1}{1+(M-\epsilon)\gamma_i}.
$$
Let $\overline{\gamma}_i=\frac{1}{1+(M-\epsilon)\gamma_i}<1$, then we obtain
$$
|\overline{v}_i(t+1)-v^{(*)}_i|\leq \overline{\gamma}_i |\overline{v}_i(t)-v^{(*)}_i|.
$$ 
This implies that 
$$
|\overline{v}_i(t)-v^{(*)}_i|\leq \overline{\gamma}_i^t |\zeta_i(0)-v^{(*)}_i|\underset{t\rightarrow\infty}\longrightarrow 0.
$$
\end{description}
\end{itemize}
\end{proof}


\subsection*{\textbf{Proposition \ref{prop:demography_short_mem}}}

\begin{proof}
Following an analogous procedure as in the proof of Proposition \ref{prop:demography_long_mem}, we obtain: 
    $$
    \begin{array}{ll}
        \frac{d ~ \ln( l_i(t) )}{d ~\delta(t)} & = \displaystyle \frac{d}{d ~\delta(t)} \left[ \ln\left(\xi_{i,t} w^*(t)(\delta(t) - l_0(t) )^{\kappa_{i,t}} \right)  \right]\\
        & \displaystyle = \frac{\kappa_{i,t}}{w^*(t)(\delta(t) - l_0(t) )}  \left(\frac{d ~ w^*(t)(\delta(t) - l_0(t))}{d ~\delta(t)} \right).
    \end{array}
    $$
Using $\frac{d ~w^*(t)(\delta(t) - l_0(t) )}{d ~\delta(t)}$ in \eqref{eq:d_w_d_l}, we have
    $$
   \frac{d ~ \ln( l_i(t) )}{d ~\delta(t)} = \dfrac{\kappa_{i,t}}{\sum_{j=1}^{n} \xi_{j,t} \cdot \kappa_{j,t} \cdot w^*(t)(\delta(t) - l_0(t))^{\kappa_{j,t}} } 
    $$
    Hence, there exists $\Bar{\delta}(t)$, such that for all $\delta(t) > \Bar{\delta}(t)$,
    $$
    \xi_{i,t} w^*(t)(\delta(t) - l_0(t))^{\kappa_{i,t}} > 1 \quad \mbox{ and } \quad  \frac{d }{d ~\delta(t)} \ln( l_i(t) )  > 0 
    $$
        Therefore, there exists a monotonic sequence $\{\delta(t)\}_t$, with $\delta(t)> \delta(t-1)$, such that the following uniform divergence holds:\footnote{Let $\{f_n\}_{n=1}^{\infty}$ be a sequence of real-valued functions defined on a domain $D \subseteq \mathbb{R}$. We say that the sequence $\{f_n(x)\}$ diverges uniformly on $D$ if for every $M > 0$, there exists a positive integer $N$ such that for all $n \geq N$ and for all $x \in D$, we have $|f_n(x)| > M$. In other words, beyond some index $N$, the magnitude of the functions $f_n(x)$ exceeds any given bound $M$ uniformly over the domain $D$.}
    $$
     \underset{t \rightarrow +\infty }\lim \ln( l_i(t) )= +\infty ,
    $$
    so  that $\underset{t \rightarrow +\infty }\lim l_{i}(t) \neq 1$.
\end{proof}


\subsection*{\textbf{Proposition \ref{prop:l_i_bound}}}

\begin{proof}
Let us $w^*(\delta(t) - l_0(t))$ be the collection of solutions of $L_{t}(w) = \delta(t) - l_0(t)$, and define as $\overline{w}^*$ and $\underline{w}^*$ the maximum and minimum element in $w^*(\delta(t) - l_0(t))$. Since $\kappa_i < -1$, we have
$$
\begin{array}{ll}
     \overline{w}^* & \displaystyle = \max\left \{w \geq 0 ~:~ \sum_{i=1}^{n} \chi_{i,t}  \cdot w^{\kappa_{i,t} } = \delta(t) - l_0(t)\right\}  \\[0.5cm]
     & \displaystyle \leq \max\left \{w \geq 0 ~:~ \sum_{i=1}^{n} \chi_{i,t}  \cdot w^{\kappa_{i,t} } \geq \delta(t) - l_0(t)\right\}  \\[0.5cm]
     & \displaystyle \leq \max\left \{w \geq 0  ~:~ \dfrac{1}{n}\sum_{i=1}^{n}  \chi_{i,t} w^{\kappa_{i,t} } \geq \dfrac{\delta(t) - l_0(t)}{n}\right\}  \\[0.5cm]
    & \displaystyle \leq \max\left \{w \geq 0  ~:~ \max_{i=1} \{ \chi_{i,t} w^{\kappa_{i,t} } \} \geq \dfrac{\delta(t) - l_0(t)}{n}\right\}  \\[0.5cm]
    & \displaystyle \leq \max_{i}\left \{  \left( \dfrac{\delta(t) - l_0(t)}{n \chi_{i,t}}\right)^{\frac{1}{\kappa_{i,t}}} \right\}  \\[0.5cm]
    & \displaystyle = \overline{\overline{w}}
\end{array} 
$$
and
$$
\begin{array}{ll}
     \underline{w}^* & \displaystyle = \min\left \{w \geq 0 ~:~ \sum_{i=1}^{n} \chi_{i,t}  \cdot w^{\kappa_{i,t} } = \delta(t) - l_0(t)\right\}  \\[0.5cm]
     & \displaystyle \geq \min\left \{w \geq 0 ~:~ \sum_{i=1}^{n} \chi_{i,t}  \cdot w^{\kappa_{i,t} } \leq \delta(t) - l_0(t)\right\}  \\[0.5cm]
    & \displaystyle \geq \min\left \{w \geq 0  ~:~  n\min_{i}\left\{\chi_{i,t} w^{\kappa_{i,t} }\right\}
    \leq \delta(t) - l_0(t)\right\}  \\[0.5cm]
    & \displaystyle \geq \min_{i}\left \{  \left( \dfrac{\delta(t) - l_0(t)}{n \chi_{i,t}}\right)^{\frac{1}{\kappa_{i,t}}} \right\}  \\[0.5cm]
    & \displaystyle = \underline{\underline{w}}
\end{array} 
$$
We have
$$
\begin{array}{lll}
     l_i(t) & = \displaystyle \left(\mathbb{E}\left[ \eta_{i}(t)^{\alpha} \right]\dfrac{\hat{\zeta}_{i}(t) }{ w(t)  }\right)^{\dfrac{1 }{1 - \hat{\zeta}_{i}(t)\alpha } } \\
     & \geq \displaystyle \left(\mathbb{E}\left[ \eta_{i}(t)^{\alpha} \right]\dfrac{\hat{\zeta}_{i}(t) }{ \overline{\overline{w}}  }\right)^{\dfrac{1 }{1 - \hat{\zeta}_{i}(t)\alpha } }\\
     & = \displaystyle \left[\mathbb{E}\left[ \eta_{i}(t)^{\alpha} \right] \dfrac{\hat{\zeta}_{i}(t) }{ \displaystyle \max_{j}\left \{  \left( \dfrac{\delta(t) - l_0(t)}{n \chi_{j,t}}\right)^{\frac{1}{\kappa_{j,t}}} \right\}   }\right]^{-\kappa_{i,t}}
\end{array}
$$
Therefore, for any $\lambda_{\min}, \lambda_{\max} > 0$
$$
\displaystyle \mbox{if } \quad \hat{\zeta}_{i}(t) ~\geq~ \dfrac{1}{\mathbb{E}\left[ \eta_{i}(t)^{\alpha} \right]}\max_{j}\left \{  \left( \dfrac{\delta(t) - l_0(t)}{n \chi_{j,t}}\right)^{\frac{1}{\kappa_{j,t}}} \right\} \left(\lambda_{\min}\right)^{-\frac{1}{\kappa_{i,t}}} \quad \mbox{ then } \quad l_i(t) ~\geq~ \lambda_{\min},
$$
and
$$
\displaystyle \mbox{if } \quad  \hat{\zeta}_{i}(t)  ~\leq~ \dfrac{1}{\mathbb{E}\left[ \eta_{i}(t)^{\alpha} \right]} \min_{j}\left \{  \left( \dfrac{\delta(t) - l_0(t)}{n \chi_{j,t}}\right)^{\frac{1}{\kappa_{j,t}}} \right\}\left(\lambda_{\max}\right)^{-\frac{1}{\kappa_{i,t}}} \quad \mbox{ then } \quad l_i(t) ~\leq~ \lambda_{\max}.
$$

\end{proof}


\subsection*{\textbf{Proposition \ref{prop:comparison_learning}}}

\begin{proof}

We distinguish the two learning cases. 

\paragraph{Derivation for the path-dependent case.} To investigate the reason behind the faster convergence observed in Example~\ref{ex:4}, we notice that the path-dependent learning update in Equation~\eqref{eq:v_learning_n1_long_mem} can be equivalently rewritten in a recursive form. 

Recall that the update rule is given by:
\[
\zeta_i(t+1) = \left( \frac{\zeta_i^{(0)} + \gamma_i \sum_{\ell=1}^{t} z_i(\ell) s_i(\ell)}{1 + \gamma_i \sum_{\ell=1}^{t} z_i(\ell)^2} \right)^{+}.
\]

Now, let us denote by \( \zeta_i(t) \) the MAP estimator computed up to period \( t-1 \), i.e.,
\[
\zeta_i(t) = \left( \frac{\zeta_i^{(0)} + \gamma_i \sum_{\ell=1}^{t-1} z_i(\ell) s_i(\ell)}{1 + \gamma_i \sum_{\ell=1}^{t-1} z_i(\ell)^2} \right)^{+}.
\]

Multiplying both sides by the denominator yields:
\[
(1 + \gamma_i \sum_{\ell=1}^{t-1} z_i(\ell)^2) \cdot \zeta_i(t) = \zeta_i^{(0)} + \gamma_i \sum_{\ell=1}^{t-1} z_i(\ell) s_i(\ell).
\]

We now add the new term \( \gamma_i z_i(t)^2 s_i(t) \) to the numerator and \( \gamma_i z_i(t)^2 \) to the denominator. This gives the update at time \( t+1 \) as:
\[
\zeta_i(t+1) = \left( \frac{(1 + \gamma_i \sum_{\ell=1}^{t-1} z_i(\ell)^2) \cdot \zeta_i(t) + \gamma_i z_i(t)^2 s_i(t)}{1 + \gamma_i \sum_{\ell=1}^{t} z_i(\ell)^2} \right)^{+}.
\]
Therefore,
$$
\zeta_i(t+1) = a_t^{PD} \zeta_i(t) + (1-a_t^{PD}) \frac{s_i(t)}{z_i(t)},
$$
using Proposition  \ref{prop:equilibrium} to replace $z_i(t)$, we have
$$
\RRew{a_t^{PD} = \left( \frac{(1 + \gamma_i \sum_{\ell=1}^{t-1} z_i(\ell)^2) }{1 + \gamma_i \sum_{\ell=1}^{t} z_i(\ell)^2} \right)^{+}.}
$$

\paragraph{Derivation for the path-independent case.}
We now show that the path-independent update rule given in Equation~\eqref{eq:v_learning_n1_short_mem} can also be rewritten in the form stated in the proposition.

Recall that the path-independent update is:
\[
\zeta_i(t+1) = \left( \frac{\zeta_i(t) + \gamma_i z_i(t)(\varepsilon_i(t) + \zeta_i^{(*)} z_i(t))}{1 + \gamma_i z_i(t)^2} \right)^{+}.
\]

Using the definition \( s_i(t) = \varepsilon_i(t) + \zeta_i^{(*)} z_i(t) \), this becomes:
\[
\zeta_i(t+1) = \left( \frac{\zeta_i(t) + \gamma_i z_i(t)^2 \cdot \frac{s_i(t)}{z_i(t)} }{1 + \gamma_i z_i(t)^2} \right)^{+}.
\]

Now observe that this expression is equivalent to:
\[
\zeta_i(t+1) = \left( \frac{1}{1 + \gamma_i z_i(t)^2} \zeta_i(t) + \frac{\gamma_i z_i(t)^2}{1 + \gamma_i z_i(t)^2} \cdot \frac{s_i(t)}{z_i(t)} \right)^{+}.
\]

This confirms that the update is a convex combination of the previous estimate \( \zeta_i(t) \) and the moment-based statistic \( s_i(t)/z_i(t) \), with weight:
\[
a_t^{PI} = \frac{1}{1 + \gamma_i z_i(t)^2}.
\]

Hence, the recursive update takes the form:
\[
\zeta_i(t+1) = a_t^{PI} \zeta_i(t) + (1 - a_t^{PI}) \frac{s_i(t)}{z_i(t)},
\]
which aligns with the unified representation provided in the proposition.

\end{proof}

\paragraph{Proposition \ref{prop:multidim_MAP}}

\begin{proof} 

We distinguish the two cases of path-dependent and path-independent learning.

\paragraph{Path-dependent learning.} Based on Assumption \ref{ass:3}, $\pi_{0,i}$ is the density function of a zero-truncated Gaussian random variable with parameters $\boldsymbol{\beta}^{(0)}_i$ and $\tau_i I$. We can write
$$
\begin{array}{ll}
\mathscr{L}(\boldsymbol{\beta}_i(t); \, \mathcal{I}_i(t) ) \pi_{0,i}(\boldsymbol{\beta}_i(t)) & \propto \displaystyle \prod_{\ell=1}^{t} \exp \left(- \frac{1}{2 \sigma_i^{2}} \left( \log x_i(h) - \log \mu_i(h) \right)^2\right) \pi_{0,i}(\boldsymbol{\beta}_i(t))\\[0.9cm]
&= \begin{cases}
    \displaystyle \exp \left(- \frac{1}{2} \sum_{i = 1}^{n} \theta_i(\boldsymbol{\beta}_i(t), \mathcal{I}_i(t)) \right) & \quad \mbox{ if } \boldsymbol{\beta}_i(t) \geq 0\\
    0 & \quad \mbox{ otherwise},
    \end{cases}
    \end{array}
$$
where $\boldsymbol{\beta}_{i,.}(t)$ is the $i$-th row of $\beta(t)$ and $\theta_i$ is defined as
\begin{equation*}
\theta_i(\boldsymbol{\beta}_i(t), \mathcal{I}(t)) ~=~
\begin{cases}
\displaystyle  \sum_{\ell=1}^{t} \frac{1}{\sigma_i^2} \left( s_i(\ell) - \sum_{j=1}^{n} \beta_{i,j}(t) z_{i,j}(\ell) \right)^2 + \frac{1}{\tau_i^2} \sum_{j=1}^{n} \left( \beta_{i,j}(t) - \beta_{i,j}^{0} \right)^2  & \quad \mbox{if } t > 1\\[0.8cm]
\displaystyle \frac{1}{\tau_i^2} \sum_{j=1}^{n} \left( \beta_{i,j} - \beta_{i,j}^{0} \right)^2 , & \quad \mbox{if } t = 1,
\end{cases}
\end{equation*}
with $s_i(\ell)$ and $z_{i,j}(\ell)$ defined in \eqref{eq:sr}. To maximize the posterior distribution for each period $t$, it is sufficient to solve
$$
\displaystyle \max_{\beta(t)} \quad 
\exp \left(- \frac{1}{2} \sum_{i = 1}^{n} \theta_i(\boldsymbol{\beta}_{i,.}(t), \mathcal{I}(t)) \right), \quad \mbox{ subject to }  \beta_{i,j}(t) \geq 0,
$$
as $\int \mathscr{L}(\boldsymbol{\beta}_i(t); \, \mathcal{I}_i(t) ) \pi_{0,i}(\boldsymbol{\beta}_i(t))~d \boldsymbol{\beta}_i(t) $ is constant with respect to $\boldsymbol{\beta}_i(t)$. We apply the Karush-Kuhn-Tucker conditions:
\begin{equation*}
\begin{array}{rl}
    \displaystyle \frac{\partial}{\partial \beta_{i,j}(t)}  \left[ \exp \left(- \frac{1}{2} \sum_{s = 1}^{n} \theta_s(\boldsymbol{\beta}_s(t), \mathcal{I}(t)) \right) - q_{i,j}\beta_{i,j}(t)\right]
    ~=~ 0 \quad \mbox{ and } \quad q_{i,j}\beta_{i,j}(t) = 0,
\end{array}
\end{equation*}
so that
$$
\displaystyle  \exp \left(- \frac{1}{2} \sum_{i = 1}^{n} \theta_i(\boldsymbol{\beta}_i(t), \mathcal{I}(t)) \right) \left(\frac{\partial}{\partial \beta_{i,j}(t)} \theta_i(\boldsymbol{\beta}_i(t), \mathcal{I}(t) ) \right) = q_{i,j}
$$

We fix any arbitrary index set $\Omega_{i,t}$. Due to the complementarity either $q_{i,j} = 0$ or $\beta_{i,j}(t) = 0$. Therefore, for all $(i, j)$, if $\beta_{i,j}(t) > 0$, then $q_{i,j} = 0$ and $\frac{\partial}{\partial \beta_{i,j}(t)} \theta_i(\boldsymbol{\beta}_i(t), \mathcal{I}(t) ) = 0$. This implies that for any $i \in \mathcal{N}$, the components of the vector $\boldsymbol{\beta}_i(t)$ can be partitioned in two sets $\Omega_{i,t}$ (associated to $\beta_{i,j}(t) > 0$) and $\mathcal{N}/\Omega_{i,t}$ associated to $\beta_{i,j}(t) = 0$). Then, for any $j \in \Omega_{i,t}$, we have
\begin{equation*}
\begin{cases}
\displaystyle  - \sum_{\ell=1}^t \frac{2}{\sigma_i^2} z_{i,j}(\ell) \left( s_i(\ell) - \left(\sum_{h \in \Omega_{i,t}} \beta_{i,h}(t)z_{i,h}(\ell)\right)\right) = \frac{2}{\tau_i^2} \left(\beta_{i,j}^{(0)} - \beta_{i,j}(t) \right)  & \quad \mbox{if } t > 1\\[0.6cm]
\displaystyle - \frac{2}{\tau_i^2} \left(\beta_{i,j}^{(0)} - \beta_{i,j}(t) \right) = 0, & \quad \mbox{if } t = 1,
\end{cases}
\end{equation*}
and for any $j \in \mathcal{N}/\Omega_{i,t}$, we have $\beta_{i,j}(t) = 0$.  Now, for all $j \in \Omega_{i,t}$, we can rearrange the first order conditions to get
\begin{equation*}
\begin{cases}
\displaystyle \sum_{\ell=1}^t \frac{1}{\sigma_i^2} z_{i,j}(\ell)\left(\sum_{h \in \Omega_{i,t}} \beta_{i,h}(t)z_{i,h}(t)\right) + \frac{1}{\tau_i^2}  \beta_{i,j}(t) = b_{i,j}(t)  & \quad \mbox{if } t > 1\\[0.5cm]
\displaystyle \beta_{i,j}(t)  = \beta_{i,j}^{(0)} , & \quad \mbox{if } t = 1,
\end{cases}
\end{equation*}
\noindent where $b_{i,j}(t) = (2/\tau_i^2)\beta_{i,j}^{(0)} - (2/\sigma_i^2) \sum_{\ell=1}^t s_i(\ell) z_{i,j}(\ell)$. Using the notation \eqref{eq:H_matrix} and \eqref{eq:B_vector}, we write
\begin{equation*}
\begin{cases}
\displaystyle  \left(\sum_{\ell=1}^{t} \frac{1}{\sigma_i^2} \mathbf{z}_{i,.}(\ell, \Omega_{i,t})\mathbf{z}_{i,.}(\ell, \Omega_{i,t})\T + \frac{1}{\tau_i^2} I_{|\Omega_{i,t}|} \right)  \boldsymbol{\beta}_i(t, \Omega_{i,t}) =  \mathbf{b}_{i}(t, 0, \Omega_{i,t})   & \quad \mbox{if } t > 1\\[0.5cm]
\displaystyle \boldsymbol{\beta}_i(t, \Omega_{i,t}) = \boldsymbol{\beta}_i^{(0)}(\Omega_{i,t}) , & \quad \mbox{if } t = 1,
\end{cases}
\end{equation*}
\noindent where $\boldsymbol{\beta}_{i}(t,\mathcal{U})$ is the $|\mathcal{U}|$-dimensional sub-vector of $\boldsymbol{\beta}_{i}(t)$, whose components correspond to the indices in $\mathcal{U}$ (in the same order). Therefore, all $i \in \Omega_{i,t}$ must verify $H_i(t, 0, \Omega_{i,t})^{-1} \mathbf{b}_{i}(t, 0, \Omega_{i,t}) \geq 0$. We now consider the elements $j \in \mathcal{N}/\Omega_{i,t} $. We have
$$
\displaystyle  \exp \left(- \frac{1}{2} \sum_{i = 1}^{n} \theta_i(\boldsymbol{\beta}_i(t), \mathcal{I}(t)) \right) \left(\frac{\partial}{\partial \beta_{i,j}(t)} \theta_i(\boldsymbol{\beta}_i(t), \mathcal{I}(t) ) \right) \geq 0, \mbox{ which implies } \frac{\partial}{\partial \beta_{i,j}(t)} \theta_i(\boldsymbol{\beta}_i(t), \mathcal{I}(t) )  \geq 0.
$$
Therefore, 
\begin{equation*}
\begin{cases}
\displaystyle   \sum_{\ell=1}^t \frac{1}{\sigma_i^2} z_{i,j}(\ell)  \left(\sum_{h \in \Omega_{i,t}} \beta_{i,h}(t)z_{i,h}(\ell)-s_i(\ell)\right) \geq \frac{1}{\tau_i^2} \beta_{i,j}^{(0)}  & \quad \mbox{if } t > 1\\[0.3cm]
\displaystyle \beta_{i,j}^{(0)}  \geq 0, & \quad \mbox{if } t = 1,
\end{cases}
\end{equation*}
and
\begin{equation*}
\displaystyle  \left(\sum_{\ell=1}^{t} \frac{1}{\sigma_i^2} \mathbf{z}_{i,.}(\ell, \mathcal{N}/\Omega_{i,t})\mathbf{z}_{i,.}(\ell, \Omega_{i,t})\T \right)  \boldsymbol{\beta}_i(t, \Omega_{i,t}) \geq  \boldsymbol{\beta}_{i}^{(0)} +   \frac{\tau_i^2}{\sigma_i^2} s_i(\ell) \sum_{\ell=1}^{t} \mathbf{z}_{i,.}(\ell, \mathcal{N}/\Omega_{i,t}). 
\end{equation*}
Using the notation \eqref{eq:H_matrix} and \eqref{eq:B_vector}, for every $t \geq 1$, let us define $\Omega_{i,t}(t,0)$ as follows: for any $\mathcal{U} \subseteq \mathcal{N}$ if $H_i(t, 0, \mathcal{U})^{-1} \mathbf{b}_{i}(t, 0, \mathcal{U}) > 0$ and
$H_i(t, 0,\mathcal{U})^{-1} \mathbf{b}_{i}(t, 0, \mathcal{U}) \geq \mathbf{b}_i(t,0,\mathcal{N}/\mathcal{U})$, then $\mathcal{U}\subseteq\Omega_{i,t}(t,0)$. Therefore, 
$$
\boldsymbol{\beta}_i(t,\Omega_{i,t}) = H_i(t, 1, \Omega_{i,t})^{-1} \mathbf{b}_{i}(t, 1, \Omega_{i,t}) \qquad \mbox{ and } \qquad \boldsymbol{\beta}_i(t, \mathcal{N}/\Omega_{i,t}) = \mathbf{0}.
$$

\paragraph{Path-independent learning.} 

To establish the distinction with respect to the proof of the path-dependent case, it is sufficient to redefine 
$$
\begin{array}{ll}
\mathscr{L}(\boldsymbol{\beta}_i(t); \, \mathcal{I}_i(t) ) \pi_{0,i}(\boldsymbol{\beta}_i(t)) & \propto \displaystyle \exp \left(- \frac{1}{2 \sigma_i^{2}} \left( \log x_i(t) - \log \mu_i(t) \right)^2\right) \pi_{0,i}(\boldsymbol{\beta}_i(t))\\[0.9cm]
&= \begin{cases}
    \displaystyle \exp \left(- \frac{1}{2} \sum_{i = 1}^{n} \theta_i(\boldsymbol{\beta}_i(t), \mathcal{I}_i(t)) \right) & \quad \mbox{ if } \boldsymbol{\beta}_i(t) \geq 0\\
    0 & \quad \mbox{ otherwise},
    \end{cases}
    \end{array}
$$
where $\boldsymbol{\beta}_{i,.}(t)$ is the $i$-th row of $\beta(t)$ and $\theta_i$ is defined as
\begin{equation*}
\theta_i(\boldsymbol{\beta}_i(t), \mathcal{I}(t)) ~=~
\begin{cases}
\displaystyle  \frac{1}{\sigma_i^2} \left( s_i(t) - \sum_{j=1}^{n} \beta_{i,j}(t) z_{i,j}(\ell) \right)^2 + \frac{1}{\tau_i^2} \sum_{j=1}^{n} \left( \beta_{i,j}(t) - \beta_{i,j}(t-1) \right)^2  & \quad \mbox{if } t > 1\\[0.8cm]
\displaystyle \frac{1}{\tau_i^2} \sum_{j=1}^{n} \left( \beta_{i,j} - \beta_{i,j}^{(0)} \right)^2 , & \quad \mbox{if } t = 1.
\end{cases}
\end{equation*}
Therefore, 
\begin{equation*}
\begin{cases}
\displaystyle    \frac{1}{\sigma_i^2} z_{i,j}(t)  \left(\sum_{h \in \Omega_{i,t}} \beta_{i,h}(t)z_{i,h}(t)-s_i(t)\right) \geq \frac{1}{\tau_i^2} \beta_{i,j}(t-1)  & \quad \mbox{if } t > 1\\[0.3cm]
\displaystyle \beta_{i,j}^{(0)}  \geq 0, & \quad \mbox{if } t = 1,
\end{cases}
\end{equation*}
and
\begin{equation*}
\displaystyle  \left( \frac{1}{\sigma_i^2} \mathbf{z}_{i,.}(t, \mathcal{N}/\Omega_{i,t})\mathbf{z}_{i,.}(t, \Omega_{i,t})\T \right)  \boldsymbol{\beta}_i(t, \Omega_{i,t}) \geq  \boldsymbol{\beta}_{i}(t-1, \Omega_{i,t}) +   \frac{\tau_i^2}{\sigma_i^2} s_i(t)  \mathbf{z}_{i,.}(t, \mathcal{N}/\Omega_{i,t}). 
\end{equation*}
so that
$$
\boldsymbol{\beta}_i(t,\Omega_{i,t}) = H_i(t, t, \Omega_{i,t})^{-1} \mathbf{b}_{i}(t, t \Omega_{i,t}) \qquad \mbox{ and } \qquad \boldsymbol{\beta}_i(t, \mathcal{N}/\Omega_{i,t}) = \mathbf{0}.
$$

\end{proof}


\paragraph{Lemma \ref{lemma:H_conv}}



\begin{proof} Based on \cite{miller1981inverse} (who generalizes the theory developed by \cite{sherman1950adjustment}), we know that for a given square matrix $H(t+1) = A + B(1) + \ldots + B(t)$, such that $B(i)$ is rank one and $A$ is non-singular, we have 
\begin{equation*}
H(t+1)^{-1} = H(t)^{-1} - \frac{1}{1 + \mbox{Tr}[H(t)^{-1}B(t)]} H(t)^{-1}B(t)H(t)^{-1},
\end{equation*}
with $H(1)^{-1} = A^{-1}$ and $\mbox{Tr}[.]$ denoting the trace of a matrix inside brackets. To keep notation short, $H_i(t',t)$ and $\mathbf{z}_{i,.}(t')$ shall be used instead of $H_{i}(t',t, \mathcal{U})$ and $\mathbf{z}_{i,.}(t', \mathcal{U})$ in the rest of this proof, as the argument is valid for all $\mathcal{U}$. Using \eqref{eq:H_matrix}, we define
$$
\kappa_{i, t', t} = 1 + \mbox{Tr}\left[ H_{i}(t',t)^{-1}  \left(\mathbf{z}_{i,.}(t'+1)\mathbf{z}_{i,.}(t'+1)\T \right)  \right],
$$
and apply the generalized Sherman-Morrison formula and for all $i \in \mathcal{N}$:
\begin{equation}\label{eq:ShermanMorrison}
H_{i}(t'+1, t)^{-1} =
\begin{cases}
H_{i}(t', t)^{-1} - \frac{1}{\kappa_{i, t', t}} H_{i}^{-1}(t', t)  \left(\mathbf{z}_{i,.}(t')\mathbf{z}_{i,.}(t')\T \right) H_{i}^{-1}(t', t) & \mbox{ if } t' > 0 \\[0.1cm]
\tau_i I & \mbox{ if } t' = 0
\end{cases}
\end{equation}
From this formula, we can compute the first-period matrix as
\begin{equation*}
\begin{array}{ll}
H_{i}^{-1}(1 , 1) & = \displaystyle \tau_i^2 I - \frac{\tau_i^4}{\sigma_i^2 + \tau_i^2 \mathbf{z}_{i,.}(1) \T \mathbf{z}_{i,.}(1)} \mathbf{z}_{i,.}(1) \mathbf{z}_{i,.}(1)\T\\[0.8cm]
      & = \displaystyle  \left[ \begin{array}{ccc}
      \tau_i^2 - \displaystyle \frac{\tau_i^4 \phi^2 \log(y_{i,1}(1))^2}{\sigma_i^2 + \tau_i^2  \phi^2 \sum_{j} \log(y_{i,j}(1))^2 }    & \ldots & - \displaystyle \frac{\tau_i^4 \phi^2 \log(y_{i,1}(1))\log(y_{i,n}(1))}{\sigma_i^2 + \tau_i^2  \phi^2  \sum_{j} \log(y_{i,j}(1))^2 } \\
       \vdots  & \ddots  & \vdots \\
      - \displaystyle \frac{\tau_i^4 \phi^2  \log(y_{i,1}(1))\log(y_{i,n}(1))}{\sigma_i^2 + \tau_i^2  \phi^2  \sum_{j} \log(y_{i,j}(1))^2 }    & \ldots & \tau_i^2 - \displaystyle \frac{\tau_i^4 \phi^2 \log(y_{i,n}(1))^2}{\sigma_i^2 + \tau_i^2   \phi^2  \sum_{j} \log(y_{i,j}(1))^2 }
    \end{array}
    \right]
\end{array}
\end{equation*}
Note that for any $a_q, a_s \in \rset$, the $(s,q)$ element of $H_{i}(1, 1)^{-1}$ can be written as

$$
\frac{\gamma_i \phi^2 \log(y_{i,s}(1))\log(y_{i,q}(1))}{1 + \gamma_i  \phi^2 \sum_{j} \log(y_{i,j}(1))^2 } = \frac{\gamma_i \phi^2 a_s a_q}{1 + \gamma_i \phi^2(a_s^2 + a_q^2 + \sum_{j}a_j^2) } \in [-1, 1],
$$
\noindent Thus, since $\phi \in [0, 1]$, when either $\gamma_i$ or $\phi$ go to zero, $H_{i}^{-1}(1,1)$ converges in distribution to a constant $\tau_i^2 I$. This means that applying \eqref{eq:ShermanMorrison}, the second period matrix becomes
$$
\begin{array}{ll}
     H_{i}^{-1}(2,1) & = H_{i}^{-1}(1,1) - \frac{1}{\kappa_{i,1,1}} H_{i}^{-1}(1,1)  \left(\mathbf{z}_{i,.}(2)\mathbf{z}_{i,.}(2)\T \right) H_{i}^{-1}(1,1) \\[0.4cm]
     & = \displaystyle \tau_i^2 I - \frac{\gamma_i \phi^2}{1 + \gamma_i \phi^2 \log \mathbf{y}_{i,.}(2) \T \log \mathbf{y}_{i,.}(2)} \log \mathbf{y}_{i,.}(2) \log \mathbf{y}_{i,.}(2)\T     
\end{array}
$$
which converges again to $\tau_i^2 I$ when either $\gamma_i$ or $\phi$ go to zero. Therefore, we proceed by induction from the first period and observe that for every $\ell$ we can write
$$
\displaystyle \lim_{\gamma_i \rightarrow 0} ~ H_{i,}^{-1}(\ell, 1)  \displaystyle = \lim_{\gamma_i \rightarrow 0}
H_{i}^{-1}(\ell-1, 1) - \lim_{\gamma_i \rightarrow 0} \frac{1}{\kappa_{i,\ell-1,1}} H_{i}^{-1}(\ell-1, 1)  \left(\mathbf{z}_{i,.}(\ell)\mathbf{z}_{i,.}(\ell)\T \right) H_{i}^{-1}(\ell-1, 1).
$$
and
$$
\displaystyle \lim_{\phi \rightarrow 0} ~ H_{i,}^{-1}(\ell, 1)  \displaystyle = \lim_{\phi \rightarrow 0}
H_{i}^{-1}(\ell-1, 1) - \lim_{\phi \rightarrow 0} \frac{1}{\kappa_{i,\ell-1,1}} H_{i}^{-1}(\ell-1, 1)  \left(\mathbf{z}_{i,.}(\ell)\mathbf{z}_{i,.}(\ell)\T \right) H_{i}^{-1}(\ell-1, 1).
$$
Therefore,
$$
\displaystyle \lim_{\gamma_i \rightarrow 0} ~ H_{i,}^{-1}(\ell, 1) ~=~ 
\displaystyle \lim_{\phi \rightarrow 0} ~ H_{i,}^{-1}(\ell, 1) ~=~ \tau_i^2 I
$$
\end{proof}


\subsection*{\textbf{Corollary \ref{cor:f_conditional}}}

\begin{proof}
To keep notation short, $H_i(t',t)$, $\mathbf{b}_{i}(t', t)$, $\boldsymbol{\beta}_{i,.}^{(0)}$ and $\mathbf{z}_{i,.}(t')$ shall be used instead of $H_{i}(t',t, \mathcal{U})$, $\mathbf{b}_{i}(t', t, \mathcal{U})$, $\boldsymbol{\beta}_{i,.}^{(0)}(\mathcal{U})$ and $\mathbf{z}_{i,.}(t', \mathcal{U})$ in the rest of this proof, as the argument is valid for all $\mathcal{U}$. Using \eqref{eq:H_matrix}, we can write
\begin{equation*}
\mathbf{b}_{i}(t', t) = \displaystyle  \frac{1}{\tau_i^2}  \boldsymbol{\beta}_{i,.}^{(0)} + \sum_{\ell=t}^{t'} \frac{1}{\sigma_i^2}  (\boldsymbol{\varepsilon} + B^{(*)}(\mathbf{z}_{i,.}(\ell)) \otimes \mathbf{z}_{i,.}(\ell),
\end{equation*}
where $\varepsilon_i = \log \eta_{i}(t) \sim N(m_i, \sigma_i )$ and $\boldsymbol{\varepsilon} = [\varepsilon_1, \ldots, \varepsilon_n]^{\top}$. Given functions $f(x)$ and $g(x)$, we define the limit behavior $\rightsquigarrow$ by establishing that $f(x)\rightsquigarrow g(x)$  if and only if $\lim_{x \rightarrow \infty} \frac{f(x)}{g(x)} = 1$ \citep{de1981asymptotic}. From Lemma \ref{lemma:H_conv}, we know that $H_{i}(t', t, \mathcal{U})$ converges to $\tau_i^2 I$ in two limit cases. Therefore, for any realization of $\boldsymbol{\varepsilon}$, we have
$$
\displaystyle \boldsymbol{\beta}_{i,.}(t+1, \Omega_{i,t}) ~ \rightsquigarrow~   \boldsymbol{\beta}_{i,.}^{(0)}(\Omega_{i,t}) + \gamma_i \sum_{\ell=1}^{t}  \left(\boldsymbol{\varepsilon}(\ell) + (B^{(*)}(\Omega_{i,t})\mathbf{z}_{i,.}(\ell, \Omega_{i,t})) \otimes \mathbf{z}_{i,.}(\ell, \Omega_{i,t})\right), \mbox{ and } \boldsymbol{\beta}_i(t, \mathcal{N}/\Omega_{i,t}) = \mathbf{0},
$$
for the path-dependent learning, and
$$
\displaystyle \boldsymbol{\beta}_{i,.}(t+1, \Omega_{i,t}) ~ \rightsquigarrow~   \boldsymbol{\beta}_{i,.}(t, \Omega_{i,t}) + \gamma_i \left(\boldsymbol{\varepsilon}(t) + (B^{(*)}(\Omega_{i,t})\mathbf{z}_{i,.}(t, \Omega_{i,t})) \otimes \mathbf{z}_{i,.}(t, \Omega_{i,t})\right), \mbox{ and } \boldsymbol{\beta}_i(t, \mathcal{N}/\Omega_{i,t}) = \mathbf{0},
$$
for the path-inddependent learning, where $\boldsymbol{\varepsilon}(\ell) \sim N(0,\sigma_i I_{|\Omega_{i,t}|} )$ and  $\otimes$ is the element-wise matrix product.

\end{proof}

\newpage


\section{: Summary of variables and parameters}\label{app:notation}

\small
\setlength{\LTpre}{0pt}
\setlength{\LTpost}{0pt}

\begin{longtable}{p{0.18\textwidth} p{0.75\textwidth}}
\caption{Main symbols used in the paper (excluding symbols introduced only for auxiliary purposes in the appendices).}
\label{tab:notation}\\
\hline
\textbf{Symbol} & \textbf{Meaning / definition} \\
\hline
\endfirsthead

\hline
\textbf{Symbol} & \textbf{Meaning / definition} \\
\hline
\endhead

\hline
\multicolumn{2}{r}{\emph{(continued on next page)}}\\
\endfoot

\hline
\endlastfoot

$\mathcal{N}$ & Set of sectors, $|\mathcal{N}|=n$.\\
$n$ & Number of sectors in $\mathcal{N}$.\\
$x_i(t)$ & Output (production) of sector $i$ at time $t$.\\
$c_i(t)$ & Consumption of good $i$ at time $t$, for $i\in\mathcal{N}$.\\
$l_i(t)$ & Labor demand (labor input) used by sector $i$ at time $t$.\\
$l_0(t)$ & Labor employed in sector $0$; treated as exogenous.\\
$p_i(t)$ & Equilibrium price of good $i$ at time $t$; $p_0(t)=1$ (numeraire).\\
$w(t)$ & Equilibrium wage (unit salary) at time $t$.\\
$\pi_i(t)$ & Profit of sector $i\in\mathcal{N}$ at time $t$.\\
$\pi(t)$ & Aggregate profit from sectors in $\mathcal{N}$, $\pi(t)=\sum_{i\in\mathcal{N}}\pi_i(t)$.\\
$\pi_0(t)$ & Profit of sector $0$, $\pi_0(t)=l_0(t)\,(1-w(t))$.\\
$E(t)$ & Household income/expenditure at time $t$, $E(t)=w(t)\delta(t)+\pi_0(t)+\pi(t)$.\\
$GDP(t)$ & GDP expression used in Example~\ref{ex:2}.\\
$w^*(t)$, $p_i^*(t)$ & Equilibrium wage and prices when emphasizing equilibrium values.\\

$\rho$ & Discount factor in lifetime utility.\\
$\alpha$ & Preference parameter in quasi-linear utility.\\
$\delta(t)$ & Aggregate labor supply / active population at time $t$ (inelastic).\\

$\eta_i(t)$ & Idiosyncratic productivity shock for sector $i$ at time $t$.\\
$m_i$ & Mean of $\log\eta_i(t)$ under $\log\eta_i(t)\sim\mathcal{N}(m_i,\sigma_i^2)$.\\
$\sigma_i$ & Standard deviation of $\log\eta_i(t)$ (shock magnitude).\\
$q_i$ & Mean productivity, $q_i=\mathbb{E}[\eta_i(t)]=\exp(m_i+\sigma_i^2/2)$.\\
$\mu_i(t)$ & Deterministic production component in $x_i(t)=\eta_i(t)\mu_i(t)$.\\
$\zeta_i^{(\ast)}$ & True (unknown) returns-to-scale parameter, $\zeta_i^{(\ast)}\in[\underline{\zeta},\overline{\zeta}]$.\\
$\zeta_i(t)$ & Firm $i$'s belief/estimate of $\zeta_i^{(\ast)}$ at time $t$ (MAP point estimate).\\
$\underline{\zeta},\overline{\zeta}$ & Exogenous bounds with $\underline{\zeta}>0$ and $\overline{\zeta}<1$.\\
$\hat{\zeta}_i(t)$ & Truncated belief used in decisions:
$\hat{\zeta}_i(t)=\max\{\underline{\zeta},\min\{\overline{\zeta},\zeta_i(t)\}\}$.\\
$\mathcal{I}_i(t)$ & Information set of firm $i$ at date $t$.\\
$\bar{x}_i(t)$ & ``Believed'' output constructed under belief $\zeta_i(t)$.\\
$\bar{p}_i(t)$ & ``Believed'' equilibrium price constructed under belief $\zeta_i(t)$.\\
$l_i(t)$  & Labor choice solving \eqref{eq:firm_problem} (within-period optimization).\\
$L_t(w)$ & Aggregate labor demand function in \eqref{eq:equilibrium_w}.\\
$\varepsilon_i(t)$ & Log shock: $\varepsilon_i(t)=\log(\eta_i(t))$.\\
$z_i(t)$ & Log labor: $z_i(t)=\log(l_i(t))$.\\
$s_i(\ell)$ & Term used in learning section: $s_i(\ell)=\varepsilon_i(\ell)+\zeta_i^{(\ast)}z_i(\ell)$.\\
$\phi(\cdot)$ & Deterministic belief update map.\\
$(\cdot)^+$ & Positive-part operator: $(u)^+=\max\{0,u\}$.\\
$\pi_{0,i}(\zeta)$ & Prior density of $\zeta$ for firm $i$ (zero-truncated Gaussian, Assumption~\ref{ass:3}).\\
$\pi_{t,i}(\zeta)$ & Posterior density at time $t$ (see \eqref{eq:T-posterior}).\\
$\zeta_i^{(0)}$ & Prior location/mean parameter for $\zeta$.\\
$\tau_i$ & Prior uncertainty (standard deviation) parameter.\\
$\gamma_i$ & Precision ratio, $\gamma_i=(\tau_i/\sigma_i)^2$.\\
$\mathscr{L}(\zeta;\mathcal{I})$ & Likelihood of $\zeta$ given information $\mathcal{I}$ (see \eqref{eq:likelihood}).\\

$L_i(t)$ & History of firm $i$'s labor inputs up to $t$: $L_i(t)=\{l_i(\ell)\}_{\ell=1}^{t}$.\\
$\tilde{z}_i^{(1)}(t)$ & $\tilde{z}_i^{(1)}(t)=\sum_{\ell=1}^{t-1} z_i(\ell)$.\\
$\tilde{z}_i^{(2)}(t)$ & $\tilde{z}_i^{(2)}(t)=\sum_{\ell=1}^{t-1} z_i(\ell)^2$.\\
$\overline{v}_i(\cdot)$, $\overline{\varphi}_i(\cdot)$ & Auxiliary terms defined in \eqref{eq:v_bar} for conditional moments (PD).\\
$\tilde{F}_i(t)$, $\tilde{f}_i(t)$ & Shorthands in Proposition~\ref{prop:expect_v_long_memory}.\\
$F(\cdot)$, $f(\cdot)$ & Standard normal CDF and PDF.\\
$G_i(\cdot,\cdot)$, $g_i(\cdot,\cdot)$ & CDF/PDF-based shorthands used to write $\tilde{F}_i(t),\tilde{f}_i(t)$ compactly.\\
$\mathbb{E}[\cdot\mid\cdot]$ & Conditional expectation operator.\\
$\mathbb{M}[\cdot\mid\cdot]$ & Conditional mode operator.\\
$\overline{\Psi}_i$ & Index set used in path-dependent mode analysis (see \eqref{eq:Psi_mode}).\\
$\varrho(\cdot)$ & Ordering/indexing map for $\overline{\Psi}_i$.\\

$\overline{\overline{v}}_{i,t}(\cdot)$, $\overline{\overline{\varphi}}_i(\cdot)$
& Auxiliary terms in Proposition~\ref{prop:expect_v_short_memory} for conditional moments (PI).\\
$\Psi_i$ & Index set used in PI mode analysis (Proposition~\ref{prop:convergence_mode_short_memory}).\\
$\psi_i(\cdot)$ & Ordering/indexing map for $\Psi_i$.\\
$\tilde{\mathfrak{1}}_{i,t}$ & Indicator used in Proposition~\ref{prop:convergence_mode_short_memory}.\\

$\lambda_{\min},\lambda_{\max}$ & Exogenous labor bounds in Proposition~\ref{prop:l_i_bound}.\\
$\kappa_{i,t}$ & $\kappa_{i,t} = -\left(1-\hat{\zeta}_i(t)\alpha \right)^{-1}$.\\
$\chi_{i,t}$ & $\chi_{i,t}=\left(\mathbb{E}[\eta_i(t)^\alpha]\hat{\zeta}_i(t)\right)^{-\kappa_{i,t}}$.\\

$v(t)$ & Generic latent state in the Bayesian-filter illustration (context-specific).\\
$x(t)$ & Generic observation in the Bayesian-filter illustration (context-specific).\\
$\omega(t)$ & Generic process noise in the Bayesian-filter illustration.\\
$\nu(\cdot)$ & Generic state transition map in the Bayesian-filter illustration.\\
$\mu(\cdot)$ & Generic observation map in the Bayesian-filter illustration.\\
$A(u)$ & $A(u)=\dfrac{1}{1-u}\log \mathbb{E}[\eta_i(t)^\alpha]$ (Proposition~\ref{prop:comparison_learning}).\\
$B(u,w)$ & $B(u,w)=\dfrac{1}{(1-u)\alpha}\log\!\left(\dfrac{u}{w}\right)$ (Proposition~\ref{prop:comparison_learning}).\\
$\sigma_i^*(u,w)$ & $\sigma_i^*(u,w)=\left(\dfrac{\sigma_i}{A(u)+B(u,w)}\right)^2$ (Proposition~\ref{prop:comparison_learning}).\\
$a_t^{PD},a_t^{PI}$ & Time-varying weights in 
Proposition~\ref{prop:comparison_learning}.\\
$\varepsilon_i^*(t)$ & Gaussian ``sufficient statistic'' in Proposition~\ref{prop:comparison_learning}.\\

$y_{i,j}(t)$ & Amount of good $j$ used as intermediate input by sector $i$ at time $t$.\\
$Y(t)$ & Matrix with entries $y_{i,j}(t)$.\\
$\phi\in[0,1]$ & Material-input intensity parameter in the multi-input technology (distinct from update map $\phi(\cdot)$ by context).\\
$\beta_{i,j}$ & Input-output elasticity parameter for input $j$ in sector $i$ (unknown and learned).\\
$\beta_{i,j}^{(\ast)}$ & True (unknown) input-output elasticity.\\
$\beta_{i,j}^{(0)}$ & Prior location/mean for $\beta_{i,j}$.\\
$B$ & $n\times n$ matrix collecting $\beta_{i,j}$.\\
$B^{(\ast)}$ & True elasticity matrix with entries $\beta_{i,j}^{(\ast)}$.\\
$s_i(t)$ & (High-dimensional section) $s_i(t)=\sigma_i\varepsilon_i+\sum_{j=1}^n \phi \beta_{i,j}^{(\ast)}\log y_{i,j}(t)$.\\
$z_{i,j}(t)$ & (High-dimensional section) $z_{i,j}(t)=\phi\log y_{i,j}(t)$.\\
$\mathcal{U}$ & Ordered subset of indices used to form sub-vectors/matrices.\\
$\mathbf{z}_i(t,\mathcal{U})$ & Sub-vector collecting $\{z_{i,j}(t)\}_{j\in\mathcal{U}}$.\\
$I_h$ & Identity matrix of dimension $h$.\\
$H_i(t',t,\mathcal{U})$ & Matrix defined in \eqref{eq:H_matrix}.\\
$\mathbf{b}_i(t',t,\mathcal{U})$ & Vector defined in \eqref{eq:B_vector}.\\
$\Omega_{i,t}$ & Collection of index sets defined before Proposition~\ref{prop:multidim_MAP}.\\
$\boldsymbol{\beta}_i(t,\mathcal{U})$ & Sub-vector of firm $i$'s MAP estimate at time $t$ for indices in $\mathcal{U}$.\\
$\mathbf{0}$ & Zero vector (dimension clear from context).\\
$\otimes$ & Element-wise (Hadamard) product.\\
$\rightsquigarrow$ & Asymptotic-equivalence notation used in Corollary~\ref{cor:f_conditional}.\\

\end{longtable}

\normalsize

\newpage


\section{: Probability of applying a rule of thumb}\
\label{Section:appendix2}


In this appendix, the inquiry pertains to the likelihood of the analyzed dynamic system hitting a state where firms dismiss the empirical estimation of $\zeta_{i}(t)$ and adopt a rule of thumb. Using \eqref{prop:v_learning_long_mem} and noticing that $\varepsilon_{i}(t) = \mathcal{N}(m_i, \sigma_i^2)$, we have
$$
\begin{array}{ll}
\mathbb{P}\left(\zeta_i(t) \leq \underline{\zeta} ~|~ \mathcal{I}_i(t-1) \right)
    & = \mathbb{P}\left( \dfrac{\zeta^{(0)}_i  + \gamma_i \sum_{\ell=1}^{t} z_i(\ell) s_i(\ell)}{\left(1 + \gamma_i \sum_{\ell=1}^{t} z_i(\ell)^2\right)} \leq \underline{\zeta} ~\Big|~ \mathcal{I}_i(t-1) \right) \\[0.6cm]
    & = \mathbb{P}\left( \zeta^{(0)}_i  + \gamma_i \sum_{\ell=1}^{t} z_i(\ell) s_i(\ell) \leq \underline{\zeta}\left(1 + \gamma_i \sum_{\ell=1}^{t} z_i(\ell)^2\right) ~\Big|~ \mathcal{I}_i(t-1) \right) \\[0.6cm]
    & = \mathbb{P}\left( (\zeta^{(0)}_i - \underline{\zeta})  + \gamma_i (\zeta_i^{(*)} - \underline{\zeta}) \sum_{\ell=1}^{t} z_i(\ell)^2  \leq - \gamma_i \sum_{\ell=1}^{t} z_i(\ell) \varepsilon_{i}(\ell)
     ~\Big|~ \mathcal{I}_i(t-1) \right) \\[0.6cm]  
    & = \mathbb{P}\left( (\zeta^{(0)}_i - \underline{\zeta})  + \gamma_i (\zeta_i^{(*)} - \underline{\zeta}) \sum_{\ell=1}^{t} z_i(\ell)^2  \leq Q_{i,t} ~\Big|~ \mathcal{I}_i(t-1) \right) 
\end{array}
$$
where 
$$
Q_{i,t} = - \gamma_i \sum_{\ell=1}^{t} z_i(\ell) \varepsilon_{i}(\ell) ~\sim~ \mathcal{N}\left( - m_i \gamma_i \sum_{\ell=1}^{t} z_i(\ell), ~ \sigma_i^2 \gamma_i^2 \sum_{\ell=1}^{t} z_i(\ell)^2  \right)
$$
Therefore,
$$
\begin{array}{ll}
\mathbb{P}\left(\zeta_i(t) \leq \underline{\zeta} ~|~ \mathcal{I}_i(t-1) \right)
    & \mathbb{P}\left( (\zeta^{(0)}_i - \underline{\zeta})  + \gamma_i (\zeta_i^{(*)} - \underline{\zeta}) \sum_{\ell=1}^{t} z_i(\ell)^2  \leq Q_{i,t} ~\Big|~ \mathcal{I}_i(t-1) \right)  \\[0.6cm]
    & = 1 - F\left( \dfrac{\sigma^2(\zeta^{(0)}_i - \underline{\zeta})  + \tau_i^2 (\zeta_i^{(*)} - \underline{\zeta}) \sum_{\ell=1}^{t} z_i(\ell)^2 + m_i \tau_i^2 \sum_{\ell=1}^{t} z_i(\ell)}{\sigma_i \tau_i^2 \sqrt{\sum_{\ell=1}^{t} z_i(\ell)^2}} \right) \\[0.6cm]
     & = 1 - F\left( \tilde{t}(\sigma_i, \tau_i)  \right),
\end{array}
$$
where $Z$ is a standardized Gaussian and $F$ is the corresponding probability distribution function. We have, 
$$
\tilde{t}(\sigma_i, \tau_i) = \dfrac{\sigma^2(\zeta^{(0)}_i - \underline{\zeta})}{\sigma_i \tau_i^2 \sqrt{\sum_{\ell=1}^{t} z_i(\ell)^2}} + \dfrac{ \tau_i^2 (\zeta_i^{(*)} - \underline{\zeta}) \sum_{\ell=1}^{t} z_i(\ell)^2 + m_i \tau_i^2 \sum_{\ell=1}^{t} z_i(\ell)}{\sigma_i \tau_i^2 \sqrt{\sum_{\ell=1}^{t} z_i(\ell)^2}}.
$$
Notably, $\mathbb{P}\left(\zeta_i(t) \leq \underline{\zeta} ~|~ \mathcal{I}_i(t-1) \right)$ is decreasing with respect to $\tau_i$ (the uncertainty of firm $i$'s prior knowledge) and $\sigma_i$ (the magnitude of idiosyncratic production shocks). We have
$$
\underset{\sigma_i \rightarrow 0}{\lim} ~ \tilde{t}(\sigma_i, \tau_i) =  +\infty \quad \mbox{ and } \quad \underset{\tau_i \rightarrow 0}{\lim} ~ \tilde{t}(\sigma_i, \tau_i) =  + \infty 
$$
Since $F$ is continuous, increasing and $\lim_{t \rightarrow -\infty} F(t) = 0$, then 
$$
\left\{\begin{array}{ll}
\underset{\sigma_i \rightarrow 0}{\lim} ~ \mathbb{P}\left(\zeta_i(t) = 0 ~|~ \mathcal{I}_i(t-1) \right) & = 0, \\[0.5cm]
\underset{\tau_i \rightarrow 0}{\lim} ~ \mathbb{P}\left(\zeta_i(t) = 0 ~|~ \mathcal{I}_i(t-1) \right) & = 0.
\end{array}\right.
$$

Therefore, the learning path remains invariant with respect to $\underline{\zeta}$ when a sector has strong prior knowledge and experiences only minor productivity fluctuations.

\newpage

\RRew{
\section{: Inclusion of capital market}\label{Section:appendix_C}

In this appendix, we extend the baseline model by incorporating capital as an additional production input. We show that this extension does not alter the structure of the learning dynamics analyzed in the paper. In particular, even when capital is introduced, the production function can be re-expressed in the same reduced form used throughout the main text, so that the characterization of firms' learning paths remains unchanged.

\paragraph{Consumer side.} 
Let prices for non-numeraire goods be denoted by \(p_i(t)\), with \(p_0(t) \equiv 1\). The representative consumer can trade a one-period asset \(a(t+1)\) (interpreted as a bond or riskless capital) that pays one unit of the numeraire in period \(t+1\) and is sold at price $1/R_t$ in period $t$. Income is given by $E(t)$. With an initial asset endowment $a(1)$ and the standard no-Ponzi/transversality condition, the budget constraint reads
\begin{equation}\label{eq:budget_constr}
    c_0(t)+\sum_{i=1}^n p_i(t)c_i(t) + a(t+1)
= E(t) + R_t\,a(t) \quad \text{for all } t \geq 1.
\end{equation}
The representative consumer maximizes lifetime utility
$$
\max_{\{c_0(t),c_i(t),a_{t+1}\}_{t\ge1}} 
\sum_{t=1}^{\infty}\rho^t \left[ c_0(t)+\sum_{i=1}^n \frac{c_i(t)^{\alpha}}{\alpha} \right],
$$
subject to the budget constraints \eqref{eq:budget_constr} and the transversality condition \(\lim_{T\to\infty}\rho^T a(T+1)=0\).  
Introducing Lagrange multipliers \(\{\lambda_t\}_{t\ge1}\), the Lagrangian is
$$
\mathcal{L}=\sum_{t=1}^{\infty}\Bigg\{
\rho^t\!\left[c_0(t)+\sum_{i=1}^n\frac{c_i(t)^{\alpha}}{\alpha}\right]
+\lambda_t\Big(E(t)+R_t a(t) - c_0(t)-\sum_{i=1}^n p_i(t)c_i(t)-a(t+1)\Big)
\Bigg\}.
$$
The first-order conditions are
$$
\left\{
\begin{array}{ll}
\displaystyle \frac{\partial\mathcal{L}}{\partial c_0(t)}: & \rho^t - \lambda_t = 0 \quad \Rightarrow\quad \lambda_t=\rho^t,\\[0.4cm]
\displaystyle \frac{\partial\mathcal{L}}{\partial c_i(t)}: & \rho^t c_i(t)^{\alpha-1} - \lambda_t p_i(t)=0 
\quad \Rightarrow\quad p_i(t)=c_i(t)^{\alpha-1},\\[0.4cm]
\displaystyle \frac{\partial\mathcal{L}}{\partial a(t+1)}: & -\lambda_t+\lambda_{t+1}R_{t+1}=0
\quad \Rightarrow\quad \lambda_t=\lambda_{t+1}R_{t+1}.
\end{array}\right.
$$
Hence $\rho^t=\rho^{t+1}R_{t+1}$, which implies \(1=\rho R_{t+1}\).

\paragraph{Firm side.}
Assume that the good $0$ can be used both as consumption and
capital good. Assume that capital has full depreciation in one period. Each firm $i\in\mathcal{N}$ chooses labor $l_{i}(t)$ and capital $k_{i}(t)$ as inputs according to the technology
$$
x_{i}(t)=e_{i}(t)l_{i}(t)^{\Upsilon_{i}^{(\ast)}}k_{i}(t)^{\iota_{i}}%
$$
where $e_{i}(t)$ is a firm-specific productivity term; $\iota_{i}\in\left(
0,1\right)$ is known by the firm; $\Upsilon_{i}^{(\ast)}$ is the parameter to be learned. The perceived production technology is
\[
x_{i}(t)=e_{i}(t)l_{i}(t)^{\Upsilon_{i}(t)}k_{i}(t)^{\iota_{i}}.
\]
With gross capital return $R_{t}=1/\rho$, profit maximization becomes:%
\[
\underset{\left(  l_{i}(t),k_{i}(t)\right)  }{\max}\mathbb{E}\left[  \left.
p_{i}(t)e_{i}(t)\right\vert \mathcal{I}_{i}(t)\right]  l_{i}(t)^{\Upsilon
_{i}(t)}k_{i}(t)^{\iota_{i}}-w(t)l_{i}(t)-\frac{k_{i}(t)}{\rho}.
\]
The term $\mathbb{E}\left[  \left.  p_{i}(t)e_{i}(t)\right\vert \mathcal{I}%
_{i}(t)\right]$ is denoted $P_{i,t}$. We have:
$$
\Upsilon_{i}(t)P_{i,t}l_{i}(t)^{\Upsilon_{i}(t)-1}k_{i}(t)^{\iota_{i}} 
=w(t)\quad \mbox{ and } \quad \iota_{i}(t)P_{i,t}l_{i}(t)^{\Upsilon_{i}(t)}k_{i}(t)^{\iota_{i}-1} 
=\frac{1}{\rho}.
$$
The ratio of these two conditions gives:%
\[
\frac{k_{i}(t)}{l_{i}(t)}=\frac{w(t)\rho\iota_{i}}{\Upsilon_{i}(t)}.
\]
Replacing $k_{i}(t)$ in the first condition leads to:%
\[
\Upsilon_{i}(t)P_{i,t}l_{i}(t)^{\Upsilon_{i}(t)-1}\left(  \frac{l_{i}%
(t)w(t)\rho\iota_{i}}{\Upsilon_{i}(t)}\right)  ^{\iota_{i}}=w(t)
\]
or%
\[
l_{i}(t)=P_{i,t}^{\frac{1}{1-\Upsilon_{i}(t)-\iota_{i}}}\left(  \rho\iota
_{i}\right)  ^{\frac{\iota_{i}}{1-\Upsilon_{i}(t)-\iota_{i}}}\left[
\frac{\Upsilon_{i}(t)}{w(t)}\right]  ^{\frac{1-\iota_{i}}{1-\Upsilon
_{i}(t)-\iota_{i}}}.
\]
Then we get:
\[
k_{i}(t)=P_{i,t}^{\frac{1}{1-\Upsilon_{i}(t)-\iota_{i}}}\left(  \rho\iota
_{i}\right)  ^{\frac{1-\Upsilon_{i}(t)}{1-\Upsilon_{i}(t)-\iota_{i}}}\left[
\frac{\Upsilon_{i}(t)}{w(t)}\right]  ^{\frac{\Upsilon_{i}(t)}{1-\Upsilon
_{i}(t)-\iota_{i}}}.
\]
The expected production of the firm is:%
\[
\bar{x}_{i}(t)=e_{i}(t)P_{i,t}^{\frac{\Upsilon_{i}(t)+\iota_{i}}%
{1-\Upsilon_{i}(t)-\iota_{i}}}\left(  \rho\iota_{i}\right)  ^{\frac{\iota_{i}%
}{1-\Upsilon_{i}(t)-\iota_{i}}}\left[  \frac{\Upsilon_{i}(t)}{w(t)}\right]
^{\frac{\Upsilon_{i}(t)}{1-\Upsilon_{i}(t)-\iota_{i}}},
\]
which is related to the consumer demand: $\bar{x}_{i}(t)=\left[  \bar{p}_{i}(t)\right]  ^{\frac{1}{1-\alpha}}$. Thus,
\[
\bar{p}_{i}(t)=\bar{x}_{i}(t)^{\alpha-1}=e_{i}(t)^{\alpha-1}P_{i,t}%
^{\frac{\left[  \Upsilon_{i}(t)+\iota_{i}\right]  \left(  \alpha-1\right)
}{1-\Upsilon_{i}(t)-\iota_{i}}}\left(  \rho\iota_{i}\right)  ^{\frac{\iota
_{i}\left(  \alpha-1\right)  }{1-\Upsilon_{i}(t)-\iota_{i}}}\left[
\frac{\Upsilon_{i}(t)}{w(t)}\right]  ^{\frac{\Upsilon_{i}(t)\left(
\alpha-1\right)  }{1-\Upsilon_{i}(t)-\iota_{i}}}.
\]
Finally,
\begin{align*}
\mathbb{E}\left[  \bar{p}_{i}(t)e_{i}(t)\right]    & =P_{i,t}=\mathbb{E}%
\left[  e_{i}(t)^{\alpha}\right]  P_{i,t}^{\frac{\left[  \Upsilon_{i}%
(t)+\iota_{i}\right]  \left(  \alpha-1\right)  }{1-\Upsilon_{i}(t)-\iota_{i}}%
}\left(  \rho\iota_{i}\right)  ^{\frac{\iota_{i}\left(  \alpha-1\right)
}{1-\Upsilon_{i}(t)-\iota_{i}}}\left[  \frac{\Upsilon_{i}(t)}{w(t)}\right]
^{\frac{\Upsilon_{i}(t)\left(  \alpha-1\right)  }{1-\Upsilon_{i}(t)-\iota_{i}%
}}\\
P_{i,t}^{1+\frac{\left[  \Upsilon_{i}(t)+\iota_{i}\right]  \left(
1-\alpha\right)  }{1-\Upsilon_{i}(t)-\iota_{i}}}  & =\mathbb{E}\left[
e_{i}(t)^{\alpha}\right]  \left(  \rho\iota_{i}\right)  ^{\frac{\iota
_{i}\left(  \alpha-1\right)  }{1-\Upsilon_{i}(t)-\iota_{i}}}\left[
\frac{\Upsilon_{i}(t)}{w(t)}\right]  ^{\frac{\Upsilon_{i}(t)\left(
\alpha-1\right)  }{1-\Upsilon_{i}(t)-\iota_{i}}}\\
P_{i,t}^{\frac{1}{1-\Upsilon_{i}(t)-\iota_{i}}}  & =\mathbb{E}\left[
e_{i}(t)^{\alpha}\right]  ^{\frac{1}{1-\alpha\left[  \Upsilon_{i}(t)+\iota
_{i}\right]  }}\left\{  \left(  \rho\iota_{i}\right)  ^{\frac{\iota_{i}%
}{1-\Upsilon_{i}(t)-\iota_{i}}}\left[  \frac{\Upsilon_{i}(t)}{w(t)}\right]
^{\frac{\Upsilon_{i}(t)}{1-\Upsilon_{i}(t)-\iota_{i}}}\right\}  ^{\frac
{\left(  \alpha-1\right)  }{1-\alpha\left[  \Upsilon_{i}(t)+\iota_{i}\right]
}}.
\end{align*}
Labor and capital expressions are then:%
\begin{align*}
l_{i}(t)  & =\mathbb{E}\left[  e_{i}(t)^{\alpha}\right]  ^{\frac{1}%
{1-\alpha\left[  \Upsilon_{i}(t)+\iota_{i}\right]  }}\left(  \rho\iota
_{i}\right)  ^{\frac{\iota_{i}\alpha}{1-\alpha\left[  \Upsilon_{i}%
(t)+\iota_{i}\right]  }}\left[  \frac{\Upsilon_{i}(t)}{w(t)}\right]
^{\frac{1-\iota_{i}\alpha}{1-\alpha\left[  \Upsilon_{i}(t)+\iota_{i}\right]
}},\\
k_{i}(t)  & =\mathbb{E}\left[  e_{i}(t)^{\alpha}\right]  ^{\frac{1}%
{1-\alpha\left[  \Upsilon_{i}(t)+\iota_{i}\right]  }}\left(  \rho\iota
_{i}\right)  ^{\frac{1-\alpha \Upsilon_{i}(t)}{1-\alpha\left[  \Upsilon
_{i}(t)+\iota_{i}\right]  }}\left[  \frac{\Upsilon_{i}(t)}{w(t)}\right]
^{\frac{\alpha \Upsilon_{i}(t)}{1-\alpha\left[  \Upsilon_{i}(t)+\iota
_{i}\right]  }}.
\end{align*}
The expression of $l_{i}(t)$ has the same form as \eqref{eq:labor_eq}. In order to show this, it is possible to use the change of variable:%
\[
\hat{\zeta}_{i}(t)=\frac{\Upsilon_{i}(t)}{1-\alpha\iota_{i}}.
\]
Therefore, $l_{i}(t)$ becomes
\[
l_{i}(t)=\left\{  \mathbb{E}\left[  e_{i}(t)^{\alpha}\right]  ^{\frac
{1}{\left[  1-\alpha \hat{\zeta}_{i}(t)\right]  }}\left(  \rho\iota_{i}\right)
^{\frac{\iota_{i}\alpha}{\left[  1-\alpha \hat{\zeta}_{i}(t)\right]  }}\left(
1-\alpha\iota_{i}\right)  \right\}  ^{\frac{1}{\left(  1-\alpha\iota
_{i}\right)  }}\left[  \frac{\hat{\zeta}_{i}(t)}{w(t)}\right]  ^{\frac{1}{\left[
1-\alpha \hat{\zeta}_{i}(t)\right]  }},
\]
which shows that the analysis of the labor market remains the same. The equilibrium on the capital market will not add any new condition. As capital can be traded between two periods at a constant price $1/\rho,$ the capital supply directly follows the value of capital demand. Thus, the inclusion of capital simply rescales productivity and the effective return to scale on labor, while preserving the reduced-form representation used in the main analysis. Consequently, the learning dynamics over \(\zeta_i(t)\) remain structurally identical to those in the baseline model. }


\newpage

\RRew{
\section{: Endogenous labor supply}\label{Section:appendix_D}

To endogenize the labor supply, we modify the consumer problem \eqref{eq:consumer_model_endogenous labor} and assume that at each period $t$, the representative consumer solves
\begin{equation}\label{eq:consumer_model_endogenous labor}
\max_{\{c_0(t),\,c_i(t),\,\delta(t)\}} 
\rho^{t}\!\left(c_{0}(t)+\sum_{i=1}^{n}\frac{c_{i}(t)^{\alpha}}{\alpha}-\delta(t)^{r}\right)
\quad\text{s.t.}\quad
c_{0}(t)+\sum_{i=1}^{n}p_{i}(t)c_{i}(t)=E(t),
\end{equation}
with \(E(t)=w(t)\delta(t)+\pi_{0}(t)+\pi(t)\), \(p_{0}(t)=1\), \(0<\alpha<1\), and \(r>1\).
(The budget constraint binds by monotonicity of utility in $c_0$, $c_i$, and $\delta(t)$.) The objective function of problem \eqref{eq:consumer_model_endogenous labor} includes a disutility of labor. Let us introduce the multiplier \(\lambda_t\) and consider the Lagrangian
\[
\mathcal{L}_t=\rho^{t}\!\left(c_{0}(t)+\sum_{i=1}^{n}\frac{c_{i}(t)^{\alpha}}{\alpha}-\delta(t)^{r}\right)
+\lambda_t\!\left(w(t)\delta(t)+\pi_{0}(t)+\pi(t)-c_{0}(t)-\sum_{i=1}^{n}p_{i}(t)c_{i}(t)\right).
\]
First-order conditions are
$$
\left\{
\begin{array}{lllll}
\displaystyle \frac{\partial\mathcal{L}_t}{\partial c_0(t)}&:\quad \rho^{t}-\lambda_t=0 &\Rightarrow& \lambda_t&=\rho^{t},\\[0.4cm]
\displaystyle \frac{\partial\mathcal{L}_t}{\partial c_i(t)}&:\quad \rho^{t}c_i(t)^{\alpha-1}-\lambda_t p_i(t)=0
&\Rightarrow& p_i(t)&=c_i(t)^{\alpha-1},\\[0.4cm]
\displaystyle \frac{\partial\mathcal{L}_t}{\partial \delta(t)}&:\quad -\rho^{t} r\,\delta(t)^{\,r-1}+\lambda_t w(t)=0
&\Rightarrow& r\,\delta(t)^{\,r-1}&=w(t).
\end{array}\right.
$$
Using \(\lambda_t=\rho^{t}\), the optimal labor supply is therefore
\begin{equation}\label{eq:delta_endogenous_labor}
    \delta^{\ast}(t)=\left(\frac{w(t)}{r}\right)^{\!\frac{1}{\,r-1\,}}\;\qquad (r>1,\; w(t)\ge 0).
\end{equation}

We now reproduce the existence and uniqueness of the equilibrium wage, presented in Lemma \ref{lemma:L}, for the case of endogenous labor supply.

\begin{lemma}\label{lemma:L_endog}
Let us define the aggregate private-sector labor demand
$L_t(w)\;=\;\sum_{i=1}^n l_i(t)$ (defined in \eqref{eq:equilibrium_w}), where each $l_i(t)$ is given by the firms' labor demand (defined in \eqref{eq:labor_eq}). Then, there exists a unique \(w^*(t)>0\) such that 
$\delta^{\ast}(t)-l_0(t) = L_t\big(w^*(t)\big)$, where $\delta^{\ast}(t)$ is endogenously defined in \eqref{eq:delta_endogenous_labor}.
\end{lemma}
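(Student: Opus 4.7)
My plan is to reduce the fixed-point problem to finding a zero of a scalar function and then apply continuity and monotonicity arguments almost identical to those used for Lemma~\ref{lemma:L}, with the added ingredient that the labor supply is now itself a function of $w$. Concretely, I would define
\[
G(w) \;=\; L_t(w) \;-\; \Big( \tfrac{w}{r}\Big)^{\!\tfrac{1}{r-1}} \;+\; l_0(t), \qquad w>0,
\]
and show that the equation $\delta^{\ast}(w)-l_0(t)=L_t(w)$ is equivalent to $G(w)=0$, using \eqref{eq:delta_endogenous_labor}. Existence and uniqueness of $w^{\ast}(t)$ then reduces to showing that $G$ has a unique zero on $\mathbb{R}_{+}$.

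The key steps, in order, would be as follows. First, I would invoke Lemma~\ref{lemma:L} to get that $L_t$ is continuous and strictly decreasing on $\mathbb{R}_{+}$, with $\lim_{w\to 0^{+}}L_t(w)=+\infty$ and $\lim_{w\to+\infty}L_t(w)=0$. Second, I would observe that $w\mapsto (w/r)^{1/(r-1)}$ is continuous and, since $r>1$ implies $1/(r-1)>0$, strictly increasing, with limits $0$ as $w\to 0^{+}$ and $+\infty$ as $w\to +\infty$. Combining these, $G$ is continuous on $\mathbb{R}_{+}$ and strictly decreasing (as a sum of a strictly decreasing and a strictly decreasing function, since $-(w/r)^{1/(r-1)}$ is strictly decreasing). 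Moreover,
\[
\lim_{w\to 0^{+}} G(w)=+\infty \qquad \text{and} \qquad \lim_{w\to +\infty} G(w)=-\infty.
\]
By the intermediate value theorem there exists at least one $w^{\ast}(t)>0$ with $G(w^{\ast}(t))=0$, and strict monotonicity rules out multiple zeros, so $w^{\ast}(t)$ is unique.

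Finally, I would verify that this $w^{\ast}(t)$ indeed corresponds to a well-posed equilibrium, i.e.\ that the private-sector labor $\delta^{\ast}(w^{\ast}(t))-l_0(t)$ is nonnegative. This is automatic: at $w^{\ast}(t)$ we have $\delta^{\ast}(w^{\ast}(t))-l_0(t) = L_t(w^{\ast}(t)) \ge 0$ since each $l_i(t)$ in \eqref{eq:labor_eq} is nonnegative. The main conceptual point, rather than a technical obstacle, is simply that endogenizing the supply side replaces the constant $\delta(t)-l_0(t)$ on the left-hand side of \eqref{eq:equilibrium_w} by a strictly increasing, continuous, unbounded-above function of $w$; monotonicity is preserved and so are the limit conditions, so the one-dimensional fixed-point argument used in Lemma~\ref{lemma:L} carries over essentially verbatim. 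No additional assumption on $l_0(t)$ is needed beyond $l_0(t)\ge 0$, which was already in force in the baseline model.
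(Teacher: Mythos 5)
Your proof is correct and follows essentially the same route as the paper's: both reduce the problem to a scalar root-finding argument for an auxiliary function (yours is the negative of the paper's $G_t(w)=\delta^{\ast}(w)-l_0(t)-L_t(w)$), establish strict monotonicity from the opposite monotonicities of demand and supply, and conclude via the limit behavior at $0$ and $+\infty$ together with the intermediate value theorem. The only cosmetic difference is that you invoke Lemma~\ref{lemma:L} for the properties of $L_t$ where the paper re-derives them inline, and you add a (harmless, correct) remark that nonnegativity of private-sector labor at the equilibrium wage is automatic.
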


\begin{proof}
We start by recalling the functional form of each firm's labor demand from the equilibrium conditions. As in the exogenous-supply case, we may write
$l_i(t) = \chi_{i,t} w(t)^{\kappa_{i,t}}$, 
where
$$
\kappa_{i,t} = - \dfrac{1 }{1 - \hat{\zeta}_{i}(t)\alpha } \quad \mbox{ and } \quad 
\chi_{i,t}  = \left(\mathbb{E}\left[ \eta_{i}(t)^{\alpha}  \right] \hat{\zeta}_{i}(t) \right)^{\dfrac{1 }{1 - \hat{\zeta}_{i}(t)\alpha } } .
$$
Using \(0<\alpha<1\) and \(\hat{\zeta}_i(t)\alpha<1\), each \(l_i(t)\) is continuous in \(w\) and strictly decreasing (since \(\kappa_{i,t}<0\)), and the aggregate demand
\[
L_t(w)=\sum_{i=1}^n \chi_{i,t}\,w^{\kappa_{i,t}}
\]
is continuous and strictly decreasing on $\mathbb{R}_+$. In particular \(L_t(w)\to +\infty\) as \(w\downarrow 0\) and \(L_t(w)\to 0\) as \(w\to+\infty\). Next, the endogenous supply
\[
\delta^{\ast}(t)=\Big(\tfrac{w}{r}\Big)^{1/(r-1)}
\]
is continuous and strictly increasing in \(w\) for \(r>1\); moreover \(\delta^{\ast}(w)\to 0\) as \(w\downarrow 0\) and \(\delta^{\ast}(w)\to +\infty\) as \(w\to +\infty\). Define \(G_t(w)=\delta^{\ast}(t)-l_0(t)-L_t(w)\). Since \(\delta^{\ast}(w)\) and \(L_t(w)\) are continuous, \(G_t\) is continuous on \(\mathbb{R}_+\). Differentiating, yields
$$
G_t'(w)=\frac{d\delta^{\ast}}{dw}(w)-L_t'(w).
$$
Because \(\dfrac{d\delta^{\ast}}{dw}(w)>0\) and \(L_t'(w)<0\), we have \(G_t'(w)>0\) for all \(w>0\). Hence, \(G_t\) is strictly increasing on \(\mathbb{R}_+\). Finally, the end-point behavior follows from the limits of \(\delta^{\ast}\) and \(L_t\). In fact, as \(w\downarrow 0\),
\(\delta^{\ast}(w)\to 0\) while \(L_t(w)\to +\infty\), so \(G_t(w)\to -\infty\). Instead, as \(w\to +\infty\),
\(\delta^{\ast}(w)\to +\infty\) while \(L_t(w)\to 0\). Therefore,
$$
\displaystyle \lim_{w\to 0} G_t(w)=-\infty \quad \mbox{ and } \quad \lim_{w\to +\infty} G_t(w)=+\infty.
$$
By continuity and strict monotonicity, there exists a unique \(w^*(t)>0\) such that \(G_t(w^*(t))=0\), i.e.
$$
\delta^{\ast}(t)-l_0(t)=L_t\big(w^*(t)\big).
$$
\end{proof}

Beyond the consistency between Lemma \ref{lemma:L} and Lemma \ref{lemma:L_endog}, for the existence and uniqueness of the equilibrium wage under exogenous (Lemma \ref{lemma:L}) and endogenous (Lemma \ref{lemma:L_endog}) labor supply, we explore hereafter the generalizability of Proposition \ref{prop:demography_long_mem} when labor supply is endogenously obtained. 

\begin{proposition}[Endogenous labor supply and path--dependent expansion]\label{prop:demography_endog} Define the unique equilibrium wage $w^{\ast}(t)$ by the market--clearing condition under endogenous supply, as established in Lemma \ref{lemma:L_endog}. Then there exists a monotone sequence of government labor demands $\{l_0(t)\}_t$ and the associated equilibrium sequence $\{w^{\ast}(t)\}_t$ with $w^{\ast}(t)\nearrow\infty$ such that, letting
\[
\tilde z_i^{(1)}(t)=\sum_{\ell=1}^{t} \ln\!\Big(\xi_{i,\ell}\,[w^{\ast}(\ell)]^{\kappa_{i,\ell}}\Big),
\qquad
\tilde z_i^{(2)}(t)=\sum_{\ell=1}^{t} \Big[\ln\!\big(\xi_{i,\ell}\,[w^{\ast}(\ell)]^{\kappa_{i,\ell}}\big)\Big]^2,
\]
for any fixed $\xi_{i,\ell}>0$, the following limits hold:
\[
\lim_{t\to\infty} \big|\tilde z_i^{(1)}(t)\big|=+\infty,
\qquad
\lim_{t\to\infty} \tilde z_i^{(2)}(t)=+\infty,
\]
and the ratio in case \textup{(5)} of Proposition~\ref{prop:long_memory_learning} diverges in magnitude:
\[
\lim_{t\to\infty}\frac{\zeta_i^{(0)}+\gamma_i \zeta_i^{(*)}\,\tilde z_i^{(2)}(t)}{\gamma_i \sigma_i \,\tilde z_i^{(1)}(t)}
=\pm\infty.
\]
In particular, the economy exhibits the same path--dependent long-memory behavior as in the exogenous supply case.\footnote{If one requires the limit to be $+\infty$ (rather than $\pm \infty$), it suffices to implement a monotone \emph{decreasing} wage path $w^{\ast}(t)\searrow 0$ via an appropriate sequence $\{l_0(t)\}_t$; the proof is identical with signs reversed.}
\end{proposition}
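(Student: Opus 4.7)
The plan is to invert the market--clearing identity of Lemma \ref{lemma:L_endog}, treating the equilibrium wage path $\{w^*(t)\}_t$ as the design variable and letting $\{l_0(t)\}_t$ be the sequence that implements it. By that lemma, the map $w\mapsto G_t(w)=\delta^*(w)-l_0(t)-L_t(w)$ is continuous, strictly increasing and surjective onto $\mathbb{R}$, so any prescribed wage $\bar{w}>0$ is uniquely implemented by setting
\[
l_0(t)\;:=\;\delta^*(\bar{w})-L_t(\bar{w})\;=\;(\bar{w}/r)^{1/(r-1)}-\sum_{i=1}^n \chi_{i,t}\,\bar{w}^{\kappa_{i,t}}.
\]
First, I would fix a divergent monotone target $w^*(t)\nearrow\infty$ (e.g.\ $w^*(t)=t$) and define $l_0(t)$ by the formula above with $\bar{w}=w^*(t)$. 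Because $\hat{\zeta}_i(t)\in[\underline{\zeta},\overline{\zeta}]$ with $\overline{\zeta}\alpha<1$, the exponents $\kappa_{i,t}$ are uniformly bounded away from $0$ and $-\infty$ and the $\chi_{i,t}$ are uniformly bounded, so $L_t(w^*(t))\to 0$ while $(w^*(t)/r)^{1/(r-1)}\to\infty$; hence $l_0(t)\to\infty$, and after discarding a finite initial segment if necessary, $\{l_0(t)\}_t$ is positive and monotone.

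The next step is to translate this wage divergence into divergence of $\tilde{z}_i^{(1)}(t)$ and $\tilde{z}_i^{(2)}(t)$. From $l_i(t)=\chi_{i,t}\,[w^*(t)]^{\kappa_{i,t}}$ one obtains $z_i(t)=\ln\chi_{i,t}+\kappa_{i,t}\ln w^*(t)$, and the uniform bounds on $(\kappa_{i,t},\chi_{i,t})$ imply that $z_i(t)\to -\infty$ at a rate of order $\ln w^*(t)$. Hence $|\tilde{z}_i^{(1)}(t)|\to\infty$ and $\tilde{z}_i^{(2)}(t)\to\infty$. A Stolz--Cesaro comparison then gives
\[
\lim_{t\to\infty}\frac{\tilde{z}_i^{(2)}(t)}{|\tilde{z}_i^{(1)}(t)|}\;=\;\lim_{t\to\infty}\frac{z_i(t)^2}{|z_i(t)|}\;=\;\lim_{t\to\infty}|z_i(t)|\;=\;+\infty,
\]
which holds because $|z_i(t)|$ is eventually monotone by monotonicity of $w^*(t)$. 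Splitting the target ratio as $\zeta_i^{(0)}/(\gamma_i\sigma_i\tilde{z}_i^{(1)}(t))+(\zeta_i^{(*)}/\sigma_i)\,\tilde{z}_i^{(2)}(t)/\tilde{z}_i^{(1)}(t)$ shows that the first piece vanishes and the second diverges in magnitude, with sign equal to that of $\tilde{z}_i^{(1)}(t)$; under $w^*(t)\nearrow\infty$ this sign is negative, so the limit is $-\infty$, while the symmetric construction with $w^*(t)\searrow 0$ (which falls under the same inversion recipe) reverses all signs and yields the $+\infty$ limit of the footnote. This places us squarely in case (5) of Proposition \ref{prop:long_memory_learning}.

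The main obstacle I anticipate is the feedback coupling: $\chi_{i,t}$ and $\kappa_{i,t}$ depend on the belief $\hat{\zeta}_i(t)$, which is itself shaped by past equilibria and hence by the past choices of $l_0(\cdot)$. The key observation that closes the argument is precisely the truncation $\hat{\zeta}_i(t)\in[\underline{\zeta},\overline{\zeta}]$ built into the firm's decision rule: it keeps $\kappa_{i,t}$ bounded away from $0$ and $-\infty$ and $\chi_{i,t}$ bounded, uniformly along every admissible belief trajectory. The rate at which $|z_i(t)|$ diverges is thus controlled by $\ln w^*(t)$ alone, independently of the learning path, which decouples the design of $\{l_0(t)\}_t$ from the endogenous updating of beliefs and makes the construction insensitive to the realized sample path.
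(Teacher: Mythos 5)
Your proposal is correct and follows essentially the same route as the paper's proof: you invert the market-clearing condition to implement a prescribed divergent wage path via $l_0(t)=\delta^*(\bar w)-L_t(\bar w)$, use the truncation $\hat{\zeta}_i(t)\in[\underline{\zeta},\overline{\zeta}]$ to get uniform bounds on $\chi_{i,t}$ and $\kappa_{i,t}$ that decouple the construction from the belief feedback, and then deduce divergence of $\tilde z_i^{(1)},\tilde z_i^{(2)}$ and of the ratio. The only (cosmetic) difference is that you evaluate the limiting ratio by Stolz--Ces\`aro where the paper applies l'H\^opital to a continuous-time interpolation; these are the same argument in discrete and continuous dress.
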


\begin{proof}
Define \(G_t(w)=\delta^{\ast}(w)-l_0(t)-L_t(w)\). By Lemma~\ref{lemma:L_endog} the equation
\(G_t(w)=0\) admits a unique solution \(w^\ast(t)>0\) for any admissible $l_0(t)$. To show that an equilibrium wage path can be made strictly increasing and unbounded, proceed constructively. At each period $t$, consider the function
$$
L^{(0)}_t(w)\;:=\;\delta^{\ast}(w)-L_t(w).
$$
From Lemma \ref{lemma:L_endog}, we have
$$
\lim_{w\to 0^+}L^{(0)}_t(w)=-\infty \quad \mbox{ and }  \quad \lim_{w\to\infty}L^{(0)}_t(w)=+\infty.
$$
Hence, since $L_t(w)$ is continuous and strictly decreasing on $\mathbb{R}_+$, there exists a finite threshold \(W_t>0\) such that \(L^{(0)}_t(w)>0\) for all \(w\ge W_t\). Note that, a-priori, $W_t$ may depend on $t$ because the coefficients $\chi_{i,t}$ and exponents $\kappa_{i,t}$ entering $L_t(w)=\sum_i \chi_{i,t}w^{\kappa_{i,t}}$ can vary across dates; thus the tail behaviour of $L_t(w)$ need not be uniform in $t$. However, since $\hat{\zeta}_i(t) = \max\left\{ \underline{\zeta}, \min\left\{ \overline{\zeta}, \zeta_i(t) \right\} \right\}$, then there exist constants $\bar\chi_i>0$ and $\bar\kappa<0$ such that for every $t$ and every $i\in\{1,\dots,n\}$ we have $\chi_{i,t}\le \bar\chi_i$ and $\kappa_{i,t}\le \bar\kappa<0$. This implies that there exists a single finite $W>0$ such that
$$
L^{(0)}_t(w)=\delta^\ast(w)-L_t(w)>0\quad\text{for all }w\ge W \text{ and all } t.
$$
To see why this is the case, note that, under the stated bounds,
$$
L_t(w)=\sum_{i=1}^n \chi_{i,t}w^{\kappa_{i,t}}
\le \sum_{i=1}^n \bar\chi_i w^{\bar\kappa}=: \bar L(w).
$$
Because $\bar\kappa<0$ we have $\bar L(w)\to 0$ as $w\to\infty$. Since $\delta^\ast(w)=(w/r)^{1/(r-1)}\to\infty$ as $w\to\infty$, there exists $W>0$ large enough that $\delta^\ast(w)-\bar L(w)>0$ for all $w\ge W$. For such $w$ and any $t$,
$$
L^{(0)}_t(w)=\delta^\ast(w)-L_t(w)\ge \delta^\ast(w)-\bar L(w)>0,
$$
which proves the existence of a uniform bound $W$. Now fix any strictly increasing unbounded sequence \(\{\bar w(t)\}_{t\ge1}\) with \(\bar w(t)\ge W\). For each \(t\) define the gouvernamental demand for labor as
$$
\bar{l}_0(t)\;:=\;\delta^{\ast}\big(\bar{w}(t)\big)-L_t\big(\bar{w}(t)\big).
$$
By construction \(\bar{l}_0(t)\ge 0\), and since \(L_t(\cdot)\ge 0\) we also have \(\bar{l}_0(t)\le \delta^{\ast}(\bar w(t))\), so $\bar{l}_0(t)$ is feasible. Moreover,
\[
G_t\big(\bar w(t)\big)=\delta^{\ast}\big(\bar w(t)\big)- \bar{l}_0(t) -L_t\big(\bar{w}(t)\big)=0.
\]
The uniqueness of the root of \(G_t(\cdot)=0\) implies \(w^\ast(t)=\bar w(t)\) for every \(t\). Since \(\{\bar w(t)\}\) was chosen strictly increasing and unbounded, the equilibrium wage path satisfies
\(w^\ast(t)\nearrow\infty\). Using $l_i(t)=\chi_{i,t}\,[w^{\ast}(t)]^{\kappa_{i,t}}$ with $\kappa_{i,t}< -1$, because $w^{\ast}(t)\to\infty$, we have
\[
\ln\!\Big(\xi_{i,t}\,[w^{\ast}(t)]^{\kappa_{i,t}}\Big)
=\ln(\xi_{i,t})+\kappa_{i,t}\,\ln w^{\ast}(t) \to -\infty.
\]
Hence each summand in $\tilde{z}_i^{(2)}(t)$ grows without bound and $\tilde z_i^{(2)}(t)\to +\infty$, while the partial sums $\tilde z_i^{(1)}(t)$ tend to $-\infty$ so that $|\tilde z_i^{(1)}(t)|\to\infty$. For the ratio, apply l'Hopital to the continuous-time interpolation (or discrete step argument as in the baseline proof): for large $w$,
\[
\frac{d}{d\,w}\Big[\ln\!\big(\xi_{i,t} w^{\kappa_{i,t}}\big)\Big]
=\frac{\kappa_{i,t}}{w}\;<0,
\qquad
\frac{d}{d\,w}\Big[\ln\!\big(\xi_{i,t} w^{\kappa_{i,t}}\big)\Big]^2
=\frac{2\kappa_{i,t}}{w}\,\ln\!\big(\xi_{i,t} w^{\kappa_{i,t}}\big).
\]
Therefore,
\[
\frac{d\,\tilde z_i^{(2)}(t)}{d\,\tilde z_i^{(1)}(t)}
=\frac{\frac{2\kappa_{i,t}}{w}\,\ln\!\big(\xi_{i,t} w^{\kappa_{i,t}}\big)}
{\frac{\kappa_{i,t}}{w}}
=2\,\ln\!\big(\xi_{i,t} w^{\kappa_{i,t}}\big)\;\longrightarrow\;-\infty
\quad\text{as } w=w^{\ast}(t)\to\infty.
\]
It follows that
\[
\lim_{t\to\infty}\frac{\zeta_i^{(0)}+\gamma_i \zeta_i^{(*)}\,\tilde z_i^{(2)}(t)}{\gamma_i \sigma_i \,\tilde z_i^{(1)}(t)}
=\pm\infty,
\]
i.e., the ratio diverges in magnitude. 
\end{proof}

Thus, under endogenous labor supply, one can implement a monotone (policy-induced) wage path that produces the same long-memory divergence as in the fixed-supply case presented in Proposition \ref{prop:demography_long_mem}.
}


\newpage

\RRew{
\section{The Impact of Returns-to-Scale Uncertainty on Equilibrium Wage}\label{Section:appendix_E}

To highlight the importance of firms' uncertainty regarding their returns to scale and the role of learning mechanisms, this appendix provides a simple numerical illustration of how such uncertainty affects the equilibrium wage. We focus on the case \(n=1\), with \(\zeta_{i}^{(\ast)} = \alpha = 0.5\), \(m_i = 0\), and \(\sigma_1 = 1\). Figure~\ref{fig:Example1} plots the dynamics of the relative wage-price ratio \(w^*(t)/p_1^*(t)\) over the time horizon \(\{1, \ldots, 100\}\), for three different values of \(\zeta_i(t)\). 

\begin{figure}[H]
    \centering
    \begin{subfigure}[b]{0.33\textwidth}     
        \includegraphics[scale=0.6]{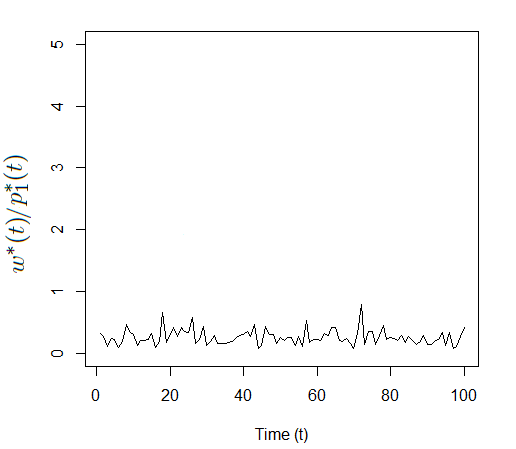}
        \caption{\footnotesize Setting \(v_1(t) = 0.1\).}
    \end{subfigure}~~
    \begin{subfigure}[b]{0.33\textwidth}
        \includegraphics[scale=0.6]{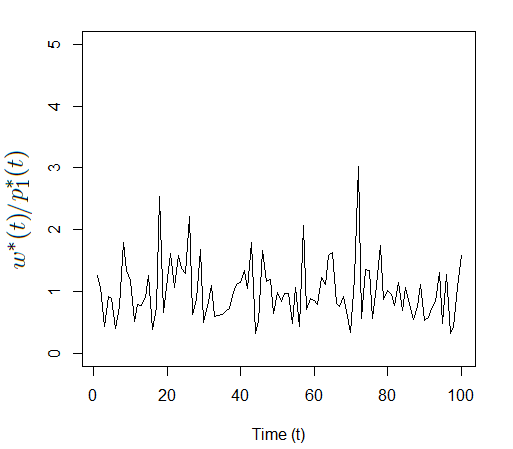}
        \caption{\footnotesize Setting \(v_1(t) = 0.5\).}
    \end{subfigure}~~
    \begin{subfigure}[b]{0.33\textwidth}
        \includegraphics[scale=0.6]{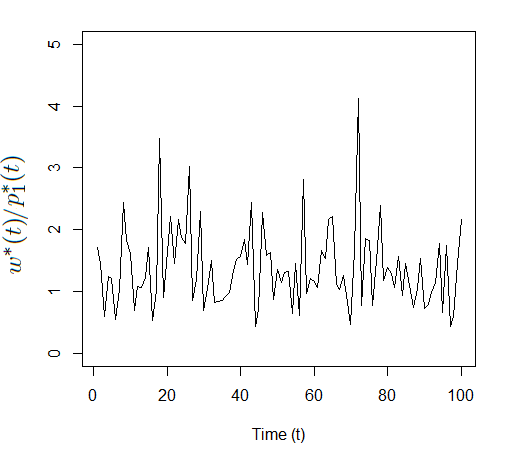}
        \caption{\footnotesize Setting \(v_1(t) = 0.9\).}
    \end{subfigure} 
    \caption{\footnotesize Dynamics of the equilibrium wage-price ratio, \(w^*(t)/p_1^*(t)\).  \label{fig:Example1}}
\end{figure}

The figure shows that firms' beliefs about their returns to scale exert a substantial influence on the dynamics of the equilibrium wage-price ratio. This observation motivates the analysis of learning paths, which investigates how firms update their beliefs and how this process shapes equilibrium outcomes.

}

\end{document}